\newcommand{\eq}{\leftrightarrow}
\newcommand{\imp}{\rightarrow}
\newcommand{\et}{\wedge}
\newcommand{\vel}{\vee}
\newcommand{\Et}{\bigwedge}
\newcommand{\Vel}{\bigvee}
\newcommand{\is}{\exists}
\newcommand{\T}{\top}
\renewcommand{\phi}{\varphi}
\newcommand{\Union}{\bigcup}
\newcommand{\inter}{\cap}
\newcommand{\Inter}{\bigcap}
\newcommand{\M}{\widehat{K}}
\newcommand{\bisim}{{\raisebox{.3ex}[0mm][0mm]{\ensuremath{\medspace \underline{\! \leftrightarrow\!}\medspace}}}}
\newcommand{\Nat}{\mathbb N}
\newcommand{\Naturals}{\Nat}
\newcommand{\np}{\overline{p}}
\newcommand{\nq}{\overline{q}}
\newtheorem{theorem}{Theorem}
\newtheorem{example}[theorem]{Example}
\newtheorem{proposition}[theorem]{Proposition}
\newcommand{\lang}{\mathcal L}
\newcommand{\model}{\mathcal M}
\newcommand{\C}{\mathcal C}
\newcommand{\VV}{\mathcal V}
\newcommand{\FF}{\mathcal F}
\newcommand{\weg}[1]{}
\newcommand{\pre}{\mathsf{pre}}
\newcommand{\post}{\mathsf{post}}
\newcommand{\powerset}{\mathcal{P}}
\newcommand{\RR}{\mathfrak R}
\newcommand{\CD}{\mathit{CD}}
\newcommand{\lbr}{[\![}
\newcommand{\rbr}{]\!]}
\newcommand{\I}[1]{\lbr #1 \rbr} 
\newcommand{\comment}[4][inline]{
  \ifthenelse{\equal{#1}{margin}}
  {
    \marginpar{\scriptsize \hrule\noindent
    {\bf #2:\\} {\em\textcolor{#3}{#4}} \hrule}
  }
  {
    \hrule\noindent {\bf #2:} {\em\textcolor{#3}{#4}} \hrule
  }
}
\newcommand{\jl}{\weg}
\begin{document}

\title{Knowledge and simplicial complexes}
\author{Hans van Ditmarsch\thanks{CNRS, LORIA, Universit\'e de Lorraine, France; {\tt hans.van-ditmarsch@loria.fr}}  \and \'Eric Goubault\thanks{LIX, \'Ecole Polytechnique, CNRS, Institut Polytechnique de Paris, 91191 Palaiseau, France; {\tt goubault@lix.polytechnique.fr}} \and J\'er\'emy Ledent\thanks{University of Strathclyde, Glasgow, Scotland; {\tt jeremy.ledent@strath.ac.uk}}  \and Sergio Rajsbaum\thanks{UNAM, Mexico D.F., Mexico; {\tt sergio.rajsbaum@gmail.com}}}
\date{\today}
\maketitle

\begin{abstract}
Simplicial complexes are a versatile and convenient paradigm on which to build all the tools and techniques of the logic of knowledge, on the assumption that initial epistemic models can be described in a distributed fashion. Thus, we can define: knowledge, belief, bisimulation, the group notions of mutual, distributed and common knowledge, and also dynamics in the shape of simplicial action models. We give a survey on how to interpret all such notions on simplicial complexes, building upon the foundations laid in \cite{goubaultetal:2018}.  
\end{abstract}

\section{Introduction}

Epistemic logic investigates knowledge and belief, and change of knowledge and belief, in multi-agent systems. Since~\cite{hintikka:1962} it has developed into multiple directions. Knowledge change was extensively modelled in temporal epistemic logics~\cite{Alur2002,halpernmoses:1990,Pnueli77} and more recently in dynamic epistemic logics~\cite{hvdetal.del:2007}, where action model logic was particularly successful~\cite{baltagetal:1998}. Modelling asynchrony and concurrency in epistemic logic has been investigated in many works in a temporal or dynamic setting~\cite{dixonetal.handbook:2015,halpernmoses:1990,hareletal:2000,Knight13,panangadenetal:1992,peleg:1987,Pnueli77} and more recently in dynamic epistemic logic~\cite{balbianietal:2019,degremontetal:2011,KnightMS19}.

Combinatorial topology has been used to great effect in distributed computing to model concurrency and asynchrony since the early works~\cite{BiranMZ90,FischerLP85,luoietal:1987} that identified one-dimensional connectivity invariants, and  the higher dimensional topological properties discovered in the early 1990's e.g.~\cite{HerlihyS93,HS99}. A recent overview is~\cite{herlihyetal:2013}. The basic structure in combinatorial topology is the \emph{simplicial complex,} a collection of subsets of a set of vertices closed under containment. The subsets, called
\emph{simplices}, can be vertices, edges, triangles, tetrahedrons, etc.
\begin{center}
\includegraphics[scale=0.08]{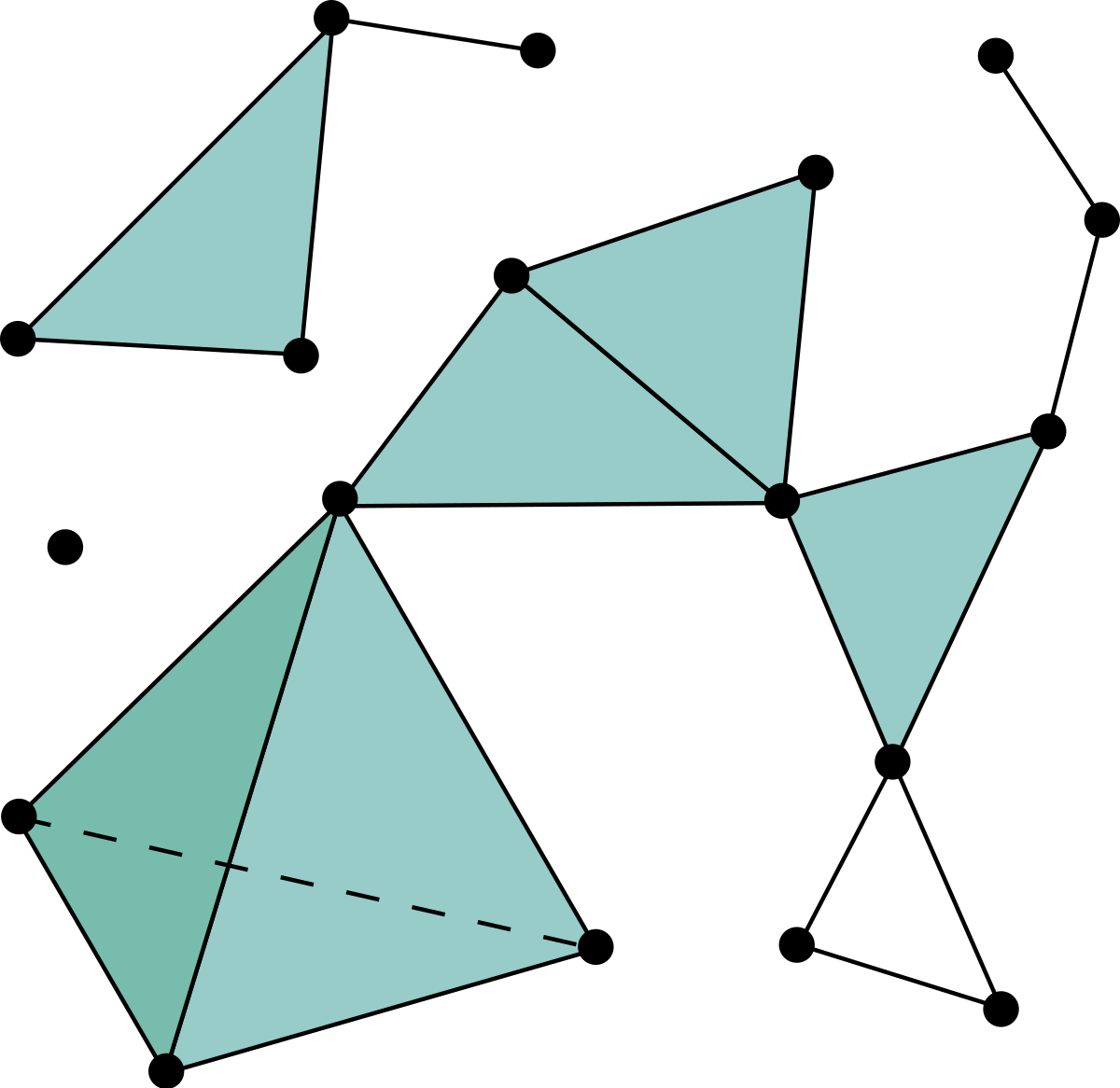}
\label{fig:complexes}
\end{center}
 Geometric manipulations that preserve some of the topology of an initial simplicial complex, such as subdivision, correspond in a natural and transparent way to the evolution of a distributed computation by a set of agents. 
 In turn, the topological invariants preserved, determine the computational power of the distributed computing
 system (defined by the possible failures, asynchrony, and communication media).
 
 Recently, an epistemic logic interpreted on simplicial complexes has been proposed~\cite{goubaultetal:2019,goubaultetal:2018,ledent:2019}. There is a categorical correspondence between the usual Kripke model frames to interpret epistemic logic and models based on simplicial complexes. The epistemic logic of~\cite{goubaultetal:2018} is also a dynamic epistemic logic: the action models of~\cite{baltagetal:1998} correspond to simplicial complexes in the same way as Kripke models, but then with a dynamic interpretation, such that action execution is a product operation on simplicial complexes. In this way,~\cite{goubaultetal:2018} models distributed computing tasks and algorithms in their combinatorial topological incarnations as simplicial complexes. As their knowledge operators can be interpreted for typical asynchronous algorithms (e.g. when processes communicate by writing and reading shared variables), this is a relevant notion of asynchronous knowledge, as it reasons over indistinguishable asynchronous sequences of events, although differently so than \cite{degremontetal:2011}. Predating \cite{goubaultetal:2018} the work of Porter \cite{Porter} already established a relation between multi-agent epistemic models and simplicial complexes by way of the logic of knowledge. In that work the dynamics were based on the runs-and-systems approach of \cite{halpernmoses:1990,faginetal:1995} and not on a dynamic epistemic logic as in~\cite{goubaultetal:2018}. 

In this work we further explore and survey what the logic of knowledge can contribute to the description of simplicial complexes and their dynamic evolution. The goal is to present an introduction to a variety of issues that arise when using simplicial complexes in epistemic logic; each one can be studied in more detail, and many problems remain open.

\paragraph*{Organization}
This is an overview of our contribution. Section~\ref{sec:motivation} introduces informally, and motivates the simplicial complex approach to epistemic logic.
Section~\ref{section.tools} presents the logical and topological tools and techniques that we build upon in this work and also reviews the results of \cite{goubaultetal:2018}. Section~\ref{section.bisimulation} defines bisimulation for simplicial complexes, and gives the obvious adequacy results for the notion. Section~\ref{section.local} presents a local semantics for the logic of knowledge on simplicial complexes. In~\cite{goubaultetal:2018} the designated object for interpreting formulas always is a facet of the simplicial model under consideration. By local semantics we mean that the designated object can be any simplex. Local semantics therefore allow us to check the truth even in a vertex of a complex, that represents that local state of an agent. Section~\ref{section.globallocal} investigates when epistemic models (Kripke models wherein all accessibility relations are equivalence relations) can be transformed into local epistemic models with the same information content. An epistemic model is local if for each variable there is an agent such that the value of that variable is uniform in any of her equivalence classes. Section~\ref{section.group} presents semantics on simplicial complexes for the well-known group epistemic notions of mutual, common, and distributed knowledge. In that section we also explore novel group epistemic notions describing higher-dimensional features of complexes, such as truth on a manifold. Section~\ref{section.belief} defines belief for simplicial complexes. Unlike knowledge of propositions, belief of propositions does not entail that these propositions are true. Beliefs may be false. This is useful to model distributed systems wherein agents may be mistaken about what other agents know or believe, or even about their own local state. Section~\ref{section.actionmodels} defines simplicial action models. Unlike~\cite{goubaultetal:2018}, we define preconditions on vertices (from which preconditions for facets can be derived, and vice versa) and we also model factual change (change of the value of variables). As an example of factual change we then model binary consensus in distributed computing. A short concluding section describes further avenues for exploration.

\section{An informal introduction to epistemic logic on simplicial complexes}\label{sec:motivation}

Epistemic logic, also known as the logic of knowledge, has been interpreted on a wide variety of model classes, including Kripke models where all relations are equivalence relations (a.k.a.\ epistemic models)~\cite{faginetal:1995}, rough sets~\cite{BanerjeeK07}, neighbourhood structures~\cite{chellas:1980,pacuit:2017}, subset space models~\cite{DabrowskiMP96}, (not necessarily discrete) topological spaces~\cite{parikhetal:2007,aybuke.phd:2017}, and the central topic of this paper, simplicial complexes~\cite{goubaultetal:2018,ledent:2019}. Simplicial complexes provide a fascinating new point of view on epistemic logic.
In this section, we give an intuitive explanation of the notion of simplicial model, its relation with the more traditional epistemic models, and the new questions that arise from this new simplicial point of view. Precise definitions of all the concepts discussed here will be given in Section~\ref{section.logicaltools}.

The main object of study in epistemic logic is the modal operator $K_a \varphi$, which stands for the statement ``the agent $a$ knows that the formula $\varphi$ is true''.
The usual Kripke semantics for this operator is based on an \emph{epistemic model}, which consists of a set~$S$ of \emph{global states}, and for each agent, an equivalence relation on~$S$ called the \emph{indistinguishability} relation.
By definition, the formula $K_a \varphi$ is \emph{true} in some global state~$s$ whenever $\phi$ is true in every state~$t$ which is indistinguishable from~$s$ by the agent~$a$.

In the simplicial complex point of view, each global state is represented by a basic geometric shape called a \emph{simplex}.
The dimension of these simplices depends on the total number of agents that we consider: they are \emph{edges} for~$2$ agents, \emph{triangles} for~$3$ agents, tetrahedrons for~$4$ agents, and more generally, $n$-simplices for $n+1$ agents.
The notion of indistinguishability between global states is encoded geometrically by gluing these basic blocks together.
Thus, simplicial models are of a geometric nature.
Indeed, in the field of combinatorial topology, simplicial complexes are the primary representation of a topological space, see for example~\cite{EHbook2010,kozlov}.

\paragraph*{Epistemic models based on local states}
Global states are the main elements of {epistemic models}, which 
have served as the basis of the most widely used semantics for all varieties of modal logic, since the 1960s.
However, in many situations, the focus is on local states. 
This is for example the case in distributed computing, where the central role of topology for multi-agent computation was discovered.
Here, the agents are parallel processes that communicate with each other
via message-passing or shared memory.
The local state of an agent is defined by its local memory,  program counter, etc.
Indeed, we take most of our examples and motivations from this area.

Another motivation for focusing on local states is the relativistic perspective, 
where at any given moment an agent's \emph{local state} exists and is well-defined but not necessarily so the (global) state.
 In fact, there may be more than one state which is compatible
with a set of local states of the agents. Indeed, even talking about absolute real time
is impossible according to relativity theory, so it may not make sense to talk about the state at some real time $t$.
This relativistic approach has been considered since early on in computer science~\cite{lamport:1978}. 

 \paragraph*{Simplicial models}
 We thus move away from defining an epistemic model in terms of global states, to one defined in terms of local states.
Then, a global state~$s$ for~$n+1$ agents can be decomposed as a tuple $(\ell_0, \ldots, \ell_n)$, consisting of one local state for each agent.
In a simplicial model, each of these local states is modeled separately as a \emph{vertex} of a simplicial complex.
With this point of view, a global state corresponds to a set of $n+1$ vertices, that is, an $n$-dimensional \emph{simplex}.
Two global states are indistinguishable by some agent when it has the same local state in both of them. In a simplicial model, this fact is represented by a situation where one vertex (the local state of the agent) belongs to two different simplices (the two global states).

 To give an intuitive understanding of how simplicial complexes are used to model epistemic situations, we present a few examples adapted from~\cite{goubaultetal:2018,herlihyetal:2013,ledent:2019}.
 In all examples, we work with either two or three agents in order to be able to draw low-dimensional pictures.
 The agents are often represented as colours, gray, white, and black.
 The local state of an agent is written inside or besides a vertex of the corresponding colour.
 Global states are represented as edges (for $2$ agents) or triangles (for $3$ agents).

\begin{example}\label{ex:muddy}
For readers familiar with the muddy children puzzle, the picture below represents the initial knowledge for three children $\{ gray, white, black\}$, viewed as a simplicial model.
The local view of a child is written as a vector of three symbols among $\{0,1,\bot\}$, where each symbol corresponds to gray, white, and black, in that order.
The~`$0$' symbol means no mud, `$1$' means muddy, and `$\bot$' means that
the child cannot see her own forehead.
\begin{center}
\includegraphics[scale=0.25]{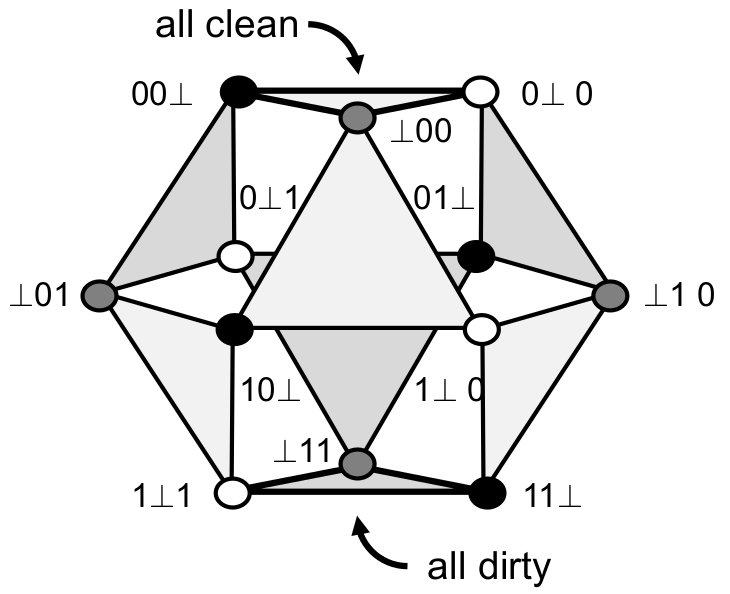}
\label{fig:muddyc}
\end{center}
For instance, the black vertex labeled by $00\bot$ corresponds to the local state of the third child (``$black$''), which is able to see that the other two children are not muddy. Since she cannot see her own forehead, this vertex belongs to two possible global states: one where all children are clean, and one where only the third child is muddy.
On the picture, global states are represented by triangles, and indeed the black vertex labeled by $00\bot$ belongs to two of those triangles.
\end{example}

\begin{example}\label{ex:binInputs}
Each agent gets as input a binary value $0$ or $1$, but does not know which value has been received by the other agents. So, every possible combination of $0$'s and $1$'s is a possible state. In contrast with the previous example, here each agent knows only its own value and not the other agents' values.
The figures below depict this situation for~2 and~3 agents, as epistemic models and as simplicial models.

In the epistemic models, the agents are called $g,w,b$, and the global state is represented as a vector of values, e.g., $101$,
representing the values chosen by the agents $g,w,b$ (in that order).
In the 3-agents case, the labels of the dotted edges have been omitted to avoid
overloading the picture, as well as other edges that can be deduced by transitivity.

In the simplicial model, agents are represented as colours (gray, white and black). The local state is represented as a single value in a vertex, e.g.,
``$1$'' in a gray vertex means that agent $g$ has been given input value~$1$.
The global states correspond to edges in the 2-agents case, and triangles in the 3-agents case.

\medskip
\noindent
\begin{minipage}{0.4\textwidth}
\begin{center}
\begin{tikzpicture}[auto,line/.style={draw,thick,-latex',shorten >=2pt},cloudgrey/.style={draw=black,thick,circle,fill={rgb:black,1;white,3},minimum height=1em},cloud/.style={draw=black,thick,circle,fill=white,minimum height=1em}]
\matrix[column sep=2mm,row sep=2mm]
{
& \node (alpha'') {$01$}; & & \node [cloudgrey] (u) {$0$}; & \node[circle] { }; & \node [cloud] (v) {$1$}; \\
\node (beta'') {$00$}; & & \node (gamma'') {$11$}; & & & \\
& \node (delta'') {$10$}; & & \node [cloud] (w) {$0$}; & & \node [cloudgrey] (z) {$1$}; \\
};
\draw (alpha'') -- node[above left,xshift=2pt,yshift=-2pt] {$\scriptstyle g$} (beta'');
\draw (beta'') -- node[below left,xshift=2pt,yshift=2pt] {$\scriptstyle w$} (delta'');
\draw (gamma'') -- node[xshift=-2pt,yshift=2pt] {$\scriptstyle g$} (delta'');
\draw (alpha'') -- node[xshift=-2pt,yshift=-2pt] {$\scriptstyle w$} (gamma'');
\draw[semithick] (u) -- (v) -- (z) -- (w) -- (u);
\end{tikzpicture}
\end{center}
\end{minipage}
\hfill
\begin{minipage}{0.5\textwidth}
\begin{center}
\begin{tikzpicture}[auto,rotate=45,font=\scriptsize]
\node (a) at (0,0) {$111$};
\node (b) at (1.5,0) {$110$};
\node (c) at (0,1.5) {$011$};
\node (d) at (1.5,1.5) {$010$};
\node (e) at (0.6,0.6) {$101$};
\node (f) at (2.1,0.6) {$100$};
\node (g) at (0.6,2.1) {$001$};
\node (h) at (2.1,2.1) {$000$};
\path[inner sep = 0pt]
      (a) edge node[below right] {$gw$} (b)
          edge node {$wb$} (c)
          edge[dotted] (e) 
      (d) edge node[above left,pos=0.4,xshift=2pt] {$gw$} (c)
          edge node[pos=0.35] {$wb$} (b)
          edge node[right,xshift=1pt,pos=0.3] {$gb$} (h)
      (f) edge node[above right] {$wb$} (h)
          edge node[xshift=1pt] {$gb$} (b)
          edge[dotted] (e) 
      (g) edge node {$gw$} (h)
          edge node[left,xshift=-1pt] {$gb$} (c)
          edge[dotted] (e); 
\end{tikzpicture}
\raisebox{1em}{\includegraphics[scale=0.25]{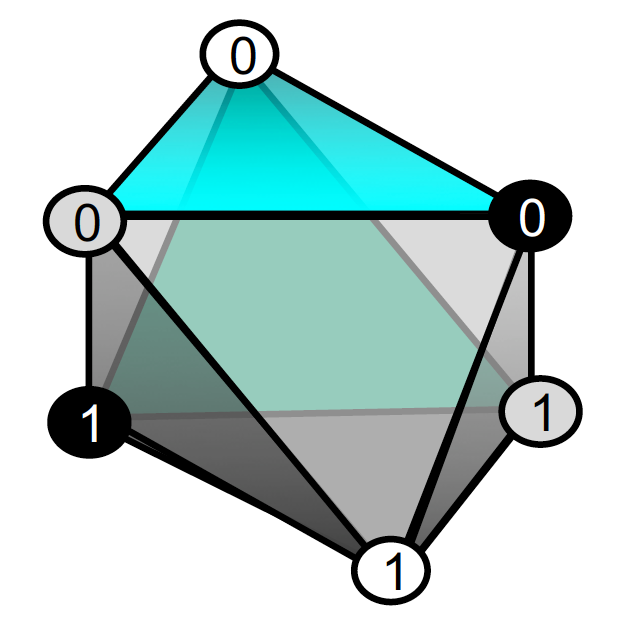}}
\end{center}
\end{minipage}

\medskip
\noindent
Notice that the simplicial complex on the right is an octahedron, that is, a triangulated sphere. More generally, it is known that the binary input simplicial complex for $n+1$ agents is an $n$-dimensional sphere~\cite{herlihyetal:2013}. 
\end{example}

\begin{example}\label{ex:basicDuality}

The toy example below illustrates the equivalence between epistemic models and simplicial models.
The picture shows an epistemic model (left) and its associated simplicial model (right). No variable labelings are shown.
The three agents, named $g, w, b$, are represented as
colours gray, white and black on the vertices of the simplicial complex.
The three global states of the
epistemic model correspond to the three triangles (i.e., $2$-dimensional facets)
of the simplicial complex. The two states indistinguishable by agent~$b$, are glued along their black vertex; while the two states indistinguishable by agents $g$ and $w$ are glued along the gray-and-white edge. 
The two functors $\kappa$ and $\sigma$ translating back and forth between the two kinds of models are defined formally in Section~\ref{sec:epistLoSimModels}.

\begin{center}
\begin{tikzpicture}[auto,dot/.style={draw,circle,fill=black,inner sep=0pt,minimum size=3pt},cloudgrey/.style={draw=black,thick,circle,fill={rgb:black,1;white,2},inner sep=0pt,minimum size=8pt},cloud/.style={draw=black,thick,circle,fill=white,inner sep=0pt,minimum size=8pt}, cloudblack/.style={draw=black,thick,circle,fill=black,inner sep=0pt,minimum size=8pt}]
\node[dot] (p) at (-1,0) {};
\node[dot] (q) at (0,0) {};
\node[dot] (r) at (1,0) {};
\draw (p) -- node[above] {$g,w$} (q);
\draw (q) -- node[above] {$b$} (r);
 
\path[->, bend right, >=stealth] (3.3,0.2)  edge node[above] {$\kappa$} (1.7,0.2);
\path[->, bend right, >=stealth] (1.7,-0.2)  edge node[below] {$\sigma$} (3.3,-0.2);

\draw[thick, draw=black, fill=blue, fill opacity=0.15]
  (4,0) -- (5,-0.577) -- (5,0.577) -- cycle;
\draw[thick, draw=black, fill=blue, fill opacity=0.15]
  (5,-0.577) -- (5,0.577) -- (6,0) -- cycle;
\draw[thick, draw=black, fill=blue, fill opacity=0.15]
  (6,0) -- (7,-0.577) -- (7,0.577) -- cycle;
\node[cloudblack] (b1) at (4,0) {};
\node[cloudgrey] (g1) at (5,-0.577) {};
\node[cloud] (w1) at (5,0.577) {};
\node[cloudblack] (b2) at (6,0) {};
\node[cloudgrey] (g2) at (7,-0.577) {};
\node[cloud] (w2) at (7,0.577) {};
\end{tikzpicture}
\end{center}
\end{example}

\jl{The order of colors in Examples \ref{ex:subdivisions} and \ref{ex:fig-synch} is not coherent with the other examples; I might fix it later}

\begin{example}[Asynchronous computation and subdivisions]
\label{ex:subdivisions}
We now give an example that illustrates how a simplicial model can change after
communication. In Section~\ref{section.actionmodels} we explore change of information using action models.
We describe here an important geometric operation on simplicial complexes called the \emph{chromatic subdivision}.
We describe it for 3 agents, but it generalizes to any number of agents. 
It appears in various distributed computing situations, where agents communicate by
shared memory or message passing~\cite{herlihyetal:2013}.

Consider three agents, represented with colours black, gray, white, and suppose that the black agent may
receive as input value either $0$ or $1$, while the two other agents receive input $0$.
 The two triangles on the left represent two facets of the input model, with input values $000$ (green) and $100$ (yellow). Now, suppose they communicate to each other their inputs, but in the following unreliable way.
After communication, it is possible that an agent missed hearing the input value 
 of one or two of the other agents. There are 4 possible patterns of communication, $x,y,z,w$,
 illustrated in the figure. Each pattern is defined by a table, with one row per agent.
 An agent $x$ receives the value from  agent $y$ if and only if they are both on the same row,
 or $x$ is in a row below that of $y$.
 After communication, we have the simplicial model on the right. Each vertex is labeled with
a vector indicating which values it received.
\begin{center}
\includegraphics[scale=0.5]{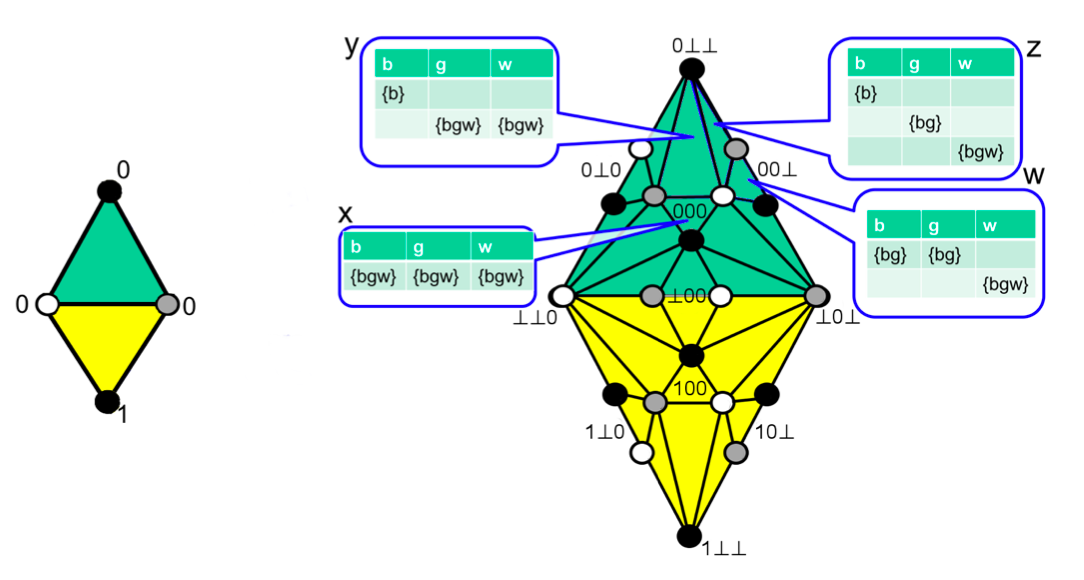} 
\end{center}
The triangle $x$ represents the situation where all agents heard their inputs (all $0$ in this case) from each other,
and intersects in an edge with the triangle $y$, where the black agent did not hear from the other two.
Triangle $x$ intersects only in a vertex with triangle $w$, where both the gray and the black heard from each other, but not from white.
Triangle $x$ also intersects in only one vertex with triangle $z$, where the black heard from no-one, gray heard only from black, and white heard from everybody.
\end{example}

\begin{example}[Synchronous computation]
\label{ex:fig-synch}
Consider again black, white, and gray agents. They get inputs $012$, respectively.
 Suppose they communicate to each other their inputs in a reliable way, but now, at most one of the agents may crash.
Thus, when they communicate, it is possible that an agent missed hearing the input value of another agent,
and in that case, it knows that this agent crashed. 
  After communication, we have the simplicial model below.
  The center triangle represents the state where no one crashed, and each agent learned the inputs of every agent.
 The other edges represent the cases where one agent crashed, and that is why the dimension goes
 down. Thus, such global states are represented as edges instead of triangles.
 The local state of the crashed agent is no longer relevant.
\begin{center}
\includegraphics[scale=0.25]{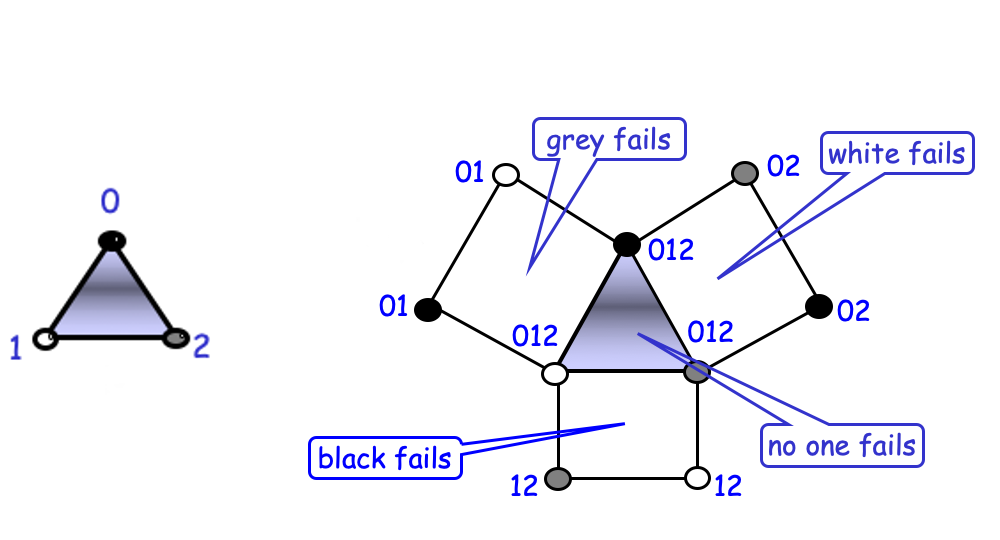}
\end{center}
This is an example of synchronous computation, which has been thoroughly studied in distributed computing,
see e.g.~\cite[chapter 13.5.2]{herlihyetal:2013}.
\end{example}

The correspondence between epistemic models and simplicial models exhibited in~\cite{goubaultetal:2018} (and depicted in Example~\ref{ex:basicDuality}) can be extended to model various epistemic notions.
We conclude this motivating section with an overview of novel ways presented in this contribution in which epistemic logic can be relevant to describe simplicial complexes, with pointers to further sections giving details.
\begin{itemize}
\item
{\bf Locality.} 
A crucial property that allows one to translate an epistemic model into a simplicial one is called \emph{locality} (see Section~\ref{section.logicaltools}). Intuitively, an epistemic model is local when each propositional variable talks only about the local state of one of the agents.
This condition can be lifted in the case of \emph{finite} epistemic models as demonstrated in~\cite{ledent:2019}.
In Section~\ref{section.globallocal} we investigate the consequences of this solution for the information content of models, and what happens in the infinite case.
\item
{\bf Dynamics.}
A simplicial complex version of the \emph{action models} found in dynamic epistemic logic was defined in~\cite{goubaultetal:2018}.
In Section~\ref{section.actionmodels}, we build on this idea in order to include the notion of \emph{factual change}, which allows the values of propositional variables to be modified by the action model.
\item 
{\bf Bisimulation.} A central topic in modal logics is that of bisimulation between epistemic models. Bisimulation of simplicial models has been proposed  in~\cite{GoubaultLLR19,ledent:2019}.
We further discuss simplicial bisimulation in Section~\ref{section.bisimulation}, and relate it with simplicial maps and covering spaces.
\item
{\bf Maps.} 
In topology maps from one space to another are continuous functions, which in the case of simplicial complexes
are \emph{simplicial maps}.
%
The simulations and bisimulations introduced in Section~\ref{section.bisimulation} are natural generalizations of simplicial maps, on condition that they are also value preserving. Then, in Section~\ref{section.belief} we model the notion of belief (knowledge that may be incorrect) on simplicial complexes, by way of so-called {\em belief functions} that induce simplicial maps in an obvious way, however in that case they need not be value preserving, as beliefs may be incorrect.
\item
{\bf Topology.}
The most evident new feature of simplicial models is their geometric nature. Hence, one can analyze the topological structure of the underlying space, such as its high-dimensional connectivity (e.g.\ homotopy groups, representing its ``holes'').
Examples~\ref{ex:muddy} and~\ref{ex:binInputs} make the geometric
space evident that would be hard to see in an epistemic model.

The basic knowledge operator $K_a\phi$ is intrinsically one-dimensional. However,
higher dimensional structure in complexes can be described by, for example, group epistemic operators, such as the common distributed knowledge presented in Section~\ref{section.group} that can check truth on manifolds.
\item
{\bf Local semantics.} 
In epistemic models the typical unit of interpretation is the (global) state, as in the semantics of epistemic logic on simplicial complexes presented in~\cite{goubaultetal:2018,ledent:2019}. However, there are far more relevant units of interpretation on simplicial complexes, as we should not be constrained to determine truth in facets only but should want to determine truth in simplices of any dimension. Already in Example~\ref{ex:basicDuality}, we see that there are 3 states, corresponding to three facets, while the total number of simplices is 17. Local interpretation on simplices is  addressed in Section~\ref{section.local}.
\item
{\bf Hierarchical structure.}
In distributed computing simplicial models often have a hierarchical structure, meaning that they are composed of submodels
representing the behavior of systems with less agents, and hence of lower dimension.
In Example~\ref{ex:subdivisions}, notice that the simplicial model on the right is a subdivision of the initial simplicial model, and
furthermore, the green sub-complex is a subdivision of the initial green triangle, while the yellow sub-complex
is a subdivision of the initial yellow triangle. Notice that  whenever the white and gray agents do not hear from the black
one, they cannot tell what its input was.
Thus, the intersection $\C'$ of both the green and the yellow complexes consists of three edges (and their vertices).
This complex $\C'$  is a subdivision of the initial edge, which is the intersection of the green and the yellow triangle.
Hence, we can analyze a subcomplex  $\C'$ of $\C$, of dimension $1$, with an interesting semantic meaning: it represents 
a subsystem consisting of only two agents, white and gray.
One can ask what the white and gray agents know on vertices or edges  of $\C'$, as if the larger system of three agents did not exist.

This sort of semantics for edges instead of triangles is addressed in full generality in Section 6.1, as language-restricted local semantics.
\item
{\bf Variable number of agents.} 
In distributed computing, whenever crashes are detectable, facets of different size may occur (see Example~\ref{ex:fig-synch}). In epistemic logic such phenomena are modelled in logics of awareness and knowledge, and in particular the kind of unawareness known as {\em unawareness of propositional variables}, as in \cite{faginetal:1988,hvdetal.ieee:2009}. A thorough treatment of this topic is deferred to future research. Another direction for future investigation are adjunctions between epistemic frames and simplicial complexes, that generalize the (equivalent) correspondence between frames and complexes of \cite{goubaultetal:2018}.
\end{itemize}
For the sake of completeness, we also mention below some of the important features of simplicial models, even though we do not address them in the rest of the paper.
They are not specific to the epistemic logic perspective, and have been thoroughly studied in distributed computing. See~\cite{herlihyetal:2013} for more detail.
\begin{itemize}
\item
{\bf Topological invariants.}
As it turns out, there are topological
invariants that are maintained starting from an initial simplicial model,
after the agents communicate with each other.
In the most basic case, illustrated in Example~\ref{ex:subdivisions}, the initial simplical model is subdivided, but in other situations where the communication is more reliable ``holes'' may be
introduced, as in  Example~\ref{ex:fig-synch}. Many other examples are described in~\cite{herlihyetal:2013}, and the notion of ``holes'' is formalized in terms of homotopy groups.
\item
{\bf Power of a computational model.}
The topological invariants preserved by the distributed model (number of failures, communication media, etc) in turn
determine what the agents can distributively compute.
After all, whatever computation an agent is performing, the outcome is a function of
its local state. Thus, decisions induce a simplicial map from the  simplicial complex representing what the agents know about the inputs
after communication, to an output simplicial complex model representing what they are expected to compute based on that knowledge. 
This approach is developed in~\cite{goubaultetal:2018}.

\item
{\bf Specifications.}
In~\cite{goubaultetal:2018} it is explained how a simplicial model is used as knowledge that is supposed to be gained by a set of agents
communicating with each other, via a simplicial map from the simplicial model at the end of the agent's communications.
\end{itemize}

\section{Logical and topological tools} \label{section.tools}

\subsection{Logical tools} \label{section.logicaltools}

Given are a set $A$ of $n+1$ \emph{agents} $a_0,\dots,a_n$ (or $a,b,\dots$) and a countable set $P$ of \emph{(global) variables} (or \emph{variables}) $p, q, p', q', \dots$ (possibly indexed). 

\paragraph*{Language}

The {\em language of epistemic logic} $\lang_{K}(A,P)$ is defined as \[ \phi ::= p \mid \neg\phi \mid \phi\et\phi \mid K_a \phi \] where $p \in P$ and $a \in A$. We will write $\lang_{K}(P)$ if $A$ is clear from the context, and $\lang_{K}$ if $A$ and $P$ are clear from the context. For $\neg p$ we may write $\np$. Expression $K_a \phi$ stands for `agent $a$ knows (that) $\phi$.' The fragment without inductive construct $K_a\phi$ is the {\em language of propositional logic} (also known as the Booleans) denoted $\lang_\emptyset(A,P)$. For $\neg K_a \neg \phi$ we may write $\M_a \phi$, for `agent $a$ considers it possible that $\phi$.'

\paragraph*{Epistemic models}

\emph{Epistemic frames} are pairs $\model = (S,\sim)$ where $S$ is the domain of \emph{(global) states}, and $\sim$ is a function from the set of agents $A$ to (binary) \emph{accessibility relations} on $S$, that are required to be equivalence relations. For $\sim\!(a)$ we write $\sim_a$. \emph{Epistemic models} are triples $\model = (S,\sim,L)$, where $(S,\sim)$ is an epistemic frame and where \emph{valuation} $L$ is a function from $S$ to $\powerset(P)$. A pair $(\model,s)$ where $s \in S$ is a \emph{pointed} epistemic model. We also distinguish the \emph{multi-pointed} epistemic model $(\model,S')$ where $S' \subseteq S$. For $(s,t) \in {\sim_a}$ we write $s \sim_a t$, and for $\{ t \mid s \sim_a t \}$ we write $[s]_a$: this is an equivalence class of the relation $\sim_a$. 

We now introduce terminology for epistemic models for distributed systems.

Given an epistemic model $\model = (S,\sim,L)$, a variable $p \in P$ is \emph{local for agent $a \in A$} iff for all $s,t \in S$, if $s \sim_a t$ then $p \in L(s)$ iff $p \in L(t)$. Variable $p$ is \emph{local} iff it is local for some agent $a$, and a model $\model$ is local iff all variables are local. We write $P_a$ for the set of local variables for agent $a$, i.e., $P_a = \{ p \in P \mid \text{for all } s,t \in S, \text{ if } s \sim_a t, \text{ then } p \in L(s) \text{ iff } p \in L(t) \}$. If an epistemic model is local, then $\Union P_a = P$. Without loss of generality we may assume that all members of $\{P_{a_0},\dots,P_{a_n}\}$ are mutually disjoint, and that members of each $P_a$ are subindexed with the agent name as in: $p_a, p'_a, q_a, \dots$

We can also approach locality from the other direction. Let mutually disjoint sets $S_{a_0} \dots S_{a_n}$ of \emph{local states} and mutually disjoint sets $P_{a_0},\dots,P_{a_n}$ of \emph{local variables} (for resp.\ agents $a_0$, \dots, $a_n$) be given. Let for each $a \in A$, $L_a: S_a \imp \powerset(P_a)$ be a valuation for the local states of agent $a$. A \emph{distributed model} is an epistemic model $(S,\sim,L)$ such that:
\begin{itemize}
\item $S \subseteq S_{a_0} \times \dots \times S_{a_n}$;
\item for all $a \in A$ and $s,t \in S$: $s \sim_a t$ iff $s_a = t_a$; 
\item for all $a \in A$ and $s \in S$: $L(s) = \{ p \in P \mid \is a \in A: p \in L_a(s_a) \}$.
\end{itemize}
In other words, each global state $s \in S$ is an $(n+1)$-tuple $s = (s_0,\dots,s_n)$ of local states for the agents $0,\dots,n$ respectively. 

An epistemic frame $(S,\sim)$ is \emph{proper} if $\inter_{a \in A} \sim_a = Id$, where the identity relation $Id : = \{(s,s) \mid s \in S\}$. An epistemic model is proper if its underlying frame is proper. 

Clearly, a distributed model is local proper epistemic model, and we can also see a local proper epistemic model as a distributed model: if $\model = (S,\sim,L)$ is proper, then the model wherein domain $S$ is replaced by the domain $\{ ([s]_{a_0}, \dots, [s]_{a_n}) \mid s \in S\}$ (and wherein the valuation can be relativized to $a$ in the obvious way) is a distributed model. The definition is adequate, as for all $s \in S$ it holds that $[s]_{a_0} \inter \dots \inter [s]_{a_n} = \{s\}$.

An epistemic model is \emph{factual} if all global states have different valuations, i.e., if for all $s,t \in S$ there is a $p \in P$ such that $p \in L(s)$ and $p \notin L(t)$, or $p \notin L(s)$ and $p \in L(t)$.

\paragraph*{Semantics on epistemic models}

The interpretation of a formula $\phi\in \lang_K$ in a global state of a given pointed model $(\model,s)$ is by induction on the structure of $\phi$. The expression ``$\model,s\models \phi$'' stands for ``in global state $s$ of epistemic model $\model$ it holds that $\phi$.''

\[ \begin{array}{lcl}
\model,s \models p & \text{iff} & p \in L(s) \\
\model,s \models \neg\phi & \text{iff} & \model,s \not\models \phi \\
\model,s \models \phi\et\psi & \text{iff} & \model,s \models \phi \text{ and } \model,s \models \psi \\
\model,s \models K_a \phi & \text{iff} & \model,t \models \phi \text{ for all } t \text{ with } s \sim_a t
\end{array} \]
Formula $\phi$ is {\em valid} iff for all $(\model,s)$, $\model,s \models \phi$. Given $(\model,s)$ and $(\model',s')$, by $(\model,s) \equiv (\model',s')$, for 	``$(\model,s)$ and $(\model',s')$ are {\em modally equivalent},'' we mean that for all $\phi \in \lang_K$: $\model,s \models \phi$ iff $\model',s' \models \phi$. Informally, this means that these pointed models contain the same information. We let $\I{\phi}_\model$ stand for $\{ s \in S \mid \model,s \models \phi \}$. This set is called the {\em denotation} of $\phi$ in $\model$.

\medskip

Standard references for epistemic logic are \cite{meyeretal:1995,faginetal:1995,hvdetal.handbook:2015}. The typical example of proper factual local epistemic models are interpreted systems \cite{faginetal:1995}.

\subsection{Topological tools} \label{section.topologicaltools}

We continue by introducing simplicial complexes and other topological tools. A reference on algebraic topology is~\cite{Hatcher}, on a more combinatorial approach there are~\cite{EHbook2010,kozlov}, and on the
use of combinatorial topology in distributed computing it is~\cite{herlihyetal:2013}. Subsequently and in the next subsection we review results from \cite{goubaultetal:2018}, that are also reported in \cite{ledent:2019}. 

\paragraph*{Simplicial complexes}

Given a set of \emph{vertices} $V$ (sometimes also called {\em local states}; singular form {\em vertex}), a \emph{(simplicial) complex} $C$ is a set of non-empty finite subsets of $V$, called \emph{simplices} (singular form {\em simplex}), that is closed under subsets (such that for all $X \in C$, $Y \subseteq X$ implies $Y \in C$), and that contains all singleton subsets of $V$. If $Y \subseteq X$ we say that $Y$ is a \emph{face} of $X$. A maximal simplex in $C$ is a \emph{facet}. The facets of a complex $C$ are denoted as $\FF(C)$, and the vertices of a complex $C$ are denoted as $\VV(C)$. The dimension of a simplex $X$ is $|X|-1$, e.g., vertices are of dimension $0$, while edges are of dimension $1$. The dimension of a complex is the maximal dimension of its facets. A simplicial complex is \emph{pure} if all facets have the same dimension. A {\em manifold} is a pure simplicial complex $C$ of dimension $n$ such that: (i) for any $X,Y \in \FF(C)$ there are facets $X = X_0, \dots, X_m = Y$ such that for all $i < m$ the dimension of $X_i \inter X_{i+i}$ is $n-1$ (this is a `path' from $X$ to $Y$ in $C$), and (ii) any simplex $Z \in C$ of dimension $n-1$ is a face of one or two facets of $C$ (those that are faces of only one facet form the {\em boundary} of $C$).

Complex $D$ is a \emph{subcomplex} of complex $C$ if $D \subseteq C$. A subcomplex of interest is the \emph{$m$-skeleton} $D$ of a pure $n$-dimensional complex $C$, i.e, the maximal subcomplex $D$ of $C$ of dimension $m$.

We decorate the vertices of simplicial complexes with agent's names, that we refer to as  \emph{colours}. 
A \emph{chromatic map} from the vertices of $\C$ to $A$ assigning colours to vertices is denoted $\chi$. 
Thus, $\chi(v)\in A$ denotes that the local state $v$ belongs to agent $a$.
 Different vertices of the same simplex should be assigned different colours. 
 The vertex of a simplex $X$ coloured with $a$ is denoted $X_a$. 
 A pair consisting of a simplicial complex and a colouring map $\chi$ is a \emph{chromatic simplicial complex}. 
 
  From now on, all simplicial complexes will be chromatic pure simplicial complexes.

\paragraph*{Simplicial models}
We now also decorate the vertices of simplicial complexes  with \emph{local variables} $p_a \in P_a$ for $a \in A$, where we recall that $\Union_{a \in A} P_a = P$ (all sets of local variables are assumed to be disjoint). 
While chromatic maps are denoted $\chi,\chi',\dots$,
 \emph{valuations} (valuation maps) assigning sets of local variables to vertices are denoted $\ell, \ell', \dots$
 For any $X \in C$, $\ell(X)$ stands for $\Union_{v \in X} \ell(v)$.
 A \emph{simplicial model} $\C$ is a triple $(C,\chi,\ell)$ where $C$ is a simplicial complex.

\paragraph*{Simplicial maps}
Given simplicial complexes $C$ and $C'$, a \emph{simplicial map} ({\em simplicial function}) is a function $f: \VV(C) \imp \VV(C')$ that preserves simplices (such that for all $X \in C$, $f(X) := \{ f(v) \mid v \in X \} \in C'$). We also let $f(C) := \{ f(X) \mid X \in C \}$. A simplicial map is {\em rigid} if it is dimension preserving (if for all $X \in C$, $|f(X)| = |X|$). A \emph{chromatic simplicial map} is a colour preserving simplicial map, where {\em colour preserving} means that for all $v \in \VV(C)$, $\chi(f(v)) = \chi(v)$. We note that it is therefore also rigid.

Let now $\C = (C,\chi,\ell)$ and $\C' = (C',\chi',\ell')$ be simplicial models. A chromatic simplicial map $f$ between the complexes $C$ and $C'$ is also called a simplicial map between the models $\C$ and $\C'$. The simplicial map $f$ is {\em value preserving} if for all $v \in \VV(C)$, $\ell(f(v)) = \ell(v)$. If $f$ is not only colour preserving but also value preserving, and its inverse $f^{-1}$ as well, then $\C$ and $\C'$ are \emph{isomorphic}, notation ${\C \simeq \C'}$.  
It is customary to define isomorphy between simplicial complexes instead of between simplicial models. However, we will later define a \emph{bisimulation} as a relation between simplicial models, such that isomorphy implies bisimilarity but not necessarily vice versa. We therefore defined isomorphy between simplicial models. 


\subsection{Epistemic logic on simplicial models}\label{sec:epistLoSimModels}

\paragraph*{Semantics on simplicial models}
The interpretation of a formula $\phi\in \lang_K(A,P)$ in a facet $X \in \FF(C)$ of a given simplicial model $\C = (C,\chi, \ell)$ is by induction on the structure of $\phi$. Merely the clause for knowledge looks different than that on epistemic models.
\[ \begin{array}{lcl}
\C,X \models p_a & \text{iff} & p_a \in \ell(X) \\
\C,X \models \neg\phi & \text{iff} & \C,X \not\models \phi \\
\C,X \models \phi\et\psi & \text{iff} & \C,X \models \phi \text{ and } \C,X \models \psi \\
\C,X \models K_a \phi & \text{iff} & \C,Y \models \phi \text{ for all } Y \in \FF(C) \text{ with } a \in \chi(X \inter Y)
\end{array} \]
Validity and modal equivalence are also defined similarly. Formula $\phi$ is {\em valid} iff for all $(\C,X)$, $\C,X \models \phi$; and given $(\C,X)$ and $(\C',X')$, by $(\C,X) \equiv (\C',X')$ we mean that for all $\phi \in \lang_K(A,P)$: $\C,X \models \phi$ iff $\C',X' \models \phi$. 

\paragraph*{Correspondence between simplicial models and epistemic models}
Given agents $A$ and variables $P$, let $\mathcal K$ be the class of local proper epistemic models and let $\mathcal S$ be the class of simplicial models. In \cite{goubaultetal:2018}, simplicial models are shown to correspond to local proper epistemic models by showing that the underlying proper epistemic frames and simplicial complexes correspond (categorically) via functions (functors) $\sigma: \mathcal K \imp \mathcal S$ ($\sigma$ for \emph{S}implicial) and $\kappa: \mathcal S \imp \mathcal K$ ($\kappa$ for \emph{K}ripke), and then extending $\sigma$ and $\kappa$ in the obvious way, such that they are also valuation preserving, to models. The correspondence uses the observed relation between local proper epistemic models and distributed models. These are the details.

Given a local proper epistemic model $\model = (S,\sim,L)$, we define $\sigma(\model) = (C,\chi,\ell)$ as follows. Its vertices are $\VV(\sigma(\model)) = \{ [s]_a \mid a \in A, s \in S\}$, where $\chi([s]_a)=a$. Its facets  are $\FF(\sigma(\model)) = \{ \{[s]_{a_0}, \dots, [s]_{a_n}\} \mid s \in S\}$. For such a facet $\{[s]_{a_0}, \dots, [s]_{a_n}\}$ we write $\sigma(s)$. Then, $[s]_a \in \sigma(s) \inter \sigma(t)$ iff $s \sim_a t$ (i.e., such that $[s]_a = [t]_a$), and for all $p_a \in P$, $p_a \in \ell([s]_a)$ iff $p_a \in L(s)$. We note that $\sigma(\model)$ has dimension $n$.

Given a simplicial model $\C = (C,\chi,\ell)$, we define  an epistemic $\kappa(\model) = (S,\sim,L)$ as follows. Its domain $S$ consists of states $\kappa(X)$ for all $X \in \FF(C)$. We define $\kappa(X) \sim_a \kappa(Y)$ iff $a \in X \inter Y$, and $p_a \in L(\kappa(X))$ iff $p_a \in \ell(X)$. We note that $\kappa(\model)$ is local and proper.

One can now show that:
\begin{proposition}[{\cite{goubaultetal:2018}}] 
\label{prop.corr}
For all $\model \in \mathcal K$, $\kappa(\sigma(\model)) \simeq \model$, and for all $\C \in \mathcal S$, $\sigma(\kappa(\C)) \simeq \C$.
\end{proposition}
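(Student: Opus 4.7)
The plan is to handle the two isomorphisms separately, as they are dual in structure but rely on slightly different properties (properness for one, purity plus chromaticity for the other). Throughout, the key observation I will exploit is that the constructions $\sigma$ and $\kappa$ are inverses of each other in the obvious pointwise sense, so both verifications amount to writing down the candidate isomorphism and checking that colour, accessibility and valuation all match up.

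For the first claim, fix a local proper $\model = (S, \sim, L) \in \mathcal K$ and consider the map $\eta : S \to S'$ (where $\kappa(\sigma(\model)) = (S',\sim',L')$) sending $s$ to $\kappa(\sigma(s))$. I would first verify that $\eta$ is a bijection: surjectivity is immediate from the definitions of $\sigma$ and $\kappa$, while injectivity uses properness, since if $\kappa(\sigma(s)) = \kappa(\sigma(t))$ then $\sigma(s) = \sigma(t)$, so $[s]_a = [t]_a$ for every $a \in A$, and properness gives $\{s\} = \Inter_{a\in A} [s]_a = \Inter_{a\in A} [t]_a = \{t\}$. Then I would check the two structural conditions: $s \sim_a t$ iff $[s]_a = [t]_a$ iff $[s]_a \in \sigma(s) \inter \sigma(t)$ iff $\kappa(\sigma(s)) \sim'_a \kappa(\sigma(t))$; and, using locality of $\model$, for any $p \in P$ there is a unique $a \in A$ with $p \in P_a$, and $p \in L(s)$ iff $p \in \ell([s]_a)$ iff $p \in L'(\kappa(\sigma(s)))$.

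For the second claim, fix $\C = (C,\chi,\ell) \in \mathcal S$ and write $\sigma(\kappa(\C)) = (C',\chi',\ell')$. The candidate isomorphism is the simplicial map $f : \VV(C') \to \VV(C)$ sending $[\kappa(X)]_a$ to $X_a$, the unique vertex of $X$ of colour $a$. The main point to check here, and the only genuinely delicate step, is well-definedness: if $\kappa(X) \sim_a \kappa(Y)$ then by construction of $\kappa$ we have $a \in \chi(X \inter Y)$, i.e.\ $X$ and $Y$ share their $a$-coloured vertex, so $X_a = Y_a$. Once this is established, $f$ is evidently colour preserving ($\chi(X_a) = a = \chi'([\kappa(X)]_a)$) and value preserving (by definition of $\ell$ in $\kappa(\C)$ and then in $\sigma(\kappa(\C))$), and it sends the facet $\{[\kappa(X)]_{a_0},\dots,[\kappa(X)]_{a_n}\}$ to $X$ itself, so it is surjective on facets; chromaticity of $\C$ and purity give injectivity on each facet, and hence an inverse $X_a \mapsto [\kappa(X)]_a$ which is also simplicial and value preserving.

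The main obstacle I anticipate is less a conceptual difficulty than a notational one: one must keep track of three levels of objects (original model, its image under one functor, its image under both), and the fact that the carrier of $\sigma(\kappa(\C))$ is built from equivalence classes of abstract tokens $\kappa(X)$ rather than from the vertices of $C$ itself. The well-definedness step in the second direction, which silently uses both the chromatic condition (each facet has at most one vertex per colour) and purity (every facet has exactly one vertex per colour, so $X_a$ is defined for every facet $X$ and every agent $a$), is where I would expect a careless proof to go wrong; everything else is routine unpacking of the definitions of $\sigma$ and $\kappa$ given just before the statement.
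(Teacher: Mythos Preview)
Your argument is correct and is precisely the natural proof: define the obvious candidate bijections, use properness for injectivity in the first direction and chromaticity/purity for well-definedness in the second, and then verify that accessibility, colour and valuation are preserved. One minor quibble: you say ``there is a unique $a \in A$ with $p \in P_a$'', but locality only guarantees existence; uniqueness is a convention the paper adopts without loss of generality. This does not affect the argument, since any such $a$ will do.

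There is, however, nothing to compare your proof against: the paper does not give its own proof of this proposition. It is stated with the citation \cite{goubaultetal:2018} and no argument, as a result imported from that reference. Your writeup is essentially what one would expect to find in the cited source.
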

\begin{proposition}[{\cite{goubaultetal:2018}}] 
\label{prop.corr2}
For all $\phi \in \lang_K$, $\model,s\models \phi$ iff $\sigma(\model),\sigma(s) \models \phi$, and $\C,X \models\phi$ iff $\kappa(\C),\kappa(X) \models \phi$.
\end{proposition}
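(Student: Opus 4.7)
The plan is to prove both equivalences by induction on the structure of $\phi$. I would argue the first equivalence in detail and remark that the second follows by the same pattern, or, alternatively, by combining the first with Proposition~\ref{prop.corr} together with the routine observation that isomorphic simplicial models satisfy the same formulas.

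For the atomic base case $\phi = p_a$, I unfold the definitions. On the Kripke side, $\model, s \models p_a$ iff $p_a \in L(s)$. On the simplicial side, $\sigma(\model), \sigma(s) \models p_a$ iff $p_a \in \ell(\sigma(s)) = \bigcup_{v \in \sigma(s)} \ell(v)$. Two observations close the gap: since $p_a \in P_a$ and the sets $P_b$ are pairwise disjoint, $p_a$ can appear only at the $a$-coloured vertex $[s]_a$ of the facet $\sigma(s)$; and by construction $p_a \in \ell([s]_a)$ iff $p_a \in L(s)$, a definition rendered coherent by locality of $p_a$ for~$a$. The Boolean cases are immediate from the induction hypothesis, since the clauses for $\neg$ and $\wedge$ coincide in the two semantics.

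The modal case $\phi = K_a \psi$ is where the real work happens, though the work is essentially bookkeeping. I would establish that $t \mapsto \sigma(t)$ restricts to a bijection between $\{t \in S \mid s \sim_a t\}$ and $\{Y \in \FF(\sigma(\model)) \mid a \in \chi(\sigma(s) \cap Y)\}$. For the forward direction, $s \sim_a t$ yields $[s]_a = [t]_a \in \sigma(s) \cap \sigma(t)$, an $a$-coloured vertex. For the inverse direction, properness of $\model$, i.e.\ $\bigcap_{b \in A} \sim_b = Id$, ensures that $s \mapsto \sigma(s)$ is itself a bijection between $S$ and $\FF(\sigma(\model))$, since distinct states cannot yield the same $(n+1)$-tuple of equivalence classes; any facet $Y$ sharing an $a$-coloured vertex with $\sigma(s)$ is therefore of the form $\sigma(t)$ for a unique $t$ with $[t]_a = [s]_a$, i.e.\ $s \sim_a t$. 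Both universal quantifications then range over matching sets, and the induction hypothesis delivers the equivalence.

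The main obstacle, such as it is, is contained in this bookkeeping: locality of the variables underpins the base case and properness underpins the modal case, and together these are exactly the properties that define the class $\mathcal K$ on which $\sigma$ is defined. For the second equivalence, $\C, X \models \phi$ iff $\kappa(\C), \kappa(X) \models \phi$, one can either repeat the analogous induction with the roles of vertices, facets, and equivalence classes reversed, or more economically specialise the first equivalence to $\model = \kappa(\C)$ and invoke the isomorphism $\sigma(\kappa(\C)) \simeq \C$ from Proposition~\ref{prop.corr} together with invariance of $\lang_K$-truth under isomorphism.
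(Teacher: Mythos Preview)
The paper does not actually prove this proposition: it is stated with a citation to \cite{goubaultetal:2018} and no argument is given in the text. Your proposal supplies exactly the standard proof one would expect---structural induction on $\phi$, with the base case relying on locality of the valuation and the modal case relying on properness to make $s\mapsto\sigma(s)$ a bijection $S\to\FF(\sigma(\model))$ that restricts to a bijection between $\sim_a$-classes and sets of facets sharing an $a$-vertex. This is correct and complete; the alternative you mention for the second equivalence (specialise to $\model=\kappa(\C)$ and invoke Proposition~\ref{prop.corr} plus invariance under isomorphism) is also sound and arguably cleaner than redoing the induction.
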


\section{Bisimulation for simplicial complexes} \label{section.bisimulation}

If two models are isomorphic, they make the same formulas true, i.e., they contain the same information (what we also called {\em modally equivalent}). In modal logic and concurrency theory, a weaker form of sameness than isomorphy already guarantees that,  \emph{bisimilarity} \cite{baieretal:2008,blackburnetal:2001,sangiorgi:2010,openMinds,glabbeek:1990}.  
As truth is invariant under bisimulation, it is an important way to determine whether structures contain the same information. In distributed computing by way of combinatorial topology, bisimulation between simplicial models has been proposed in~\cite{goubaultetal:2019,ledent:2019}. We present and further explore this notion, and demonstrate its adequacy with respect to bisimulation for epistemic models. This may be considered of interest, for example, if different subdivisions of complexes are shown to be bisimilar, then they contain the same information after all. 

We first define the standard notion of bisimulation between epistemic models, and then bisimulation between simplicial models. 

\paragraph*{Bisimulation between epistemic models}

A \emph{bisimulation} between $\model = (S,\sim,L)$ and $\model' = (S',\sim',L')$, notation $\RR: \model\bisim \model'$, is a non-empty relation $\RR \subseteq S \times S'$ such that for all $s \in S$, $s' \in S'$ with $\RR ss'$ the following three conditions are satisfied:

\begin{itemize}
\item {\bf atoms}: for all $p \in P$, $p \in L(s)$ iff $p \in L'(s')$.
\item {\bf forth}: for all $a \in A$, for all $t$ with $s \sim_a t$, there is a $t'$ with $s'\sim'_at'$ such that $\RR tt'$.
\item {\bf back}: for all $a \in A$, for all $t'$ with $s'\sim'_at'$, there is a $t$ with $s \sim_a t$ such that $\RR tt'$.
\end{itemize}
A bisimulation $\RR$ such that for all $s \in S$ there is a $s' \in S'$ such that $\RR ss'$ and for all $s' \in S'$ there is a $s \in S$ such that $\RR ss'$ is a \emph{total bisimulation}. If there is a bisimulation $\RR$ between $\model$ and $\model'$ we say that $\model$ and $\model'$ are \emph{bisimilar}, denoted $\model \bisim \model'$. A bisimulation between pointed models $(\model,s)$ and $(\model',s')$ is a bisimulation $\RR$ such that $\RR ss'$, notation $\RR: (\model,s) \bisim (\model',s')$, and if there is such a bisimulation we write $(\model,s) \bisim (\model',s')$. A model is \emph{bisimulation minimal} if the only total bisimulation on that model (i.e., between the model and itself) is the identity relation.

\paragraph*{Bisimulation between simplicial models}
Let be given simplicial models $\C = (C,\chi,\ell)$ and $\C' = (C',\chi',\ell')$. A non-empty relation $\RR$ between $\FF(C)$ and $\FF(C')$ is a \emph{(simplicial) bisimulation} between $\C$ and $\C'$, notation $\RR: \C \bisim \C'$, iff for all $Y \in \FF(C)$ and $Y' \in \FF(C')$ with $\RR YY'$ the following three conditions are satisfied:
\begin{itemize}
\item {\bf atoms}: for all $a \in A$ and $p_a \in P_a$, $p_a \in \ell(Y)$ iff $p_a \in\ell(Y')$. 
\item {\bf forth}: for all $a \in A$, if $Z \in \FF(C)$ and $a \in \chi(Y \inter Z)$ there is a $Z' \in \FF(C')$ with $a \in \chi(Y' \inter Z')$ such that $\RR ZZ'$.
\item {\bf back}: for all $a \in A$, if $Z' \in \FF(C')$ and $a \in \chi(Y' \inter Z')$ there is a $Z \in \FF(C)$ with $a \in \chi(Y \inter Z)$ such that $\RR ZZ'$.
\end{itemize}
A \emph{total simplicial bisimulation} $\RR$ is a simplicial bisimulation such that for all $X \in \FF(C)$ there is a $X' \in \VV(C')$ with $\RR XX'$ and for all $X' \in \FF(C')$ there is a $X \in \FF(C)$ with $\RR XX'$. If there is a bisimulation between $\C$ and $\C'$ we write $\C \bisim \C'$. A bisimulation between pointed simplicial models $(\C,X)$ and $(\C',X')$, where $X \in \FF(C)$ and $X' \in \FF(C')$, is a bisimulation $\RR$ such that $\RR XX'$, notation $\RR: (\C,X) \bisim (\C',X')$, and if there is such a bisimulation we write $(\C,X) \bisim (\C',X')$. Relation $\RR$ is a \emph{simulation} if it satisfies the {\bf atoms} and the {\bf forth} conditions. 

\medskip

Intuitively, the {\bf forth}-clause preserves ignorance when going from $\C$ to $\C'$, and the {\bf back}-clause preserves knowledge when going from $\C$ to $\C'$. 

A relation $\RR: \FF(C) \imp \FF(C')$ between facets {\em induces} a (similarly denoted) colour preserving relation $\RR: \VV(C) \imp \VV(C')$ between vertices by way of: if ${\RR}XX'$ then for all $a \in A$, ${\RR}X_aX'_a$. Dually, given a relation $\RR: \VV(C) \imp \VV(C')$ between vertices that is {\em chromatic}, i.e., colour preserving (if $\RR{vv'}$ then $\chi(v)=\chi(v')$), then for any simplices $X,X'$ with $\chi(X)=\chi(X')$, ${\RR}XX'$ denotes that for all $a \in \chi(X)$, ${\RR}X_aX'_a$.

As bisimulations between simplicial models are relations between facets, and as facets correspond to global states in epistemic models, such bisimulations are very much like bisimulations between epistemic models. 

The reader may wonder why bisimulations between simplicial models are not defined between vertices instead of between facets, and with the {\bf back} and {\bf forth} conditions only required between vertices. This is undesirable, because if we were to do so, then  different simplicial models with the same $1$-dimensional skeleton could incorrectly become bisimilar. Example~\ref{example.ins} illustrates how this can go wrong. Similar counterexamples can easily be found if we merely require {\bf back} and {\bf forth} for simplices of dimension $m$ smaller than the dimension $n$ of the complex. This demonstrates that {\bf back} and {\bf forth} need to be required for facets. 

More interestingly, now suppose bisimulations between simplicial models were defined as relations between vertices but with the {\bf back} and {\bf forth} conditions between facets. This is also undesirable: although a relation between facets induces a (unique) relation between vertices, different relations between facets may induce the same relation between vertices. Therefore, only specifying the relation between vertices does not determine a relation between facets. However, there is a unique {\em maximal} such relation, defined as: for all facets $X$ and $X'$, ${\RR}XX'$ iff for all $a \in A$, ${\RR}X_aX'_a$. Example~\ref{example.facevert} illustrates this. 

Given a bisimulation, its induced chromatic relation between vertices interestingly compares to chromatic simplicial maps. 
We recall the notion of chromatic simplicial map as a colour and simplex preserving function between the vertices of a complex. Given $\C = (C,\chi,\ell)$ and $\C' = (C',\chi',\ell')$, we will call a relation $\RR$ between vertices {\em simplex preserving} if for all $X \in C$ with $X$ in the domain of $\RR$ there is a $X' \in C'$ such that ${\RR}XX'$. It is easy to see that a bisimulation $\RR$ between simplicial models induces a relation $\RR$ between vertices that is simplex preserving and such that its converse relation $\RR^{-1} := \{ (v',v) \mid (v,v') \in \RR\}$ is also simplex preserving: 

Let $X \in C$ be such that $X$ is in the domain of $\RR$, then, as $C$ is pure, there is a $Y \in \FF(C)$ with $X \subseteq Y$. Let now $v \in X$. As $v$ is in the domain of $\RR$,  there is a $Z \in \FF(C)$ with $v \in Z$ and a $Z' \in \FF(C')$ such that ${\RR}ZZ'$. As $v \in Z$ and $v \in Y$, $v \in Z \inter Y$. Let now $\chi(v) = a$, then from ${\RR}ZZ'$, $v \in Z \inter Y$ and {\bf forth} it follows that there is a $Y'$ such that ${\RR}YY'$. The set $X' \subseteq \VV(C')$ with ${\RR}XX'$ is therefore a face of $Y'$ and thus a simplex. Similarly, using that {\bf back} is satisfied for $\RR$ we show that $\RR^{-1}$ is simplex preserving. It seems interesting to explore how a bisimulation can be seen as a generalization of a chromatic simplicial map. Note that the {\bf atoms} requirement corresponds to the value preservation of the simplicial map. In Propositions~\ref{prop:sim} and \ref{prop:coveringMain}, later, we also address the  relation between bisimulations (and simulations) and simplicial maps. 

\begin{example} \label{example.bigtri}
The two simplicial models on the left are (totally) bisimilar: all four facets on the far left are related to the single facet on the near left. The two simplicial models on the right are also (totally) bisimilar: the two opposed $\{a0,b0,c0\}$ facets on the near right are related to the facet $\{a0,b0,c0\}$ of the complex on the far right, and the $\{a1,b0,c0\}$ facet on the near right is related to the facet $\{a1,b0,c0\}$ on the far right.

\bigskip

\noindent
\scalebox{.7}{
\begin{tikzpicture}
\node (a0) at (0,0) {$a0$};
\node (b0) at (4,0) {$b0$};
\node (c0) at (2,3.42) {$c0$};
\node (a1) at (3,1.71) {$a0$};
\node (b1) at (1,1.71) {$b0$};
\node (c1) at (2,0) {$c0$};
\draw[-] (a0) -- (c1);
\draw[-] (a0) -- (b1);
\draw[-] (a1) -- (c0);
\draw[-] (a1) -- (c1);
\draw[-] (a1) -- (b0);
\draw[-] (b0) -- (c1);
\draw[-] (b1) -- (c1);
\draw[-] (b1) -- (c0);
\draw[-] (b1) -- (a1);
\node (bisim) at (4.8,1.5) {\Large $\bisim$};
\end{tikzpicture}
\quad 
\begin{tikzpicture}
\node (a0) at (0,0) {$b0$};
\node (b0) at (2,0) {$a0$};
\node (c0) at (1,1.71) {$c0$};
\node (cb) at (1,-1.71) {\color{white}$c0$};
\draw[-] (a0) -- (c0);
\draw[-] (a0) -- (b0);
\draw[-] (b0) -- (c0);
\end{tikzpicture}
\quad\quad\quad
\quad
\quad
\begin{tikzpicture}
\node (a0) at (0,0) {$a0$};
\node (b0) at (4,0) {$b0$};
\node (c0) at (1,-1.71) {$b0$};
\node (a1) at (3,1.71) {$a1$};
\node (b1) at (1,1.71) {$b0$};
\node (c1) at (2,0) {$c0$};
\node (bisim) at (5.2,-.2) {\Large $\bisim$};
\draw[-] (a0) -- (c1);
\draw[-] (a0) -- (b1);
\draw[-] (a1) -- (c1);
\draw[-] (a1) -- (b0);
\draw[-] (b0) -- (c1);
\draw[-] (b1) -- (c1);
\draw[-] (c0) -- (a0);
\draw[-] (c0) -- (c1);
\end{tikzpicture}
\ \ 
\begin{tikzpicture}
\node (a0) at (0,0) {$a0$};
\node (b0) at (4,0) {$b0$};
\node (c0) at (1,-1.71) {\color{white}$b0$};
\node (a1) at (3,1.71) {$a1$};
\node (b1) at (1,1.71) {$b0$};
\node (c1) at (2,0) {$c0$};
\draw[-] (a0) -- (c1);
\draw[-] (a0) -- (b1);
\draw[-] (a1) -- (c1);
\draw[-] (a1) -- (b0);
\draw[-] (b0) -- (c1);
\draw[-] (b1) -- (c1);
\end{tikzpicture}
}
\end{example}

\begin{example} \label{example.ins}
It is not sufficient only to require {\bf forth} and {\bf back}  for lower dimensional simplices, e.g., for edges only. For a counterexample, consider the following simplicial models $\C$ and $\C'$ (and with valuations as depicted). Simplicial model $\C$ consists of four facets/triangles, whereas simplicial model $\C'$ consists of three facets (the middle triangle $\{a1,b1,c1\}$ is not a facet). The 1-skeletons of $\C$ and $\C'$ (checking {\bf forth} and {\bf back} only for edges of triangles) are bisimilar by way of the relation $\RR$ mapping vertices with the same colour and value. However, $\C$ and $\C'$ are not bisimilar, because no facet of $\C'$ is bisimilar to $F_4$ in $\C$.
\begin{center}
\scalebox{.8}{
$\C:$ \quad \quad 
\begin{tikzpicture}
\node (a0) at (0,0) {$a0$};
\node (b0) at (4,0) {$b0$};
\node (c0) at (2,3.42) {$c0$};
\node (a1) at (3,1.71) {$a1$};
\node (b1) at (1,1.71) {$b1$};
\node (c1) at (2,0) {$c1$};
\node (f1) at (1,.65) {$F_1$};
\node (f2) at (2,2.36) {$F_2$};
\node (f3) at (3,.65) {$F_3$};
\node (f4) at (2,1.05) {$F_4$};
\draw[-] (a0) -- (c1);
\draw[-] (a0) -- (b1);
\draw[-] (a1) -- (c0);
\draw[-] (a1) -- (c1);
\draw[-] (a1) -- (b0);
\draw[-] (b0) -- (c1);
\draw[-] (b1) -- (c1);
\draw[-] (b1) -- (c0);
\draw[-] (b1) -- (a1);
\end{tikzpicture}
\qquad \qquad
$\C':$ \quad \quad 
\begin{tikzpicture}
\node (a0) at (0,0) {$a0$};
\node (b0) at (4,0) {$b0$};
\node (c0) at (2,3.42) {$c0$};
\node (a1) at (3,1.71) {$a1$};
\node (b1) at (1,1.71) {$b1$};
\node (c1) at (2,0) {$c1$};
\node (f1) at (1,.65) {$F_1$};
\node (f2) at (2,2.36) {$F_2$};
\node (f3) at (3,.65) {$F_3$};
\node (f4) at (2,1.05) {$\epsilon$};
\draw[-] (a0) -- (c1);
\draw[-] (a0) -- (b1);
\draw[-] (a1) -- (c0);
\draw[-] (a1) -- (c1);
\draw[-] (a1) -- (b0);
\draw[-] (b0) -- (c1);
\draw[-] (b1) -- (c1);
\draw[-] (b1) -- (c0);
\draw[-] (b1) -- (a1);
\end{tikzpicture}
}
\end{center}
\end{example}

\begin{example} \label{example.facevert}
Consider three one-dimensional complexes $\C = (C,\chi,\ell)$, $\C' = (C',\chi',\ell')$, and $\C'' = (C'',\chi'',\ell'')$ for two agents $a,b$, where we assume that $a$ has the same value everywhere, and that $b$ also has the same value everywhere. 
\begin{center}
\scalebox{.8}{
$\C:$ \quad 
\begin{tikzpicture}
\node (a0) at (0,0) {$a$};
\node (b0) at (2,0) {$b$};
\node (a1) at (2,2) {$a$};
\node (b1) at (0,2) {$b$};
\draw[-] (a0) -- node[below] {\scriptsize $Z$} (b0);
\draw[-] (b0) -- node[right] {\scriptsize $Y$} (a1);
\draw[-] (a1) -- node[above] {\scriptsize $X$} (b1);
\draw[-] (b1) -- node[left] {\scriptsize $W$} (a0);
\end{tikzpicture}
\qquad\qquad
$\C':$ \quad 
\begin{tikzpicture}
\node (a0) at (0,0) {$a$};
\node (b0) at (2,0) {$b$};
\node (a1) at (3.5,1) {$a$};
\node (b1) at (2,2) {$b$};
\node (a2) at (0,2) {$a$};
\node (b2) at (-1.5,1) {$b$};
\draw[-] (a0) -- node[below] {\scriptsize $Y'$} (b0);
\draw[-] (b0) -- node[right] {\scriptsize $X'$} (a1);
\draw[-] (a1) -- node[above] {\scriptsize $W'$} (b1);
\draw[-] (b1) -- node[above] {\scriptsize $V'$} (a2);
\draw[-] (a2) -- node[left] {\scriptsize $U'$} (b2);
\draw[-] (b2) -- node[left] {\scriptsize $Z'$} (a0);
\end{tikzpicture}
\qquad\qquad
$\C'':$ \quad  
\begin{tikzpicture}
\node (a0) at (0,0) {$a$};
\node (b0) at (2,0) {$b$};
\draw[-] (b0) -- node[below] {\scriptsize $X''$} (a0);
\end{tikzpicture}
}
\end{center}
Clearly, they are all bisimilar. Two different bisimulations between $\C$ and $\C'$ are:
\[\begin{array}{lcl}
\RR &:=& \{ (X,V'), (X,X'), (X,Z'), (Y,U'), (Y,W'), (Y,Y'), \\ && (Z,V'), (Z,X'), (Z,Z'), (W,U'), (W,W'), (W,Y') \} \\
\RR' &:=& \FF(C) \times \FF(C')
\end{array}\]
Relations $\RR$ and $\RR'$ induce the same relation between vertices, namely relating every $a$ vertex in $\C$ to every $a$ vertex in $\C'$, and every $b$ vertex in $\C$ to every $b$ vertex in $\C'$. This demonstrates that the same relation between vertices may be consistent with different relations between facets. The relation $\RR'$ is the maximal bisimulation between $\C$ and $\C'$. Assuming the relation by vertices as primitive, we can define also define $\RR'$ as: for all facets $X$ and $X'$, ${\RR'}XX'$ iff (${\RR'}X_aX'_a$ and ${\RR'}X_bX'_b$).

Both $\C$ and $\C'$ are bisimilar to the single-edged complex $\C''$. This complex is bisimulation minimal ($\C''$ can be seen as the quotient of $\C$, and of $\C'$, with respect to bisimilarity).
\end{example}

\paragraph*{Subdivision}
Subdivisions typically do not result in bisimilar complexes. Consider the two simplicial models for agents $a,b$ below, where the edge with labels $1$ is subdivided into three edges, while the other two edges of this simplicial model are not subdivided. These two simplicial models are not bisimilar. Clearly, because of {\bf atoms}, the facet $a1$||$b1$ on the left must be in relation with a facet $a1$||$b1$ on the right. Which one? If we choose $a1$||$\pmb{b1}$ on the right, then {\bf forth} fails because the adjoining $b1$||$a2$ on the left cannot be related to $\pmb{b1}$||$\pmb{a1}$ on the right, as the values of the $a$ vertices are different. If we choose $\pmb{b1}$||$\pmb{a1}$ on the right, then {\bf forth} fails because the adjoining $b0$||$a1$ on the left cannot be related to $\pmb{a1}$||$b1$ on the right, as the values of the $b$ vertices are different. Finally and again similarly, if we choose $\pmb{a1}$||$b1$ on the right, then {\bf back} fails, as $b0$||$a1$ and $\pmb{b1}$||$a1$ are not related.

We can also come to this conclusion in a different way: if the respective complexes have different properties of knowledge, then they are not bisimilar (Proposition~\ref{prop.biseq}, later). We now note, for example, that in the complex on the right in vertex $\pmb{b1}$ agent $b$ knows that the value of $a$ is $1$, whereas on the left agent $b$ is uncertain about the value of $a$ if his value is $1$. Similarly, $\pmb{a1}$ on the right is the unique vertex where $a$ knows that the value of $b$ is $1$. 
\begin{center}
\scalebox{.8}{
\begin{tikzpicture}
\node (b0) at (0,0) {$b0$};
\node (a1) at (2,0) {$a1$};
\node (b1) at (4,0) {$b1$};
\node (a2) at (6,0) {$a2$};
\draw[-] (b0) -- (a1);
\draw[-] (a1) -- (b1);
\draw[-] (b1) -- (a2);
\end{tikzpicture}
\qquad {\Large $\not{\!\!\!\bisim}$} \qquad
\begin{tikzpicture}
\node (b0) at (0,0) {$b0$};
\node (a1) at (2,0) {$a1$};
\node (b1) at (4,0) {$\pmb{b1}$};
\node (a2) at (6,0) {$\pmb{a1}$};
\node (b2) at (8,0) {$b1$};
\node (a3) at (10,0) {$a2$};
\draw[-] (b0) -- (a1);
\draw[-] (a1) -- (b1);
\draw[-] (b1) -- (a2);
\draw[-] (a2) -- (b2);
\draw[-] (b2) -- (a3);
\end{tikzpicture}
}
\end{center}
However, sometimes subdivisions result in bisimilar complexes. For example, if, instead, $a1$||$b1$ were the unique edge of a simplicial model, then its subdivision must be bisimilar, as below:
\begin{center}
\scalebox{.8}{
\begin{tikzpicture}
\node (a1) at (2,0) {$a1$};
\node (b1) at (4,0) {$b1$};
\draw[-] (a1) -- (b1);
\end{tikzpicture}
\qquad {\Large $\bisim$} \qquad
\begin{tikzpicture}
\node (a1) at (2,0) {$a1$};
\node (b1) at (4,0) {$\pmb{b1}$};
\node (a2) at (6,0) {$\pmb{a1}$};
\node (b2) at (8,0) {$b1$};
\draw[-] (a1) -- (b1);
\draw[-] (b1) -- (a2);
\draw[-] (a2) -- (b2);
\end{tikzpicture}
}
\end{center}
Similarly, for dimension $2$, the left complex below consists of a single facet, and it is bisimilar to its subdivision on the right (by connecting vertices with the same colour), where we assume that all vertices of the same colour satisfy the same local variable(s). 
\begin{center}
\scalebox{.7}{
\begin{tikzpicture}
\node (a0) at (0,0) {$a$};
\node (b0) at (2,0) {$b$};
\node (c0) at (1,1.71) {$c$};
\draw[-] (a0) -- (c0);
\draw[-] (a0) -- (b0);
\draw[-] (b0) -- (c0);
\node (bisim) at (3.2,.75) {\Large $\bisim$};
\end{tikzpicture}
\quad
\begin{tikzpicture}
\node (a0) at (0,0) {$a$};
\node (b0) at (2,0) {$b$};
\node (a1) at (4,0) {$a$};
\node (b1) at (6,0) {$b$};
\node (c0) at (1,1.71) {$c$};
\node (b3) at (2.5,2.2) {$b$};
\node (a3) at (3.5,2.2) {$a$};
\node (c3) at (3,1.24) {$c$};
\node (c1) at (5,1.71) {$c$};
\node (a2) at (2,3.42) {$a$};
\node (b2) at (4,3.42) {$b$};
\node (c2) at (3,5.13) {$c$};
\draw[-] (a0) -- (b0);
\draw[-] (b0) -- (a1);
\draw[-] (a1) -- (b1);
\draw[-] (c0) -- (b3);
\draw[-] (c1) -- (a3);
\draw[-] (c3) -- (b3);
\draw[-] (a3) -- (b3);
\draw[-] (a3) -- (c3);
\draw[-] (c2) -- (b3);
\draw[-] (c2) -- (a3);
\draw[-] (a0) -- (c0);
\draw[-] (c0) -- (a2);
\draw[-] (a2) -- (c2);
\draw[-] (c2) -- (b2);
\draw[-] (b2) -- (c1);
\draw[-] (c1) -- (b1);
\draw[-] (a0) -- (b3);
\draw[-] (a0) -- (c3);
\draw[-] (a2) -- (b3);
\draw[-] (c3) -- (a1);
\draw[-] (b0) -- (c3);
\draw[-] (a3) -- (b2);
\draw[-] (b1) -- (c3);
\draw[-] (b1) -- (a3);
\end{tikzpicture}
}
\end{center}
Bisimulation may be a useful tool to determine whether seemingly similar simplicial complexes contain the same information. As well known, this can be determined in quadratic time with respect to the size of the models (i.e., complexes). Another use may be to determine whether, given some initial complex, clearly different subdivisions (where agents have different knowledge, such as $a$ knows $p$ in one but never in the other, and $b$ knows $q$ in the other but never in the one) both have subsequent subdivisions resulting, after all, in bisimilar complexes again. This is the property known as confluence, or Church-Rosser. Isomorphy is then often a bridge too far, but bisimilarity may be all we need.

\medskip

We close this section with some elementary results for bisimulations between complexes. We recall the maps $\kappa: \mathcal{S} \imp \mathcal{K}$ and $\sigma: \mathcal{K} \imp \mathcal{S}$ between the simplicial models and the proper local epistemic models. We now have the following.

\begin{proposition}\label{prop:bisimEquivMods}
Let $\model,\model' \in \mathcal K$, let $\C, \C' \in \mathcal S$. Then:
\begin{itemize}
\item If $\model \bisim \model'$, then $\sigma(\model) \bisim \sigma(\model')$.
\item If $\C \bisim \C'$, then $\kappa(\C) \bisim \kappa(\C')$.
\item If $\model \bisim \model'$, then $\kappa(\sigma(\model)) \bisim \kappa(\sigma(\model'))$.
\item If $\C \bisim \C'$, then $\sigma(\kappa(\C)) \bisim \sigma(\kappa(\C'))$.
\end{itemize}
\end{proposition}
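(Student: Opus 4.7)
The plan is to prove items 1 and 2 by direct construction and then derive items 3 and 4 by composition: once item 1 is established, $\model \bisim \model'$ yields $\sigma(\model) \bisim \sigma(\model')$, and applying item 2 to these simplicial models gives item 3; item 4 follows symmetrically. Alternatively, items 3 and 4 can be read off from Proposition~\ref{prop.corr} together with the easy fact that isomorphy implies bisimilarity (the graph of an iso is a bisimulation).

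For item 1, given $\RR \subseteq S \times S'$ with $\RR: \model \bisim \model'$, I lift it to
\[ \RR^\sigma \;:=\; \{(\sigma(s),\sigma(s')) \mid (s,s') \in \RR\} \;\subseteq\; \FF(\sigma(\model)) \times \FF(\sigma(\model')). \]
Because $\model$ and $\model'$ are proper, $\Inter_{a \in A} [s]_a = \{s\}$, so $s \mapsto \sigma(s)$ is injective on states and $\RR^\sigma$ is a genuine relation between facets. \textbf{Atoms} transfers because, by definition of $\sigma$, $p_a \in \ell(\sigma(s)) \Eq p_a \in L(s)$, which by the atoms clause of $\RR$ is equivalent to $p_a \in L'(s')$, hence to $p_a \in \ell'(\sigma(s'))$. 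For \textbf{forth}, the hypothesis $a \in \chi(\sigma(s) \inter \sigma(t))$ unpacks to $[s]_a = [t]_a$, i.e.\ $s \sim_a t$; the Kripke forth clause supplies $t'$ with $s' \sim'_a t'$ and $\RR t t'$, giving $a \in \chi(\sigma(s') \inter \sigma(t'))$ together with $\RR^\sigma \sigma(t)\sigma(t')$. \textbf{Back} is symmetric.

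For item 2, dually, given $\RR : \C \bisim \C'$, I set
\[ \RR^\kappa \;:=\; \{(\kappa(X),\kappa(X')) \mid (X,X') \in \RR\}. \]
The assignment $X \mapsto \kappa(X)$ is a bijection onto the states of $\kappa(\C)$, so $\RR^\kappa$ is well defined. Since every variable in $\kappa(\C)$ and $\kappa(\C')$ is local (by construction of $\kappa$), the Kripke \textbf{atoms} clause reduces to $p_a \in L(\kappa(X)) \Eq p_a \in L'(\kappa(X'))$, which follows from $p_a \in L(\kappa(X)) \Eq p_a \in \ell(X)$ together with the simplicial atoms clause of $\RR$. For \textbf{forth}: $\kappa(X) \sim_a \kappa(Y)$ iff $a \in X \inter Y$ iff $a \in \chi(X \inter Y)$, so the simplicial forth condition on $\RR$ produces $Y'$ with $a \in \chi(X' \inter Y')$ and $\RR Y Y'$, giving $\kappa(X') \sim_a \kappa(Y')$ and $\RR^\kappa \kappa(Y)\kappa(Y')$. \textbf{Back} is symmetric.

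The only real subtlety is the bookkeeping that turns the chromatic intersection $a \in \chi(X \inter Y)$ on the simplicial side into indistinguishability $s \sim_a t$ on the Kripke side and back, together with the well-definedness of $\RR^\sigma$ and $\RR^\kappa$, which rests on properness (for $\sigma$) and on the facet-indexing of states (for $\kappa$). Once those translations are in place, every bisimulation clause transfers by unpacking the definitions of $\sigma$ and $\kappa$ from Section~\ref{sec:epistLoSimModels}, so I expect no deeper obstacle.
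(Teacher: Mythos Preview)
Your proposal is correct and follows essentially the same approach as the paper: lift the given bisimulation along $\sigma$ (respectively $\kappa$) by setting $\RR^\sigma\sigma(s)\sigma(s') \Leftrightarrow \RR ss'$, then verify \textbf{atoms}, \textbf{forth}, \textbf{back} by translating between $s \sim_a t$ and $a \in \chi(\sigma(s)\cap\sigma(t))$; the paper likewise derives items 3 and 4 either by composing items 1 and 2 or via Proposition~\ref{prop.corr} and the fact that isomorphy implies bisimilarity. Your treatment is slightly more explicit (you spell out item~2 and the well-definedness bookkeeping, which the paper leaves as ``fairly similar''), but there is no substantive difference.
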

\begin{proof}
We show the first item. Let $\model = (S,\sim,L)$ and $\model' = (S',\sim',L')$ be local proper epistemic models.

Let $\RR: \model \bisim \model'$. We define relation $\RR'$ beween the facets $\sigma(s) = \{[s]_a \mid a \in A\}$ of $\sigma(\model)$ and the facets $\sigma(s') = \{[s']_a \mid a \in A\}$ of $\sigma(\model')$ as: $\RR'\sigma(s)\sigma(s')$ iff ${\RR}ss'$. Note that this induces that for all $a \in A$, $\RR' [s]_a[s']_a$ iff $\RR ss'$. 

We now show that $\RR'$ is a bisimulation. Let $\RR'\sigma(s)\sigma(s')$.

The clause {\bf atoms} holds: For all $[s]_a \in \sigma(s)$, $[s']_a \in \sigma(s')$ and $p_a \in P$, we have that $p_a \in \ell([s]_a)$ iff $p_a \in L(s)$ iff (using {\bf atoms} for $\RR$, given that $\RR ss'$ follows from $\RR'\sigma(s)\sigma(s')$) $p_a \in L'(s')$ iff $p_a \in \ell([s]_a)$.

The clause {\bf forth} holds: Let $\sigma(t) \in \FF(\sigma(\model))$ such that $a \in \chi(\sigma(s) \inter \sigma(t))$. From the definition of $\sigma(\model)$ we get that $s \sim_a t$. Therefore, given $\RR'\sigma(s)\sigma(s')$ and therefore ${\RR}ss'$, and as $\RR$ is a bisimulation, there is $t' \in \model'$ such that $s' \sim_a t'$ and $\RR tt'$. We now choose $\sigma(t') \in \FF(\sigma(\model'))$ to obtain $\RR' \sigma(t)\sigma(t')$ and $a \in \chi(\sigma(s') \inter \sigma(t'))$, as required.

The {\bf back} step  is similar.

The proof of the second item is also fairly similar.

The last two items follow from the first two items, but also directly from $\model \simeq \kappa(\sigma(\model))$ and $\C \simeq \sigma(\kappa(\model))$: isomorphy is stronger than bisimilarity, and bisimilarity is transitive.
\end{proof}
These results extend the known results that $\kappa(\sigma(\model)) \simeq \model$ and that $\sigma(\kappa(\C)) \simeq \C$ from~\cite{goubaultetal:2018} (Prop.~\ref{prop.corr}) in the expected way. Clearly, we also have such correspondence between pointed simplicial complexes, for example, $\sigma(\kappa(\C,X)) \bisim \sigma(\kappa(\C',X'))$, where $X$ is a facet in the complex of $\C$ and $X'$ is a facet in the complex of $\C'$.

Also for simplicial complexes, bisimilarity implies modal equivalence, and for finite complexes the implication also holds in the other direction, as is to be expected. 
It is often assumed in combinatorial topology and in distributed computing that complexes are finite, but not always. Interesting applications exist for the infinite case~\cite{AguileraInf04}. 
\begin{proposition} \label{prop.biseq} Let $(\C,X)$ and $(\C',X')$ be given, with $\C = (C,\chi,\ell)$ and $\C' = (C',\chi',\ell')$. Then:
\begin{itemize}
\item $(\C,X) \bisim (\C',X')$ implies $(\C,X) \equiv (\C',X')$. 
\end{itemize}
Let now, additionally, the sets of vertices of $\C$ and $\C'$ be finite. Then:
\begin{itemize}
\item $(\C,X) \equiv (\C',X')$ implies $(\C,X) \bisim (\C',X')$. 
\end{itemize} \vspace{-.5cm}
\end{proposition}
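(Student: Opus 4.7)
The natural plan is to transport both implications across the correspondence $\kappa, \sigma$ between simplicial models and local proper epistemic models, reducing to the standard Hennessy--Milner style results for Kripke models, which are already available off the shelf in the epistemic setting. Propositions~\ref{prop.corr}, \ref{prop.corr2}, and \ref{prop:bisimEquivMods} give us exactly the ingredients we need: preservation of bisimulation under $\kappa$ and $\sigma$, preservation of truth under $\kappa$ and $\sigma$, and the fact that these functors are mutually inverse up to isomorphism (hence a fortiori up to bisimulation).

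For the first implication, I would proceed as follows. Suppose $\RR: (\C,X) \bisim (\C',X')$. The cleanest route is a direct induction on $\phi \in \lang_K$: the base case is the \textbf{atoms} clause of the bisimulation (applied at the facets, using that $\ell(X) = \bigcup_{v \in X} \ell(v)$ and that atoms are local, so $p_a \in \ell(X)$ iff $p_a \in \ell(X_a)$), the boolean cases are routine, and for $K_a\phi$ one uses \textbf{forth} and \textbf{back} on facets exactly as in the Kripke case, noting that the clause ``$a \in \chi(X \cap Y)$'' plays the role of ``$s \sim_a t$''. Alternatively, one can read off the result from Proposition~\ref{prop.corr2} and the analogous Kripke-level Hennessy--Milner theorem applied to $\kappa(\C,X)$ and $\kappa(\C',X')$, invoking Proposition~\ref{prop:bisimEquivMods} (in its pointed form) to transport the bisimulation. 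I would go with the direct induction because the pointed version of Proposition~\ref{prop:bisimEquivMods} is only remarked on informally in the surrounding text.

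For the second, finite-model implication, the transfer strategy is the right one. Assume $(\C,X) \equiv (\C',X')$ with $\VV(C)$ and $\VV(C')$ finite. By Proposition~\ref{prop.corr2} applied to every $\phi \in \lang_K$, we obtain $\kappa(\C),\kappa(X) \equiv \kappa(\C'),\kappa(X')$. Since each simplicial complex has only finitely many vertices, $\kappa(\C)$ and $\kappa(\C')$ are finite epistemic models, in particular image-finite, so the classical Hennessy--Milner theorem for epistemic (Kripke) models yields $\kappa(\C),\kappa(X) \bisim \kappa(\C'),\kappa(X')$. Applying $\sigma$ and using the (pointed version of the) second item of Proposition~\ref{prop:bisimEquivMods}, followed by $\sigma(\kappa(\C)) \simeq \C$ and $\sigma(\kappa(\C')) \simeq \C'$ from Proposition~\ref{prop.corr}, together with the facts that isomorphic pointed simplicial models are bisimilar and that $\bisim$ is transitive, we conclude $(\C,X) \bisim (\C',X')$.

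The only step that is not entirely mechanical is the pointed version of Proposition~\ref{prop:bisimEquivMods}: one has to check that the bisimulation $\RR'$ built in its proof from a bisimulation $\RR$ (via $\RR'\sigma(s)\sigma(s') \iff \RR ss'$, and symmetrically via $\RR XX' \iff \RR' \kappa(X)\kappa(X')$ in the other direction) actually relates the designated points. This is immediate from the construction, so it really is just a bookkeeping remark, but it is the one spot where I would be careful. Everything else is either a routine induction or a direct appeal to already-cited machinery.
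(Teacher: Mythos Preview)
Your proposal is correct, but for the second (finite) implication it takes a genuinely different route from the paper. The paper argues \emph{directly on simplicial models}: it defines $\RR XX'$ iff $(\C,X)\equiv(\C',X')$ and checks the bisimulation clauses by hand, with the key step for \textbf{forth} being that the finitely many $a$-neighbours $Y'_1,\dots,Y'_n$ of $X'$ can each be separated from $Y$ by some $\phi_i$, whence $K_a(\phi_1\vee\cdots\vee\phi_n)$ holds at $X'$ but not at $X$. Your approach instead pushes everything through $\kappa$ to finite Kripke models, invokes the classical Hennessy--Milner theorem there, and pulls the resulting bisimulation back via $\sigma$ and Propositions~\ref{prop.corr} and~\ref{prop:bisimEquivMods}. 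This is perfectly sound (you are right that the pointed versions follow immediately from the constructions), and it has the advantage of reusing established machinery rather than redoing the Hennessy--Milner argument in the simplicial setting; the paper's approach, on the other hand, is more self-contained and makes explicit how finiteness of the vertex set enters (via finiteness of the set of $a$-neighbouring facets), without relying on the correspondence being an equivalence up to isomorphism.
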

\begin{proof}
These proofs are elementary. Let us show the second item.

Define relation $\RR:\FF(C)\times\FF(C')$ as: for all $X \in \FF(C)$ and $X' \in \FF(C')$, ${\RR}XX'$ if $(\C,X)\equiv(\C',X')$. We show that $\RR$ is a bisimulation. 

Let ${\RR}XX'$ be arbitrary.

First consider condition {\bf atoms}. For all $p_a\in P$, $p_a \in\ell(X)$ iff $p_a \in \ell'(X')$, because $(\C,X)\equiv(\C',X')$ implies that $(\C,X)\models p_a$ iff $\C',X'\models p_a$. 

Now consider {\bf forth}. Let $Y \in \FF(C)$ be such that $a \in \chi(X \inter Y)$. Consider $\pmb{Y'} := \{ Y' \in \FF(C') \mid  a \in \chi(X' \inter Y')\}$. A $\C'$ is finite, $\pmb{Y'}$ is finite, let $\pmb{Y'} = \{Y'_1,\dots,Y'_n\}$. Now suppose towards a contradiction that $Y$ is not in the relation $\RR$ with any of the $Y'_1, \dots, Y'_n$, so that therefore $(\C,Y)\not\equiv(\C',Y'_1)$, \dots, $(\C,Y)\not\equiv(\C',Y'_1)$. Then there are $\phi_1,\dots,\phi_n$ such that $\C',Y'_1 \models \phi_1$ but $\C,Y \not\models \phi_1$, \dots, $\C',Y'_n \models \phi_n$ but $\C,Y \not\models \phi_n$. Using the semantics of knowledge, it follows that $\C',X'\models K_a (\phi_1 \vel \dots \vel \phi_n)$ whereas $\C,X\not\models K_a (\phi_1 \vel \dots \vel \phi_n)$. Therefore it is not the case that $(\C,X)\equiv(\C',X')$. This contradicts the assumption ${\RR}XX'$. 

The condition {\bf back} is shown similarly.
\end{proof}
Given complexes $C$ and $C'$, let $f: \VV(C) \imp \VV(C')$ be a chromatic simplicial map. The relation $\RR \subseteq \VV(C) \times \VV(C')$ defined as, for all $x \in \VV(C)$ and $x' \in \VV(C')$, $\RR\, xy$ iff $f(x)=y$ is called the \emph{induced} relation. Also, as $\RR$ is functional, this, in its turn, induces a unique relation $\RR$ between facets (in the usual way). We recall that a simplicial map is value preserving between simplicial models iff for all $v \in \VV(C)$, $\ell'(f(v)) = \ell(v)$ (where $\ell$ is the valuation on $C$ and $\ell'$ that on $C'$). We now have that:

\begin{proposition}\label{prop:sim}
The induced relation $\RR$ of a value preserving chromatic simplicial map~$f$ between simplicial models $\C$ and $\C'$ is a simulation.
\end{proposition}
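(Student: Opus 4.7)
The plan is to verify the two clauses $\textbf{atoms}$ and $\textbf{forth}$ directly, using the hypotheses that $f$ is colour preserving, value preserving, and a simplicial map. Before doing so I would spend a line to pin down the induced relation on facets, since the induced $\RR$ is defined on vertices: because $f$ is chromatic it is rigid, and because both $\C$ and $\C'$ are chromatic pure simplicial models over the same agent set $A$ they have the same dimension $n = |A|-1$, so $f$ sends each facet $X \in \FF(C)$ to a facet $f(X) \in \FF(C')$. Consequently the induced lift is simply $\RR X X'$ iff $X' = f(X)$.

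For $\textbf{atoms}$, the plan is to reduce the facet-level statement to the vertex-level statement. Fix $\RR X X'$, $a \in A$, and $p_a \in P_a$; since $P_a$ variables only live on $a$-coloured vertices, $p_a \in \ell(X)$ iff $p_a \in \ell(X_a)$, and similarly on the primed side. Colour preservation gives $X'_a = f(X_a)$ and value preservation gives $\ell'(f(X_a)) = \ell(X_a)$, and these two equalities combine to give the required equivalence. For $\textbf{forth}$, given $\RR X X'$ and $Z \in \FF(C)$ with $a \in \chi(X \inter Z)$, i.e.\ $X_a = Z_a$, the only sensible candidate is $Z' := f(Z) \in \FF(C')$; colour preservation then yields $X'_a = f(X_a) = f(Z_a) = Z'_a$, so $a \in \chi'(X' \inter Z')$, and $\RR Z Z'$ holds by definition of the induced relation.

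There is no serious obstacle here; the statement essentially unpacks definitions once one sees that rigidity plus purity make the induced relation land on facets. The only thing worth flagging is \emph{why} one does not also get a bisimulation for free: the $\textbf{back}$ clause would require every facet-neighbour of $f(X)$ in $\C'$ to be hit by the $f$-image of some facet-neighbour of $X$ in $\C$, and nothing in the hypotheses on $f$ forces such surjectivity at the level of neighbourhoods, which is exactly what the later notion of covering map (Proposition~\ref{prop:coveringMain}) will add.
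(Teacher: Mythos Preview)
Your proof is correct and essentially the only natural one; the paper itself simply writes ``The proof is elementary'' and gives no details, so your unpacking of \textbf{atoms} and \textbf{forth} via colour and value preservation is exactly what that sentence is meant to cover. Your preliminary observation that rigidity plus purity force $f$ to carry facets to facets, so that the induced relation on facets is just $\RR X X' \iff X' = f(X)$, is the one point that needs a moment's thought, and you handle it cleanly; the closing remark on why \textbf{back} fails without a covering hypothesis is a helpful addition not present in the paper.
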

\begin{proof}
The proof is elementary.
\end{proof}
As a direct consequence of Proposition~\ref{prop:sim}, if both $f$ and $f^{-1}$ are value preserving chromatic simplicial maps, then the induced relation $\RR$, that determines a unique relation between facets as it is functional, is a bisimulation between $\C$ and $\C'$. This bisimulation is an isomorphism.

\paragraph*{Covering complex}
As explained by Hatcher~\cite{Hatcher}, algebraic topology can be roughly defined as the study of techniques for forming algebraic images of topological spaces.
Thus, continuous maps between spaces are projected onto homomorphisms between their algebraic images, so topologically related spaces have algebraically related images.
One of the simplest and most important functors of algebraic topology, the \emph{fundamental group,} which creates an algebraic image of a space from the loops in the space (i.e., paths starting and ending at the same point).
The fundamental group of a space $\pmb{X}$ is defined so that its elements are loops in $\pmb{X}$ starting and ending at a fixed basepoint $x_0\in \pmb{X}$ , but two such loops are regarded as determining the same element of the fundamental group if one loop can be continuously deformed into the other within the space $\pmb{X}$. 
There is a very deep connection between the fundamental group and \emph{covering spaces}, and  they can be regarded as two viewpoints toward the same thing~\cite{coveringRotman}. This means that algebraic features of the fundamental group can often be translated into the geometric language of covering spaces.
Roughly speaking, a space $\pmb{Y}$ is called a covering space of $\pmb{X}$ if $\pmb{Y}$ maps onto $\pmb{X}$ in a locally
homeomorphic way, so that the pre-image of every point in $\pmb{X}$ has the same cardinality.

We consider the combinatorial version of covering complex~\cite{coveringRotman}, extended to the epistemic setting.
Let  $\C = (C,\chi,\ell)$ and $\C' = (C',\chi',\ell')$ be simplicial models.
An \emph{epistemic covering complex} of $\C'$ is a pair $(\C,f)$ where $C$ is a connected 
complex and $f:\C\imp \C'$ is a value preserving simplical map such that for every simplex $X$ in $C'$, $f^{-1}(X)$ is a union of pairwise 
disjoint simplices. We observe that $C'$ is the image of $f$. Hence, $C'$ is also connected. 

Applications of covering spaces to distributed computing are explored in~\cite{FraigniaudRT13}, and this is an example.
In the figure we assume that a vertex $v$ and $f(v)$ have the same valuations, and an edge in $C$ is sent by $f$ to the edge of the same label in $C'$.

\begin{center}
\includegraphics[scale=0.45]{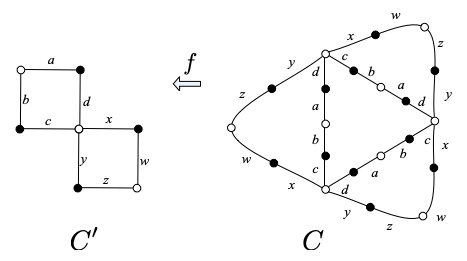}
\label{fig:covering}
\end{center}

\begin{proposition}\label{prop:coveringMain}
If $(\C,f)$ is an epistemic covering complex of $\C'$ then there is a total bisimulation between $\C$ and $\C'$.
\end{proposition}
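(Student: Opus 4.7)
The plan is to take the relation induced by the covering map itself as the candidate bisimulation. Concretely, I define $\RR \subseteq \FF(C) \times \FF(C')$ by $\RR\,YY'$ iff $f(Y) = Y'$. Since $f$ is a chromatic simplicial map and is therefore rigid, $f$ sends every $n$-dimensional simplex of $C$ to an $n$-dimensional simplex of $C'$; because both complexes are pure of dimension $n = |A|-1$, such images are facets of $C'$, so $\RR$ really does relate facets to facets. The three clauses of bisimulation must then be verified.

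For totality, the forward direction is immediate from $\RR\,Y f(Y)$ for every $Y \in \FF(C)$; the reverse direction uses that $C'$ is the image of $f$, hence for every $Y' \in \FF(C')$ the set $f^{-1}(Y')$ is nonempty, and by the covering condition it contains at least one simplex mapping onto $Y'$, which must be a facet by rigidity and purity. Clause \textbf{atoms} is a direct consequence of value-preservation of $f$: for each vertex $v \in Y$ of colour $a$, the unique $a$-coloured vertex $f(v) \in Y'$ satisfies $\ell'(f(v)) = \ell(v)$, so $\ell(Y)$ and $\ell'(Y')$ coincide. For \textbf{forth}, given $Z \in \FF(C)$ sharing an $a$-vertex $v$ with $Y$, I take $Z' := f(Z)$, which is a facet of $C'$; the $a$-vertex $f(v)$ lies in $Y' \cap Z'$, giving $a \in \chi'(Y' \cap Z')$, and $\RR\,ZZ'$ by construction.

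The main obstacle is \textbf{back}, and this is precisely where the covering property is essential. Given $Z' \in \FF(C')$ with $a \in \chi'(Y' \cap Z')$, pick the $a$-coloured vertex $v' \in Y' \cap Z'$. Because $f|_Y : Y \to Y'$ is a colour-preserving bijection (by rigidity, since $|Y| = |Y'| = n+1$), there is a unique $v \in Y$ with $f(v) = v'$ and $\chi(v) = a$. Now I invoke the covering property on the simplex $Z'$: $f^{-1}(Z')$ decomposes as a disjoint union of simplices each mapping onto $Z'$. The vertex $v$ lies in $f^{-1}(Z')$, so it belongs to exactly one of these lifts, say $Z$; this $Z$ is $n$-dimensional and hence a facet of $C$, satisfies $f(Z) = Z'$, i.e.\ $\RR\,ZZ'$, and contains $v \in Y \cap Z$, giving $a \in \chi(Y \cap Z)$ as required. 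The delicate point to justify carefully is that the disjointness in the covering condition forces each lift to be a full rigid copy of $Z'$ containing its portion of $f^{-1}(\VV(Z'))$, so that the particular lift through $v$ is well-defined and has $f$-image equal to all of $Z'$, not a proper face.
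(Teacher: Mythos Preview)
Your proof is correct and follows essentially the same approach as the paper: both take the relation induced by the map $f$ as the candidate bisimulation. The paper's version is much terser---it invokes Proposition~\ref{prop:sim} to get \textbf{atoms} and \textbf{forth} in one stroke (since $f$ is a value-preserving chromatic simplicial map, its induced relation is automatically a simulation), and then simply asserts that ``it is easy to see that the \textbf{back} property holds as well.'' You instead verify all three clauses directly, and in particular you spell out the \textbf{back} argument via lifting through the covering decomposition of $f^{-1}(Z')$, which the paper omits entirely. Your flagged ``delicate point''---that each piece of $f^{-1}(Z')$ must be a full-dimensional lift of $Z'$---is indeed the crux, and is implicit in the intended (Rotman-style) reading of the covering definition; it is worth noting that the paper's stated definition is slightly abbreviated on this point, so your explicit attention to it is warranted rather than superfluous.
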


\begin{proof}
By Proposition~\ref{prop:sim} the induced relation $\RR$ of   $f:\C\imp \C'$ is a  {simulation} from $\C$ to~$\C'$.
It is easy to see that the {\bf back} property holds as well.
\end{proof}

\section{Local semantics for simplicial complexes} \label{section.local}


In \cite{goubaultetal:2018} the semantics of knowledge on simplicial complexes employs a satisfaction relation between a pair consisting of a simplicial model and a facet of that model, and a formula. However, in combinatorial topology we not only wish to determine what is true in a facet, but also what is true in any simplex. We wish a more \emph{local} semantics. There are two roads towards such local semantics for simplicial complexes. There is another road from a local semantics for simplices to the special case of facets.

\subsection{From semantics for facets to semantics for simplices}

\paragraph*{Multi-pointed local semantics} In the first place, just as in Kripke semantics, we can define the satisfaction relation between multi-pointed simplicial models and formulas, instead of between pointed simplicial complexes and formulas, using that for any face of dimension $m < n$, there is a unique set of facets that contain it (we recall that we only consider pure complexes).

A multi-pointed simplicial model $(\C, \pmb{X})$ is a pair consisting of a simplicial model $\C = (C,\chi,\ell)$ and a set $\pmb{X}$ of facets $X \in \FF(C)$. The satisfaction relation for such sets of facets is easily defined in the obvious way.
\[ \begin{array}{lcl}
\C,\pmb{X} \models \phi & \text{iff} & \C,X \models \phi \ \text{for all} \ X \in \pmb{X}
\end{array}\]
Consequently, given any simplex $Y \subseteq C$ of $\C$, and the set $\pmb{X} := \{ X \in \FF(C) \mid Y \subseteq X \}$ of facets containing it as a face, we can define the local truth in $Y$ as follows.
\[ \begin{array}{lcl}
\C,Y \models \phi & \text{iff} & \C,X \models \phi \ \text{for all} \ X \in \FF(C) \ \text{with} \ Y \subseteq X \end{array}\]

\begin{example} \label{example.local}
Consider the simplicial model depicted below, where, as before, for the convenience of the exposition the vertices have been labelled with the colour/agent $i$ and the value of the single local proposition $p_i$ for that agent, where $0$ means that $p_i$ is false and $1$ means that $p_i$ is true, and where as names of these vertices we use such pairs. Let further $X = \{a0,c1,b1\}$, $Y = \{a1,c1,b1\}$, and $Z = \{a1,c1,b0\}$ be the three facets of the complex, as also suggestively depicted. 
\begin{center}
\scalebox{.8}{
\begin{tikzpicture}
\node (a0) at (0,0) {$a0$};
\node (b0) at (4,0) {$b0$};
\node (a1) at (3,1.71) {$a1$};
\node (b1) at (1,1.71) {$b1$};
\node (c1) at (2,0) {$c1$};
\node (f1) at (1,.65) {$X$};
\node (f3) at (3,.65) {$Z$};
\node (f4) at (2,1.05) {$Y$};
\draw[-] (a0) -- (c1);
\draw[-] (a0) -- (b1);
\draw[-] (a1) -- (c1);
\draw[-] (a1) -- (b0);
\draw[-] (b0) -- (c1);
\draw[-] (b1) -- (c1);
\draw[-] (b1) -- (a1);
\end{tikzpicture}
}
\end{center}
We now have that $\C,X \models K_a \neg p_a$, but $\C,\{b1,c1\} \not \models K_a \neg p_a$. As $\{b1,c1\} = X \inter Y$, $\C,\{b1,c1\} \models \phi$ is equivalent to ($\C,X \models \phi$ and $\C,Y \models \phi$). And from $\C,Y \models K_a p_a$ it follows that $\C,Y \not\models K_a \neg p_a$, and thus $\C,\{b1,c1\} \not \models K_a \neg p_a$. The edge $\{b1,c1\}$ represents that agents $b$ and $c$ are uncertain about what $a$ knows.

Similarly, in vertex $c1$, agent $c$, who there knows that $p_c$, is uncertain between valuations $\neg p_a \et p_b \et p_c$, $p_a \et p_b \et p_c$, and $p_a \et \neg p_b \et p_c$. This is described by $\C, c1 \models K_c ((\neg p_a \et p_b \et p_c) \vel (p_a \et p_b \et p_c) \vel (p_a \et \neg p_b \et p_c)) \et \M_c (\neg p_a \et p_b \et p_c) \et \M_c (p_a \et p_b \et p_c) \et \M_c (p_a \et \neg p_b \et p_c)$. In particular, from this it also follows that $\C, c1 \models K_c (p_a \vel p_b)$. Although $c$ considers it possible that $p_a$ is false, and also considers it possible that $p_b$ is false, she does not consider it possible that both are false.
\end{example}

\paragraph*{Language-restricted local semantics} There is also another, slightly more winding, road to a local semantics for simplicial models. This involves restricting the language as well: when interpreting a formula in a simplex, we then only allow knowledge modalities and local variables for the colours of that simplex. Given the language $\lang_K(A,P)$, where $P = \Union_{a \in A} P_a$, and given a simplicial model $\C = (C,\chi,\ell)$ and a simplex $X \in C$, satisfaction $\C,X \models \phi$ is only defined for formulas $\phi$ in the language restricted to agents in $\chi(X)$ and to local variables for those agents. In other words, we then only formalize what the agents in $\chi(X)$ know and do not know about themselves and about their own local variables, and not what they know about other agents or their local variables, or about what those other agents know. (Similar restrictions abound in game theory and computational social choice \cite{brandtetal:2016}.) This comes with the following semantics.

The relation $\models$ between a pair $(\C,X)$ and $\phi$ is defined for $\phi \in \lang_K(\chi(X),P|\chi(X))$, where $P|\chi(X) := \Union_{a \in \chi(X)} P_a$, by induction on the structure of $\phi$. The non-obvious two clauses are as follows. The language restriction implies that, below, the $a$ in $p_a$ and in $K_a$ must be in the set of colours $\chi(X)$.
\[ \begin{array}{lcl}
\C, X \models p_a & \text{iff} & p_a \in \ell(X) \\
\C,X \models K_a\phi & \text{iff} & \C,Y \models \phi \ \text{for all} \ X \in C \ \text{with} \ \chi(X) = \chi(Y) \ \text{and} \ a \in \chi(X \inter Y)
\end{array}\]

Instead of defining this syntactically, we can also define this semantically, as follows.

Let $\C = (C,\chi,\ell)$ be a simplicial model for agents $A$, and $A' \subseteq A$ a subset of agents. The {\em restriction $\C|A'$ of $\C$ to $A'$} is the simplicial model obtained from $\C$ by keeping only the vertices coloured by $A'$, i.e., $\C|A' = (C',\chi',\ell')$ where $C' = \{ X \in C \mid \chi(X) \subseteq A' \}$, where for all $X \in C'$, $\chi'(X) = \chi(X)$, and where for all $p_a \in P_a$ with $a \in A'$ and for all $v \in \VV(C')$, $p_a \in \ell'(v)$ iff $p_a \in \ell(v)$. We note that $C'$ is again a pure chromatic simplicial complex and that its dimension is $|A'|-1$. The language restricted local semantics above is just the regular semantics in $\C|A'$, for formulas in the language $\lang_K(A', P|A')$. In other words, for all $\phi \in \lang_K(A', P|A')$: \[ \C,X \models \phi \quad \text{iff} \quad \C|\chi(X), X \models \phi \]

\begin{example} \label{example.local2}
Reconsider Example~\ref{example.local}. We now have that $\C,\{b1,c1\} \models K_a \neg p_a$ and $\C, c1 \models K_c (p_a \vel p_b)$ are undefined, because $a \notin \chi(\{b1,c1\}) = \{b,c\}$ respectively $a \notin \chi(\{c1\})$, i.e., $a \neq c$. However, we have that $\C, \{b1,c1\} \models K_b p_c \et \neg K_c p_b$ and, less interestingly, as this is `all that $c$ knows', that $\C,c1 \models K_c p_c$.
\end{example}

\weg{
\begin{example} \label{example.bigtri}
Consider simplicial model $C$ below. Given agents $a,b,c$ let their unique local variable be $p_a,p_b,p_c$; below, $1$ after an agent name means that the value of the local variable for that agent is true, and $0$ that it is false.
\begin{center}
\scalebox{.8}{
\begin{tikzpicture}
\node (a0) at (0,0) {$a0$};
\node (b0) at (4,0) {$b0$};
\node (c0) at (2,3.42) {$c0$};
\node (a1) at (3,1.71) {$a1$};
\node (b1) at (1,1.71) {$b1$};
\node (c1) at (2,0) {$c1$};
\node (f1) at (1,.65) {$F_1$};
\node (f2) at (2,2.36) {$F_2$};
\node (f3) at (3,.65) {$F_3$};
\node (f4) at (2,1.05) {$F_4$};
\draw[-] (a0) -- (c1);
\draw[-] (a0) -- (b1);
\draw[-] (a1) -- (c0);
\draw[-] (a1) -- (c1);
\draw[-] (a1) -- (b0);
\draw[-] (b0) -- (c1);
\draw[-] (b1) -- (c1);
\draw[-] (b1) -- (c0);
\draw[-] (b1) -- (a1);
\end{tikzpicture}
}
\end{center}
We can now evaluate various propositions. For simplicity we equate edges and vertices with the colour/value pairs labelling them.
\begin{itemize}
\item $C,F_1 \models K_a p_b$ whereas $C,F_1 \models \neg (K_b \neg p_a \vel K_b p_a)$ ; $C,a0c1 \models K_a p_b$ and $C,a0c1 \models K_b p_b$ are undefined, as the interpretation is now local: restricted to the language for agents $a$ and $c$; on the other hand we have $C,a0c1 \models K_a p_c$.
\item $C,F_1 \models K_a p_b$
\end{itemize}
\end{example}
}

Let us write $\models_{mp}$ for the satisfaction relation in the `multi-pointed local semantics' and $\models_{lr}$ for that in the `language-restricted local semantics'. Then truth with respect to the latter is preserved as truth with respect the former, which is desirable.
\begin{proposition}
Let $\phi \in \lang_K(X)$. Then $\C, X \models_{lr} \phi$ iff $\C, X \models_{mp} \phi$.
\end{proposition}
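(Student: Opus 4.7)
The plan is to prove, by induction on $\phi \in \lang_K(A', P|A')$ where $A' := \chi(X)$, the following auxiliary statement: \emph{for every facet $Y \in \FF(C)$ with $X \subseteq Y$, $\C, Y \models \phi$ iff $\C|A', X \models \phi$.} Granting this, the proposition follows at once: its right-hand side is exactly $\C, X \models_{lr} \phi$ by definition, and since its left-hand side does not depend on the choice of $Y$, universally quantifying over all such $Y$ (as in the definition of $\models_{mp}$) yields exactly $\C, X \models_{mp} \phi$.

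Before the induction, two structural observations. First, since $\C$ is a pure complex of dimension $|A|-1$, every facet has exactly one vertex of each colour; in particular, for each $a \in A'$, the unique $a$-coloured vertex $v_a \in X$ coincides with the $a$-coloured vertex of any facet $Y \supseteq X$. Second, the facets of $\C|A'$ are exactly the simplices $Z \in C$ with $\chi(Z) = A'$; writing $Y^{A'}$ for the sub-face of $Y \in \FF(C)$ obtained by keeping only its $A'$-coloured vertices, the correspondence $Y \mapsto Y^{A'}$ maps $\FF(C)$ onto $\FF(\C|A')$, and $X$ itself is a facet of $\C|A'$.

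The induction on $\phi$ is now routine. For $\phi = p_a$ with $a \in A'$, both sides reduce to $p_a \in \ell(v_a)$, using that the restricted valuation agrees with $\ell$ on local variables of agents in $A'$. The Boolean cases follow from the induction hypothesis in the standard way. The main case is $\phi = K_a \psi$ with $a \in A'$: the clause $a \in \chi(Y \inter Z)$ for $Z \in \FF(C)$ is equivalent to $v_a \in Z$, so the set of facets quantified over on the left of the auxiliary statement does not depend on $Y$; dually, $a \in \chi'(X \inter X')$ for $X' \in \FF(\C|A')$ is equivalent to $v_a \in X'$, and $Z \mapsto Z^{A'}$ restricts to a surjection from $\{Z \in \FF(C) : v_a \in Z\}$ onto $\{X' \in \FF(\C|A') : v_a \in X'\}$. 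Applying the induction hypothesis to $\psi$ at each facet $Z$ of $\C$ (viewed together with its sub-facet $Z^{A'}$ of $\C|A'$) gives $\C, Z \models \psi$ iff $\C|A', Z^{A'} \models \psi$, so the two universal quantifications coincide. The main obstacle is bookkeeping in this last case: one must verify carefully that the $K_a$-accessibility structure of $\C|A'$ at $X$ faithfully mirrors that of $\C$ at any $Y \supseteq X$ once attention is restricted to the colours in $A'$; everything else is essentially forced by the induction hypothesis.
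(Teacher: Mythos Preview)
Your approach is correct in spirit but differs from the paper's, and there is one formulation issue worth flagging.

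\textbf{The formulation issue.} Your auxiliary statement is phrased for the \emph{fixed} simplex $X$: ``for every facet $Y \supseteq X$, $\C, Y \models \phi$ iff $\C|A', X \models \phi$.'' But in the $K_a$ case you apply the induction hypothesis at $Z^{A'}$, which is generally a different $A'$-coloured simplex from $X$. For the induction to go through, the auxiliary statement must be stated uniformly: \emph{for every simplex $X'$ with $\chi(X') = A'$ and every facet $Y \supseteq X'$, $\C, Y \models \phi$ iff $\C|A', X' \models \phi$.} You clearly have this in mind (your application of the IH at $Z^{A'}$ is exactly right once the strengthening is made), but as written the IH is too weak to support that step. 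This is a bookkeeping slip, not a conceptual one, and the fix is immediate.

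\textbf{Comparison with the paper.} The paper argues the equivalence of $\models_{lr}$ and $\models_{mp}$ directly, by induction on $\phi$, unpacking the inductive clause for $K_a$ under each semantics and chasing simplices: given a facet $Z \supseteq X$ and a $K_a$-successor $V$ of $Z$, it picks the $\chi(X)$-coloured face $Y \subseteq V$, shows $a \in \chi(X \cap Y)$, and then invokes the IH at $Y$ (so the paper, too, silently uses the IH at simplices other than $X$). Your route is cleaner in one respect: by passing through the restricted model $\C|A'$ and the identification $\models_{lr} \,=\, (\text{standard semantics on } \C|A')$, you reduce everything to a single invariant (truth at a facet $Y$ depends only on $Y^{A'}$) and get the independence of the left-hand side from $Y$ for free. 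The paper's argument is more hands-on but avoids introducing $\C|A'$ as an intermediary; yours is more structural and makes the role of the colour restriction explicit. Both rely on the same underlying observation that for $a \in A'$ the $K_a$-accessibility from any $Y \supseteq X$ is governed solely by the vertex $v_a \in X$.
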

\begin{proof}
The proof is by induction on $\phi$ where the only relevant case is the one for knowledge:

Let us assume that $\C, X \models_{lr} K_a \phi$, i.e., for all $Y \in C$ with $\chi(X) = \chi(Y)$ and $a \in \chi(X \inter Y)$, $\C, Y \models_{lr} \phi$. We wish to prove that $\C, X \models_{mp} K_a \phi$. In order to prove that, using the $\models_{mp}$ semantics, assume that $Z \in \FF(C)$ with $X \subseteq Z$, and, using the semantics for knowledge, let $V \in \FF(C)$ with $a \in \chi(Z \inter V)$. It then remains to show that $\C, V \models_{mp} \phi$. Let now $Y \subseteq V$ with $\chi(Y) = \chi(X)$. Note that, as $a \in \chi(X)$, therefore also $a \in \chi(Y)$. From $X \subseteq Z$, $Y \subseteq V$, $a \in \chi(Z \inter V)$, $a \in \chi(X)$, and $a \in \chi(Y)$ it follows that $a \in \chi(X \inter Y)$. From that, the initial assumption $\C, X \models_{lr} K_a \phi$ and the $\models_{lr}$ semantics it now follows that $\C, Y \models_{lr} \phi$. By inductive hypothesis it now follows that $\C, Y \models_{mp} \phi$, and from the $\models_{mp}$ semantics and $Y \subseteq V \in \FF(C)$ it then follows that $\C, V \models_{mp} \phi$, as required.

Conversely, assume that $\C, X \models_{mp} K_a\phi$. Suppose that $Y \in C$ with $\chi(X) = \chi(Y)$ and $a \in \chi(X \inter Y)$. Let $Z \in \FF(C)$ with $X \subseteq Z$, then from the assumption $\C, X \models_{mp} K_a\phi$ we obtain $\C, Z \models_{mp} K_a\phi$. Therefore, for all $V \in \FF(C)$ with $a \in \chi(Z\inter V)$, $\C, V \models_{mp} \phi$. Similar to the reasoning in the other direction, for any $Y \subseteq V$ with $\chi(Y) = \chi(X)$ and $a \in \chi(Y)$ we must have that $a \in \chi(X \inter Y)$, such that by definition of $\models_{mp}$ we obtain $\C,Y \models_{mp} \phi$, and with induction we can then conclude $\C,Y \models_{lp} \phi$, as required.
\end{proof}

\paragraph*{Local semantics for distributed epistemic models}
One can just as well consider a similar local semantics for distributed epistemic models. 
Let $\model = (S,\sim,L)$ be a distributed model for agents $A$ and variables $P = \Union_{a \in A} P_a$, so that states $s \in S$ have shape $s = (s_{a_0},\dots,s_{a_n})$. Given such $s \in S$, by $s \subseteq_B s'$ we mean that $s$ is the restriction of $s$ to the agents $B \subseteq A$. For example, $(s_{a_1},s_{a_2}) \subseteq_{\{a_1,a_2\}} (s_{a_0},s_{a_1},s_{a_2})$. We now can, similarly to above, either define, in multi-pointed fashion, that $\model, s|B \models \phi$ iff (for all $s' \in S$ with $s'|B = s|B$: $\model,s' \models \phi$), or define, in restricted-language fashion, $\model, s|B \models \phi$ by induction on $\phi \in \lang(B,P|B)$. Again, truth in the latter is preserved as truth in the former.

\subsection{A semantics for simplices including facets}

A more rigorous departure from the semantics where the evaluation point is a facet are the following alternative semantics for arbitrary simplices as evaluation points, of which the case for facets is merely an instantiation.

Given are agents $A$ and variables $P = \Union_{a \in A} P_a$. 
Let $\C = (C,\chi,\ell)$ be a simplicial model, $X \in C$, and $\phi \in \lang(A,P)$. Similar to before for facets, we extend the usage of $\ell$ to include arbitrary simplices as: $p \in \ell(X)$ iff there is a $v \in X$ with $p \in \ell(v)$.

\[ \begin{array}{lcl}
\C, X \models p_a & \text{iff} & p_a \in \ell(X) \\
\C, X \models \phi\et\psi & \text{iff} & \C, X \models \phi \ \text{and} \ \C, X \models \psi \\
\C, X \models \neg \phi & \text{iff} & \C, X \not\models \phi \\
\C,X \models K_a\phi & \text{iff} & a \in \chi(X) \ \text{and} \ \C,Y \models \phi \ \text{for all} \ Y \in \FF(C) \ \text{with} \ a \in \chi(X \inter Y) \\
\end{array}\]
The special case of the knowledge semantics for vertices therefore is:
\[ \begin{array}{lcl}
\C,v \models K_a\phi & \text{iff} & a = \chi(v) \ \text{and} \ \C,Y \models \phi \ \text{for all} \ Y \in \FF(C) \ \text{with}\ v \in Y \\
\end{array}\]
\weg{
An alternative formulation of the knowledge semantics is:
\[ \begin{array}{lcl}
\C,X \models K_a\phi & \text{iff} & \text{there is} \ v \in X \ \text{with} \ \chi(v) = a \ \text{and} \ \C,Y \models \phi \ \text{for all} \ Y \in \FF(C) \ \text{with} \ v \in Y \end{array}\]
In other words, $a$ knows $\phi$ in $X$ if $X$ contains a vertex $v$ coloured $a$ and $\chi$ is true in the \emph{star} of $v$. So, in particular, $a$ knows $\phi$ in a vertex $v$ coloured $a$ iff $\phi$ is true in the \emph{star} of $v$. 
}
Elementary results corroborating this more general perspective of this semantics are:

\begin{proposition}\label{prop.star}
If $\C,X \models \phi$ and $Y \in C$ such that $X \subseteq Y$, then $\C,Y \models \phi$.
\end{proposition}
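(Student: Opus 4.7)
The plan is to prove this by induction on the structure of $\phi$, where the essential geometric content lies in the atomic and knowledge cases and the key leverage throughout comes from chromaticity: if a simplex contains a vertex of colour $a$ at all, it contains exactly one, and that vertex is pinned down by the inclusion $X \subseteq Y$. I fix $\C = (C,\chi,\ell)$ and $X \subseteq Y$ in $C$ and proceed clause by clause.

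For the atomic case $\phi = p_a$: by definition $\ell(X) = \Union_{v \in X} \ell(v)$, so $X \subseteq Y$ gives $\ell(X) \subseteq \ell(Y)$, and $p_a \in \ell(X)$ transfers to $p_a \in \ell(Y)$ at once. The conjunction case is an immediate application of the induction hypothesis to both conjuncts. The main step is the modal clause: from $\C,X \models K_a\psi$ one has $a \in \chi(X)$, so $X$ contains a unique vertex $v$ of colour $a$; since $X \subseteq Y$, $v$ also lies in $Y$, witnessing $a \in \chi(Y)$. Now let $Z \in \FF(C)$ with $a \in \chi(Y \inter Z)$; the $a$-coloured vertex in $Y \inter Z$ must be $v$ by chromaticity of $Y$, hence $v \in X \inter Z$ and $a \in \chi(X \inter Z)$. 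The hypothesis then delivers $\C,Z \models \psi$, and we conclude $\C,Y \models K_a\psi$. The essential point is that enlarging the evaluation simplex can only shrink the set of witness facets over which the knowledge modality quantifies.

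The obstacle is the negation clause. From $\C,X \not\models \psi$ one cannot in general deduce $\C,Y \not\models \psi$, since the induction hypothesis flows from $X$ to $Y$, not back. This is visible already at the atomic level: if $X$ contains no vertex coloured $a$ then $\C,X \models \neg p_a$ holds vacuously, but an enlargement $Y \supseteq X$ that introduces a new $a$-coloured vertex $u$ with $p_a \in \ell(u)$ makes $\C,Y \models p_a$, violating monotonicity. The natural reading of the statement is therefore to restrict to the positive-modal fragment built from $p_a$, $\et$, $K_a$, or, equivalently, to require that no new colours are gained when passing to $Y$; in either of those settings the argument above is uniform in every connective and the statement follows by a routine induction.
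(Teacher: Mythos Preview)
Your approach by structural induction matches the paper's, which says only that the result is ``easily shown by induction on $\phi$''. Your treatment of the atomic, conjunction, and $K_a$ cases is correct, and the observation that $X_a = Y_a$ forces $\{Z \in \FF(C) : a \in \chi(Y\cap Z)\} \subseteq \{Z \in \FF(C) : a \in \chi(X\cap Z)\}$ is exactly the right argument for the modal step.

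More to the point, you have caught a real problem that the paper glosses over: the proposition as stated is false. Your counterexample is valid --- if $a \notin \chi(X)$ then $\C,X \models \neg p_a$ trivially, yet any facet $Y \supseteq X$ whose $a$-vertex carries $p_a$ satisfies $\C,Y \models p_a$. The same phenomenon hits $\neg K_a\phi$: if $a \notin \chi(X)$ then $\C,X \models \neg K_a\phi$ automatically, while $\C,Y \models K_a\phi$ may well hold. So the negation step of the paper's claimed induction does not go through.

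One correction to your proposed repair. Restricting to the positive fragment does work, but your alternative ``no new colours are gained'' is vacuous: in a chromatic complex, $X \subseteq Y$ together with $\chi(X)=\chi(Y)$ forces $X=Y$, since each colour appears at most once in $Y$. The substantive fix, in line with the paper's surrounding discussion, is to require $\phi \in \lang_K(\chi(X),P|\chi(X))$. Under that restriction one in fact gets the two-sided equivalence $\C,X\models\phi \Leftrightarrow \C,Y\models\phi$: for every $a\in\chi(X)$ the $a$-vertex of $X$ coincides with that of $Y$, so the atomic and $K_a$ clauses evaluate identically at $X$ and at $Y$, and negation then comes for free.
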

\begin{proposition}
If $\C,X \models \phi$, $Y \in C$ such that $Y \subseteq X$ and $\phi \in \lang(\chi(Y))$, then $\C,Y \models \phi$.
\end{proposition}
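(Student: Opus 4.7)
My plan is to prove, by induction on the structure of $\phi$, the stronger biconditional $\C, X \models \phi$ iff $\C, Y \models \phi$ for every $\phi \in \lang(\chi(Y))$. The full biconditional (rather than just the required implication) is needed because the Boolean case for $\neg\phi$ has to transfer truth in both directions; otherwise the naïve induction stalls. I would also use the convention, implicit in the correspondence with local proper epistemic models (Proposition~\ref{prop.corr}), that valuations are local at the vertex level, i.e.\ $\ell(v) \subseteq P_{\chi(v)}$ for every vertex $v$; without this, the atomic case would fail, since a variable $p_a$ could in principle sit on a vertex of some other colour that lies in $X \setminus Y$.

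For the atomic case $\phi = p_a$ with $a \in \chi(Y)$, the locality convention means that $p_a$ can only appear in $\ell(X_a)$, the unique vertex of $X$ coloured $a$. Since $a \in \chi(Y)$ and $Y \subseteq X$, the $a$-coloured vertex of $Y$ must coincide with that of $X$, i.e.\ $Y_a = X_a$. Hence $p_a \in \ell(X) = \Union_{v \in X} \ell(v)$ iff $p_a \in \ell(X_a) = \ell(Y_a)$ iff $p_a \in \ell(Y)$. The propositional cases $\neg\phi$ and $\phi \et \psi$ are immediate from the biconditional form of the induction hypothesis, noting that subformulas of $\phi \in \lang(\chi(Y))$ remain in $\lang(\chi(Y))$.

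The interesting case is $\phi = K_a \psi$ with $a \in \chi(Y)$. Here again the key observation is $X_a = Y_a$. Unpacking the extended semantics, $\C, X \models K_a\psi$ iff $a \in \chi(X)$ and $\C, F \models \psi$ for every $F \in \FF(C)$ with $a \in \chi(X \inter F)$. Because $a \in \chi(X \inter F)$ is equivalent to $X_a \in F$, and similarly $a \in \chi(Y \inter F)$ is equivalent to $Y_a \in F$, the two quantifications range over exactly the same family of facets $\{F \in \FF(C) : X_a \in F\}$. The side condition $a \in \chi(X)$ holds iff $a \in \chi(Y)$ holds, since $a \in \chi(Y) \subseteq \chi(X)$ by the assumption $Y \subseteq X$ and $a \in \chi(Y)$. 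Hence both evaluations reduce to the same conjunction, and the equivalence follows (notably \emph{without} needing the induction hypothesis on $\psi$, since the inner evaluations are at the very same facets).

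I do not anticipate a real obstacle: the proof reduces to the single identity $X_a = Y_a$ valid whenever $a \in \chi(Y)$ and $Y \subseteq X$. The only points requiring care are the strengthening to a biconditional (to make the negation step work) and the implicit use of vertex-locality of $\ell$ in the atomic case.
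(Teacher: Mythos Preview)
Your proof is correct and follows the same approach as the paper, which simply states that the result is ``easily shown by induction on $\phi$.'' Your strengthening to a biconditional is the right move to push the induction through the negation case, and your observation that the $K_a$ case reduces to the identity $X_a = Y_a$ (so that the quantification over facets is literally the same on both sides, with no appeal to the induction hypothesis needed) is a clean way to handle the modal step.
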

Both are easily shown by induction on $\phi$. Proposition~\ref{prop.star} says, in other words, that if $\phi$ is true in simplex $X$ then it is true in the {\em star} of $X$.

\section{From global to local epistemic models} \label{section.globallocal}

Propositions \ref{prop.corr} and \ref{prop.corr2} in Section \ref{section.tools} reported how epistemic models correspond to simplicial models, however on the assumption that the epistemic models are local. We now investigate when epistemic models that are not local can be transformed into local epistemic models that contain the same information. It is easy to see that changing the domain or the epistemic relations of an epistemic model can result in change of information content, so that the only option is to adjust the valuation by replacing the global variables with local variables. 

Given epistemic model $\model = (S,\sim,L)$ for agents $A$ and variables $P$ we therefore wish to construct a local epistemic model $\model' = (S,\sim,L')$ for agents $A$ and variables $P'$ containing the {\em same information}, i.e., for all $\phi \in \lang_K(A,P)$ there is a $\phi'\in \lang_K(A,P')$ such that $\I{\phi}_\model = \I{\phi'}_{\model'}$, and for all $\phi' \in \lang_K(A,P')$ there is a $\phi\in \lang_K(A,P)$ such that $\I{\phi}_\model = \I{\phi'}_{\model'}$.

In general, as known from \cite{goubaultetal:2018}, this quest cannot succeed. A simple counterexample is the {\em improper} epistemic model \[ \np \stackrel a {\text{------}} p \] for one agent $a$ and one variable $p$. Agent $a$ cannot distinguish  two states with different valuations. Any local variable $q_a$ of agent $a$ must be true in both  states, as they are indistinguishable for $a$:
\[ q_a \stackrel a {\text{------}} q_a \] This model collapses to the singleton \[ q_a \]
We have lost information. 

However, on other epistemic models this quest succeeds. For example, consider
the following (proper) model for two agents $a,b$ and one variable $p$.
\[ p \stackrel a {\text{------}} \np \stackrel b {\text{------}} p \]
We note that $p$ is not local for $a$ and also not local for $b$. Now consider:
\[ q_aq_b \stackrel a {\text{------}} q_a\nq_b \stackrel b {\text{------}} \nq_a\nq_b \] for variables $q_a,q_b$. This model is local. As both models are finite and bisimulation minimal, all subsets of the domain are obviously definable in both languages and can thus be made to correspond. We can do this systematically for all formulas by, for example, replacing all occurrences of $p$ by $(q_a\et q_b) \vel (\nq_a\et\nq_b)$ in any formula in the language with $p$, and in the other direction by simultaneously replacing all occurrences of $q_a$ by $\neg K_a p$ and of $q_b$ by $K_b p$ in the language with $q_a,q_b$.

In \cite{ledent:2019} the following method is proposed to make arbitrary finite epistemic models local. Given $\model = (S,\sim,L)$ for variables $P$, consider novel variables $P' = \{p^{[s]_a}_a \mid s \in S, a \in A\}$, i.e., one for each equivalence class $[s]_a$. Then, define $L'(s) = \{ p^{[s]_a}_a \in P' \mid a \in A \}$. The model  $(S,\sim,L')$ is local, and if $\model$ is proper, then $\Inter_{a \in A} [s]_a = s$, so that $\Et_{a \in A} p^{[s]_a}_a$ is a distinguishing formula for state $s$. Therefore, the denotation of $p$ in $\model$ is the same as that of  $\Vel_{p \in L(s)} \Et_{a \in A} p^{[s]_a}_a$ in $\model'$ (see \cite{ledent:2019}). Dually, on bisimulation minimal epistemic models, the denotation of any $p^{[s]_a}_a$ in $\model'$ is the same as the of $\Vel_{t \in [s]_a} \delta_t$ in $\model$, in the original language $\lang_K(P)$. This solution is reminiscent of hybrid logic \cite{Blackburn00} wherein all states have names, and in particular of the enrichment of models in symbolic model checking \cite{EGS18,charrieretal:2015}. This method may result in many more propositional variables. Also, on epistemic models that are not bisimulation minimal, the model with variables $p^{[s]_a}_a$ is not bisimilar to the initial model and then contains more information, as in the following example.

\begin{example}
Consider the three models below. The leftmost model is improper and the other two are proper. The middle model (with names of states in between parentheses) is proper but not bisimulation minimal. The rightmost model is obtained by having variables correspond to equivalence classes. It is bisimulation minimal. For example, the formula $p^{uv}_a \et p^{su}_b$ is only true in state $u$, but any formula in the middle model that is true in $u$ is also true in the bisimilar state $t$.

\begin{center}
\begin{tikzpicture}
\node (00) at (0,0) {\color{white}$p(s)$};
\node (10) at (1.5,0) {\color{white}$\np(t)$};
\node (01) at (0,1.5) {$\np$};
\node (11) at (1.5,1.5) {$p$};
\draw[-] (01) -- node[above] {\scriptsize $ab$} (11);
\end{tikzpicture}
\quad\quad
\begin{tikzpicture}
\node (00) at (0,0) {$p(s)$};
\node (10) at (1.5,0) {$\np(t)$};
\node (01) at (0,1.5) {$\np(u)$};
\node (11) at (1.5,1.5) {$p(v)$};
\draw[-] (00) -- node[above] {\scriptsize $a$} (10);
\draw[-] (01) -- node[above] {\scriptsize $a$} (11);
\draw[-] (00) -- node[left] {\scriptsize $b$} (01);
\draw[-] (10) -- node[right] {\scriptsize $b$} (11);
\end{tikzpicture}
\quad\quad
\begin{tikzpicture}
\node (00) at (0,0) {$p^{st}_ap^{su}_b$};
\node (10) at (2,0) {$p^{st}_ap^{tv}_b$};
\node (01) at (0,1.5) {$p^{uv}_ap^{su}_b$};
\node (11) at (2,1.5) {$p^{uv}_ap^{tv}_b$};
\draw[-] (00) -- node[above] {\scriptsize $a$} (10);
\draw[-] (01) -- node[above] {\scriptsize $a$} (11);
\draw[-] (00) -- node[left] {\scriptsize $b$} (01);
\draw[-] (10) -- node[right] {\scriptsize $b$} (11);
\end{tikzpicture}
\end{center}
\end{example}
As said, we aim at a transformation that is information preserving in both directions. Therefore, the method from \cite{ledent:2019} is not what we want. We therefore investigated alternative ways to make models local. Before we can report on that, we need to introduce an additional technical tool: the distinguishing formula.

\paragraph*{Distinguishing formulas}

Given a subset $S' \subseteq S$ of the domain, a \emph{distinguishing formula} for $S'$ is a formula in some logical language typically denoted $\delta_{S'}$, such that:  for all $s \in S'$, $\model,s \models \phi$ whereas for all $s \not\in S'$, $\model,s \not\models \phi$. A distinguishing formula for a singleton $S' = \{s\}$ is a distinguishing formula for $s$. If $\model$ is finite (a model is finite if its domain is finite) and bisimulation minimal, a distinguishing formula always exists in $\lang_K(A,P)$ for any state. We can construct it as follows  \cite{hvdetal.jlc:2014}.

Let a finite epistemic model $\model = (S,\sim,L)$ be given. Let us a call variable $p \in P$ \emph{redundant} on $\model$ if, whenever $\model$ is a not singleton, $p \in L(s)$ for all $s \in S$ or $p \notin L(s)$ for all $s\in S$ (so the extension of $p$ is $S$ or $\emptyset$); and also if there is a $q \in P$ with $p \neq q$ and with $p \in L(s)$ iff $q \in L(s)$ for all $s \in S$. On a finite model, there is only a finite set $P(\model) \subseteq P$ of non-redundant variables (even if $P$ is countably infinite). Given $s \in S$, let $\tau_s$ be the \emph{factual description} of $s$, i.e., $\tau_s := \Et \{ p \in P(\model) \mid p \in L(s) \} \et \Et \{ \neg p \in P(\model) \mid p \notin L(s) \}$, and consider the following procedure:
\[ \begin{array}{lcl}
\delta_s^0 & := & \tau_s \\
\delta_s^{n+1} &:=& \tau_s \et \Et_{a \in A} \Et_{t \in [s]_a} \M_a \delta_t^n \et \Et_{a \in A} K_a \Vel_{t \in [s]_a} \delta_t^n
\end{array}  \hspace{3cm} \hfill (1) \]
Consider the relation $\RR$ defined by $\RR st$ if for all $n \in \Naturals$, $\model, t \models \delta_s^n$. If the model is finite, the relation $\RR$ is a bisimulation, and for all $n \geq |S|$, $\delta_s^{n+1}$ and $ \delta_s^n$ have the same denotation. We therefore take $\delta_s^{|S|}$ as distinguishing formula $\delta_s$ for state $s$.

References for distinguishing formulas in epistemic logic are \cite{jfak.odds:1998,hvdetal.jlc:2014}.

\paragraph*{Distinguishing formulas with local variables}

Let us now adapt procedure (1) for distinguishing formulas, using a novel set of local propositional variables. Firstly, write $\tau_{[s]_a}$ for  $\Vel_{t \in [s]_a} \tau_t$. For each $\tau_{[s]_a}$ create a novel variable $p_a^{\tau_{[s]_a}}$ with the same denotation on the given model. Note that this variable is local for $a$: $K_a p_a^{\tau_{[s]_a}}$ is true iff $p_a^{\tau_{[s]_a}}$ is true. Consider the following procedure:
\[ \begin{array}{lcl}
\delta_s^0 & := & \Et_{a \in A} p_a^{\tau_{[s]_a}} \\
\delta_s^{n+1} &:=& \Et_{a \in A} p_a^{\tau_{[s]_a}} \et \Et_{a \in A} \Et_{t \in [s]_a} \M_a \delta_t^n \et \Et_{a \in A} K_a \Vel_{t \in [s]_a} \delta_t^n 
\end{array} \hspace{2cm} \hfill (2) \]
The procedure is now in the language with local variables $P' = \{p_a^{\tau_{[s]_a}} \mid s \in S, a \in A\}$. In the sequel, in order to simplify notation, we may write $\phi_a$ for any propositional formula $\phi$ equivalent to $\tau_{[s]_a}$.
\begin{example} \label{example.delta2}
Continuing the above example for three states, the following local model is created. As $p \vel \neg p$ is equivalent to $\T$, we write $\T_a$ instead of $p_a^{p \vel \neg p}$, etc., and we only depict the valuations.
\[ \T_a p_b \stackrel a {\text{------}} \T_a\T_b \stackrel b {\text{------}} p_a\T_b \]
The resulting distinguishing formulas are
\[\begin{array}{lllllllll}
\delta^1_s & = & \T_a \et p_b \\
\delta^2_t & = & \T_a \et \T_b \\
\delta^1_u & = & p_a \et \T_b 
\end{array}\]
Also taking into account that agents know their local variables, an alternative description is:
\[\begin{array}{lllllllll}
\delta^1_s & = & p_b \\
\delta^2_t & = & \M_a p_b \et \M_b p_a \\
\delta^1_u & = & p_a 
\end{array}\]
\end{example}
Recalling the relation defined as: $\RR st$ if for all $n \in \Naturals$, $\model, t \models \delta_s^n$, it is unclear if on any given model it is a bisimulation with the new local variables if and only if it is a bisimulation with the old global variables. We conjecture that on finite models $\RR$ is again a bisimulation, thus achieving our aim to construct a local epistemic model with the same information content as any given epistemic model. We further conjecture that on any given epistemic model where $\RR$ is a bisimulation but not so with the local variables, no local epistemic model exists with the same information content. Example \ref{example.inf} demonstrates this for an infinite model.

\begin{example} \label{example.inf}
Consider the following (image-finite) infinite chain $\model$. 

\[ \np \stackrel a {\text{------}} p \stackrel b {\text{------}} \np \stackrel a {\text{------}} p \stackrel b {\text{------}} \np \stackrel a {\text{------}} p \stackrel b {\text{------}} \dots \ . \]
Let us name the states $0, 1, \dots$ from left to right. The model $\model'$ with local variables is:
\[ \T_a\np_b \stackrel a {\text{------}} \T_a\T_b \stackrel b {\text{------}} \T_a\T_b \stackrel a {\text{------}} \T_a\T_b \stackrel b {\text{------}} \T_a\T_b \stackrel a {\text{------}} \T_a\T_b \stackrel b {\text{------}} \dots \ . \]
State $0$ is the unique state where $K_b \neg p$. All finite subsets of the chain can be distinguished by their relation to this endpoint, and therefore in particular all equivalence classes of relations $\sim_a$ and $\sim_b$, i.e., all local states of $a$ and $b$. As an example, we show how three local $b$-states and three local $a$-states are described in the language for the local variables corresponding to booleans of the single variable $p$. For a local state consisting of states $j$ and $k$ we write $jk$, etc. 
\[ \begin{array}{l|l }
\text{eq.\ class} & \text{distinguishing formula} \\
\hline
{01} & \M_a \np_b \\
{23} & \M_a \M_b \M_a \np_b \et \neg \M_a \np_b \\
{45} & \M_a \M_b \M_a \M_b \M_a \np_b \et \neg \M_a \M_b \M_a \np_b \\
\hline
{0} & \np_b \\
{12} & \M_b \M_a \np_b \et \neg \np_b \\
{34} & \M_b \M_a \M_b \M_a \np_b \et \neg \M_b \M_a \np_b
\end{array} \]
The variable $p$ is true in all odd states. However, this subset of the domain is not definable in the language with local variables. In other words, there is a $\phi\in\lang_K(\{a,b\},\{p\}$, namely $\phi = p$, such that for all $\phi' \in \lang_K(\{a,b\},\{\np_b,\top_a,\top_b\})$, $\I{\phi}_\model \neq \I{\phi'}_{\model'}$. The models $\model$ and $\model'$ do not contain the same information.
\end{example}

\section{Group epistemic notions  for simplicial complexes} \label{section.group}

\subsection{Mutual, common, and distributed knowledge} Apart from individual knowledge, group knowledge plays an important role in epistemic logic. There are three well-known kinds of group knowledge. In the first place, there is the notion representing that \emph{everybody knows} a proposition, also known as \emph{mutual knowledge} \cite{meyeretal:1995,faginetal:1995,osborneetal:1994}.\footnote{There is no agreement on terminology here. The notion that everybody knows $\phi$ (which is unambiguous) is in different communities called: {\em shared} knowledge, {\em mutual} knowledge, {\em general} knowledge; where in some communities some of these terms mean common knowledge instead, creating further confusion.} Then, there is the notion of \emph{common knowledge} of a proposition, which entails not merely that everybody knows it, but also that everybody knows that everybody knows it, and so on \cite{friedell:1969,lewis:1969,hvdetal.ckcb:2009}. Finally, the agents have \emph{distributed knowledge} of a proposition if, intuitively (but not technically), they can get to know it by communication \cite{Hayek_AER45,halpernmoses:1990}.\footnote{The intuition puts one on the wrong foot for distributedly known ignorance: it may be distributed knowledge between $a$ and $b$ that `$p$ is true and  $b$ is ignorant of $p$', but this cannot be made common knowledge between them. If $a$ were to inform $b$ of this, they would then have common knowledge between them of $p$, but they would not have common knowledge between them of `$p$ is true en $b$ does not know $p$'.} This notion is less well studied than the other two. This is somewhat surprising, particularly in the context of distributed systems with communicating agents.

\paragraph*{Syntax} The notions of mutual, common, and distributed knowledge are formalized by operators $E_B$, $C_B$, and $D_B$, respectively. Mutual knowledge $E_B \phi$, for $B \subseteq A$, is typically defined by abbreviation as $\Et_{a \in B} K_a \phi$ and only introduced as a primitive language construct for investigations of succinctness and complexity. Whereas $C_B \phi$ and $D_B \phi$ need to be additional inductive constructs in the logical language in order to be able to give a semantics for these notions. Expression $D_B \phi$ stands for `the group of agents $B$ have distributed knowledge of $\phi$,' and expression $C_B \phi$ stands for `the group of agents $B$ have common knowledge of $\phi$.' Both common knowledge and distributed knowledge are known to enlarge the expressivity of corresponding logics, although in different ways, and to handle distributed knowledge we even need a stronger notion of bisimulation, that may serve us surprisingly well for a generalization of simplicial complexes.

\paragraph*{Semantics}
Given some epistemic model with relations $\sim_a$ for $a \in A$, we define ${\sim^\inter_B} := {\Inter_{a \in B} \sim_a}$, and ${\sim^*_B} := {(\Union_{a \in B} \sim_a)^*}$, where $B \subseteq A$. Both these relations are also equivalence relations. The semantics of common knowledge and of distributed knowledge on epistemic models is now as follows.
\[ \begin{array}{lcl}
\model,s \models D_B \phi & \text{iff} & \model,t \models \phi \ \text{for all}\ t \text{ with } s \sim^\inter_Bt \\
\model,s \models C_B \phi & \text{iff} & \model,t \models \phi \ \text{for all}\ t \text{ with } s \sim^*_Bt
\end{array} \]
It is straightforward to interpret distributed and common knowledge on simplicial models. Let $\C = (C,\chi,\ell)$ be a simplicial model (for $A$ and $P$), and let $X \in \FF(C)$. To interpret common knowledge $C_B$, define operation $*_B$ for all $B \subseteq A$ inductively as: for all $X,Y,Z \in \FF(C)$, $X \in *_B(X)$, and if $Y \in *_B(X)$ and $B \inter \chi(Z \inter Y) \neq \emptyset$, then $Z \in *_B(X)$. Then:
\[ \begin{array}{lcl}
\C,X \models D_B \phi & \text{iff} & \C,Y \models \phi \text{ for all } Y \in \FF(C) \  \text{such that} \ B \subseteq \chi(X \inter Y) \\
\C,X \models C_B \phi & \text{iff} & \C,Y \models \phi \text{ for all } Y \in *_B(X)
\end{array} \]

\begin{example}
We recall the simplicial models $\C$ and $\C'$ from Example \ref{example.ins}. We now evaluate some propositions involving common and distributed knowledge.
\begin{itemize}
\item Let us investigate what $a$ and $b$ commonly know in facet $F_1$ in $\C'$. To determine common knowledge of $a$ and $b$ in $F_1$, the commonly known proposition needs to be true in all $F_i$ bordering on edges of the $ab$-path: $a0$|$b1$|$a1$|$b0$, i.e., in all of $F_1,F_2,F_3$. In other words, it has to be (in this particular example) a model validity.

We now have that $\C',F_1 \models C_{ab} (K_c p_c \imp (K_a \neg p_a \vel K_b \neg p_b))$,  as the formula bound by $C_{ab}$ is true for all $F_1, F_2, F_3$. However, $\C,F_1 \not\models C_{ab} (K_c p_c \imp (K_a \neg p_a \vel K_b \neg p_b))$, as $\C,F_4 \not\models K_c p_c \imp (K_a \neg p_a \vel K_b \neg p_b)$. For another example,  
$\C,F_1 \not\models C_{ab} (K_a \neg p_a \vel K_b \neg p_b)$, namely not in $F_2$ where $c$ knows $\neg p_c$. And for yet another example we have that $\C,F_2 \models K_c (K_a p_a \et K_b p_b)$ whereas $\C,F_2 \not\models C_{ab} (K_a p_a \et K_b p_b)$: in $F_2$ agent $c$ knows more about $K_a p_a \et K_b p_b$ than what agents $a$ and $b$ commonly know there.\footnote{Local semantics can also be given for distributed knowledge and for common knowledge. For example, to determine what $a$ and $b$ commonly know on the edge $\{a0,b1\}$ we need to consider facets bordering on the chain $a0$|$b1$|$a1$|$b0$ only. For example, $\C, \{a0,b1\} \models K_a \neg p_a$ whereas $\C,\{a0,b1\} \not \models C_{ab} \neg p_a$.}
\item In a facet of a simplicial model the agents all together always have distributed knowledge of the valuation in that facet (or of anything else that is true on the facet). For example, $\C,F_4 \models D_{abc} (p_a \et p_b \et p_c)$. On the other hand, $\C,F_4 \not\models D_{ac} (p_a \et p_b \et p_c)$, as $a$ and $c$ are both uncertain about the value of $b$. Still their distributed knowledge in $F_4$ (of positive formulas) is larger than that of each of them separately: the distributed knowledge of $a$ and $c$ in $F_4$ is what is true in $F_4,F_3$; whereas, what $a$ knows in $F_4$ is what is true in $F_2,F_4, F_3$, and what $c$ knows in $F_4$ is what is true in $F_1, F_3, F_4$.
\end{itemize}
\end{example}

\subsection{Bisimulation for distributed knowledge} 

As such, distributed knowledge seems uninteresting on simplicial models, as in any facet of a simplicial model the agents always have distributed knowledge of what is true in that facet (although for proper subsets of all agents distributed knowledge may already be slightly more interesting). Dually, in proper epistemic models, the intersection of all relations is the identity relation (although, also dually, the relation $\sim^\inter_B$ for $B \neq A$ need not be the identity). Somewhat surprisingly, explorations concerning distributed knowledge still seem of large interest to describe the information content of simplicial complexes.
 
%
The natural generalization of bisimulation including relations interpreting distributed knowledge not only contains {\bf forth} and {\bf back} clauses for individual agents but now for any group of agents \cite{Roelofsen07,AgotnesW17}. Given $\RR s s'$, we recall the {\bf forth} clause:
\begin{itemize}
\item {\bf forth}: for all $a \in A$, for all $t$ with $s \sim_a t$, there is a $t'$ with $s'\sim'_at'$ such that $\RR tt'$.
\end{itemize}
We now get:
\begin{itemize}
\item {\bf forth}: for all $B \subseteq A$, for all $t$ with $s \sim^\inter_B t$, there is a $t'$ with $s'\sim'^\inter_B t'$ such that $\RR tt'$.
\end{itemize}
This notion of bisimulation is \emph{stronger} (for a given model it is a more refined relation) than bisimulation for relations $\sim_a$ for individual agents only. See Example \ref{ex.stronger}. This stronger notion may be useful to elicit information from simplical sets, a generalization of simplicial complexes, as illustrated in Example \ref{ex.sets}.
\begin{example} \label{ex.stronger}
The typical example where this stronger bisimulation makes a difference is that it distinguishes the following epistemic models.
\begin{center}
\begin{tikzpicture}
\node (m1) at (-2,0) {$p$};
\node (m2) at (-3.5,0) {$\np$};
\node (00) at (0,0) {$p$};
\node (10) at (1.5,0) {$\np$};
\node (01) at (0,1.5) {$\np$};
\node (11) at (1.5,1.5) {$p$};
\draw[-] (00) -- node[above] {\scriptsize $a$} (10);
\draw[-] (01) -- node[above] {\scriptsize $a$} (11);
\draw[-] (00) -- node[left] {\scriptsize $b$} (01);
\draw[-] (10) -- node[right] {\scriptsize $b$} (11);
\draw[-] (m2) -- node[above] {\scriptsize $ab$} (m1);
\node(modelx) at (-2.75,-1) {$(x)$};
\node(modely) at (.75,-1) {$(y)$};
\end{tikzpicture}
%
\end{center}
In the top-right and bottom-left corner of the square model, agents $a$ and $b$ have distributed knowledge of $p$; we note that ${\sim_a \inter \sim_b}$ is the identity relation. The two models are (standardly) bisimilar, but not with the additional clause for the group of agents $\{a,b\}$: in the left one we can then go from a $p$ to a $\neg p$ state by a $\sim^\inter_{ab}$ link, but not in the right one, so the {\bf forth} clause fails for $\{a,b\}$. 
\end{example}

\begin{example}[Distributed knowledge for simplicial sets] \label{ex.sets}
As the models in the previous example are not local, they do not correspond to a simplicial complex. And as the disjunction of the values in indistinguishable states always is the uninformative value `true', there is no transformation to a local epistemic model (see Section \ref{section.local}). The only obvious representation is simplicial model $(i)$ below, where the name $a01$ suggestively stands for a vertex coloured with $a$ and valued with the indistinguishable set $\{\neg p, p\}$. A similar quadrangular simplicial model represents that four-state epistemic model. But it is bisimilar to~$(i)$.

Distributed knowledge may now be helpful. In model $(y)$, the coalition $\{a,b\}$ (for which we will write $ab$) can distinguish all four states, we have that $p \imp D_{ab}\, p$ and $\neg p \imp D_{ab} \neg p$ are both true. Now consider adding this coalition as a third agent to model $(y)$. Let us call this model $(y3)$. Then the model is proper, it is bisimilar to a two-state model that is like $(x)$ except that the third agent $ab$ can distinguish the $\neg p$ from the $p$ state, let us call that model $(x3)$, and its corresponding simplicial model is $(ii)$ below. Here, agent $ab$ can `inform' agents $a$ and $b$ of the value of $p$. 

Clearly, any simplicial model for $n$ colours/agents can similarly be enriched with additional colours for all subsets of agents to thus obtain a simplicial model for $2^n-1$ agents wherein all distributed knowledge has become explicit. Note that {\em all} epistemic models thus become proper. It is tempting to model this in combinatorial topology with a generalization of simplicial complexes to so-called {\em simplicial sets}, as suggested in \cite[Section 3.7]{ledent:2019}. Just as a complex is a set of subsets of a set of vertices, a simplicial set is a multi-set of its subsets, with the same restriction that it is closed under subsets (see e.g.\ \cite{hilton_wylie_1960}, where a simplicial set is called a \emph{pseudocomplex}). In $(iii)$ and $(iv)$ we tentatively depict $(i)$ and $(ii)$ as simplicial sets.

\begin{center}
\scalebox{.8}{
\begin{tikzpicture}
\node (a02) at (-8,0) {$a01$};
\node (b02) at (-6,0) {$b01$};
\draw[-] (a02) -- (b02);
\node (a01) at (0,0) {$a01$};
\node (b01) at (2,0) {$b01$};
\draw[-,bend right=15] (a01) to (b01);
\draw[-,bend left=15] (a01) to (b01);
\node (a0) at (-4,0) {$a01$};
\node (b0) at (-2,0) {$b01$};
\node (ab0) at (-3,-1.71) {$ab0$};
\node (ab1) at (-3,1.71) {$ab1$};
\draw[-] (a0) -- (b0);
\draw[-] (a0) -- (ab1);
\draw[-] (a0) -- (ab0);
\draw[-] (b0) -- (ab1);
\draw[-] (b0) -- (ab0);
\node (a03) at (4,0) {$a01$};
\node (b03) at (6,0) {$b01$};
\node (ab03) at (5,-1.71) {$ab0$};
\node (ab13) at (5,1.71) {$ab1$};
\draw[-,bend right=15] (a03) to (b03);
\draw[-,bend left=15] (a03) to (b03);
\draw[-] (a03) -- (ab13);
\draw[-] (a03) -- (ab03);
\draw[-] (b03) -- (ab13);
\draw[-] (b03) -- (ab03);
\node (1) at (-7,-2.5) {$(i)$};
\node (2) at (-3,-2.5) {$(ii)$};
\node (3) at (1,-2.5) {$(iii)$};
\node (4) at (5,-2.5) {$(iv)$};
\end{tikzpicture}
}
\end{center}

\end{example}

\subsection{Novel notions of group knowledge for simplicial complexes} 

Beyond applying the standard notions of group knowledge for epistemic models on simplicial models, we can also consider novel notions of group knowledge for simplicial models, for example, to express dimensional properties of complexes. Consider the following notion in between common knowledge and distributed knowledge.

Let $\mathcal B \subseteq \powerset(A)$. We extend the language with an inductive construct $CB_{\mathcal B} \phi$, for ``it is common distributed knowledge that $\phi$ (with respect to ${\mathcal B}$).'' Given epistemic model $\model = (S,\sim,L)$, let ${\sim^{*\inter}_{\mathcal B}} := (\Union_{B \in \mathcal B} \sim^\inter_B)^*$. This relation interprets operator $CB_{\mathcal B} \phi$ as follows. 
\[ \begin{array}{lcl}
\model,s \models \CD_{\mathcal B} \phi & \text{iff} & \model,t \models \phi \ \text{for all}\ t \in S \text{ with } s \sim^{*\inter}_{\mathcal B} t \\
\end{array} \]
Such a semantics has been proposed in \cite{vanwijk:2015}. But such `common distributed knowledge' modalities can be equally well interpreted on simplicial models:

Let $\C = (C,\chi,\ell)$ be a simplicial model. Define operation $*_{\mathcal B}$ for all $\mathcal B \subseteq \powerset(A)$ inductively as: for all $X,Y,Z \in \FF(C)$, $X \in *_{\mathcal B}(X)$, and if $Y \in *_{\mathcal B}(X)$ and there is $B \in \mathcal B$ with $B \subseteq \chi(Z \inter Y)$, then $Z \in *_{\mathcal B}(X)$. Then:

\[ \begin{array}{lcl}
\C,X \models \CD_{\mathcal B} \phi & \text{iff} & \C,Y \models \phi \text{ for all } Y \in \FF(C) \  \text{such that}\ Y \in *_{\mathcal B}(X)
\end{array} \]
Here, ${\mathcal B}$ can be any set of subsets of $A$. Of interest in combinatorial topology may be subsets of agents of the same dimension, i.e., all subsets of colours/agents of size $m < n$. Let us therefore define $*_m$ as a special case of the above $*_{\mathcal B}$ namely such that: $X \in *_m(X)$, and if $Y \in *_m(X)$ and there is $B \subseteq A$ with $|B|=m+1$ and $B \subseteq \chi(Z \inter Y)$, then $Z \in *_m(X)$.\footnote{The epistemic model equivalent would be: ${\sim^{*\inter}_m} := (\Union_{B \subseteq A}^{|B|=m+1} \sim^\inter_B)^*$.} We then get:
\[ \begin{array}{lcl}
\C,X \models \CD_m \phi & \text{iff} & \C,Y \models \phi \text{ for all } Y \in \FF(C) \  \text{such that}\ Y \in *_m(X)
\end{array} \]
Clearly, we have that $\CD_0 \phi \eq C_A \phi$. An interesting case to reason about simplicial models of dimension $n$ is $\CD_{n-1} \phi$. This describes truth in a manifold (see Section~\ref{section.topologicaltools}): if $\C,X \models \CD_{n-1} \phi$ and $\C$ is a manifold, then $\phi$ should be true on the entire complex. Otherwise, it is true on a certain subcomplex of $\C$, namely on the restriction of $\C = (C,\chi,\ell)$ to the set of vertices $v \in \VV(C)$ containing $X$ that is a manifold. Clearly, for every $m < n$, $\CD_m \phi$ describes truth on a similar subcomplex of $\C$ but where all facets intersect in dimension $m$.

The modalities $\CD_m \phi$ are a mere example. We can also consider modalities describing truth on the boundary of a manifold. We can consider variations of so-called {\em conditional common knowledge} \cite{jfaketal.lcc:2006} targetting features of simplicial complexes. And so on.
 
\begin{example}
Reconsider the simplicial model of Example \ref{example.local}, below on the left, call it $\C$, and consider as well a variant $\C'$ extending it, below on the right. We note that $\C$ is a manifold, whereas $\C'$ is not. However, it is tempting to see $\C'$ as two manifolds intersecting in the vertex named $b0$. Simplicial model $\C'$ represents agent $b$ being uncertain between two complexes with only difference the value of $c$, $p_c$ or $\neg p_c$. With the operator $\CD_1$ we can describe truth in the different regions of $\C'$. We recall that $\CD_0 = C_A$. Some typical truths in $\C$ and $\C'$ are:
\begin{itemize}
\item $\C,X \models C_{abc} p_c$: this is true as $p_c$ is true in facets $X,Y,Z$.
\item $\C',X \not\models C_{abc} p_c$: this is false as, for example, $\C',V \not\models p_c$, and facets $X$ and $V$ are connected via $c \in \chi(X \inter Z)$ and $b \in \chi(Z\inter V)$.
\item $\C',X \models \CD_1 p_c$: this is true, as we only need to consider connectivity by a sequence of facets intersecting in dimension $1$. We note that $|X\inter Y|=1$ and $|Y \inter Z|=1$, and that $p_c$ is true in $X, Y$ and $Z$. Therefore  $\C,X \models \CD_1 p_c$.
\item $\C',V \models \CD_1 \neg p_c$: in the other part of $\C'$, $CD_1$ describes truth in facets $V, U, W$.
\end{itemize}

\begin{center}
\scalebox{.8}{
$\C$: \quad
\begin{tikzpicture}
\node (a0) at (0,0) {$a0$};
\node (b0) at (4,0) {$b0$};
\node (a1) at (3,1.71) {$a1$};
\node (b1) at (1,1.71) {$b1$};
\node (c1) at (2,0) {$c1$};
\node (f1) at (1,.65) {$X$};
\node (f3) at (3,.65) {$Z$};
\node (f4) at (2,1.05) {$Y$};
\draw[-] (a0) -- (c1);
\draw[-] (a0) -- (b1);
\draw[-] (a1) -- (c1);
\draw[-] (a1) -- (b0);
\draw[-] (b0) -- (c1);
\draw[-] (b1) -- (c1);
\draw[-] (b1) -- (a1);
\end{tikzpicture}
\quad\quad
$\C'$: \quad
\begin{tikzpicture}
\node (a0) at (0,0) {$a0$};
\node (b0) at (4,0) {$b0$};
\node (a1) at (3,1.71) {$a1$};
\node (b1) at (1,1.71) {$b1$};
\node (c1) at (2,0) {$c1$};
\node (f1) at (1,.65) {$X$};
\node (f3) at (3,.65) {$Z$};
\node (f4) at (2,1.05) {$Y$};
\draw[-] (a0) -- (c1);
\draw[-] (a0) -- (b1);
\draw[-] (a1) -- (c1);
\draw[-] (a1) -- (b0);
\draw[-] (b0) -- (c1);
\draw[-] (b1) -- (c1);
\draw[-] (b1) -- (a1);
%
\node (ra0) at (8,0) {$a0$};
\node (rb1) at (7,1.71) {$b1$};
\node (ra1) at (5,1.71) {$a1$};
\node (rc1) at (6,0) {$c0$};
\node (rf1) at (5,.65) {$V$};
\node (rf3) at (7,.65) {$W$};
\node (rf4) at (6,1.05) {$U$};
\draw[-] (b0) -- (rc1);
\draw[-] (b0) -- (ra1);
\draw[-] (rb1) -- (rc1);
\draw[-] (rb1) -- (ra0);
\draw[-] (ra0) -- (rc1);
\draw[-] (ra1) -- (rc1);
\draw[-] (ra1) -- (rb1);
\end{tikzpicture}
}
\end{center}
\end{example}

\section{Belief for simplicial complexes} \label{section.belief}

Apart from knowledge, another common notion in epistemic logics is that of \emph{belief}. The received convention is that, unlike knowledge, beliefs may be false, but that otherwise belief has the properties of knowledge, including the property known as \emph{consistency}. The modality for belief is $B_a$. That beliefs may be false is embodied in the failure of the validity $B_a \phi \imp \phi$. Instead, we have the `consistency' axiom $B_a \phi \imp \hat{B}_a\phi$ (that corresponds to seriality of underlying frames). Like knowledge, belief should also satisfy the `introspection' principles $B_a \phi \imp B_a B_a \phi$ and $\neg B_a \phi \imp B_a \neg B_a \phi$.

More precisely, in the inductive language definition we can replace the inductive clause $K_a \phi$, for `agent $a$ knows $\phi$', with the inductive clause $B_a \phi$, for `agent $a$ believes $\phi$' (if we were to have both, a more complex proposal is required).

The validity $K_a \phi \imp \phi$ for knowledge characterizes the reflexivity of the epistemic models. With the weaker $B_a \phi \imp \hat{B}_a\phi$ for belief the relations in epistemic models are no longer equivalence relations. A typical model wherein agent $a$ incorrectly believes $p$ is $\model = (S,R,L)$ with $S = \{s,t\}$, $R = \{(s,t),(t,t)\}$ and $p \in L(t)$, often depicted as $\np \stackrel{a}{\imp} p$ (this visualization uses the convention for belief that if an arrow points to a state, that state implicitly has a reflexive arrow). However, this visualization is not relevant for modelling belief on simplicial models. 

See \cite{hintikka:1962,meyeretal:1995,hvdetal.handbook:2015} for more information on belief.

\paragraph*{Semantics of belief}
We now address the semantics of belief on simplicial models. Let a simplicial model $\C = (C, \chi,\ell)$ be given. For each agent $a \in A$, let $f_a$ be an idempotent function between the $a$-coloured vertices of $C$, called {\em belief function} (a function $f$ is idempotent iff for all $x$, $f(f(x)) = f(x)$). We emphasize that $f_a$ need not preserve variables: it may well be that $p_a \in \ell(v)$ but $p_a \notin \ell(f_a(v))$, or vice versa. 

We recall the semantics of knowledge on simplicial models:
\[ \begin{array}{lcl}
\C,X \models K_a \phi & \text{iff} & \C,X' \models \phi \ \text{for all} \ X' \ \text{with} \ a \in \chi(X\inter X')
\end{array} \]
Somewhat differently but equivalently formulated this is (we recall that $X_a$ is the vertex of facet $X$ coloured $a$):
\[ \begin{array}{lcl}
\C,X \models K_a \phi & \text{iff} & \C,X' \models \phi \ \text{for all} \ X' \ \text{with} \ X_a \in X'
\end{array} \]
The proposed semantics of belief is now as follows. It is surprisingly straightforward.
\[ \begin{array}{lcl}
\C,X \models B_a \phi & \text{iff} & \C,X' \models \phi \ \text{for all} \ X' \ \text{with} \ f_a(X_a) \in X'
\end{array} \]

\begin{proposition}[Properties of belief]
The following are validities of the logic of belief on simplicial complexes.
\begin{enumerate}
\item $B_a \phi \imp \hat{B}_a \phi$ (consistency)
\item $B_a \phi \imp B_a B_a \phi$ (positive introspection)
\item $\hat{B}_a \phi \imp B_a \hat{B}_a \phi$ (negative introspection)
\end{enumerate}
\end{proposition}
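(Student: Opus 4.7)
My plan is to verify each of the three validities by directly unfolding the semantic clause for $B_a$. Two elementary observations will drive the entire argument. First, $v := f_a(X_a)$ is an $a$-coloured vertex of $C$, so because $C$ is a pure chromatic complex it is contained in at least one facet. Second, because vertices of a facet carry distinct colours, any facet $X'$ with $v \in X'$ must satisfy $X'_a = v$, and then by idempotence $f_a(X'_a) = f_a(v) = f_a(f_a(X_a)) = f_a(X_a) = v$. This identity is the workhorse: it collapses the two nested belief modalities appearing in the introspection axioms into one.

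For consistency, I would unfold $\C, X \models B_a \phi$ into the statement that $\C, X' \models \phi$ for every facet $X'$ containing $v$. The first observation above supplies at least one such facet, and any of them witnesses $\hat{B}_a \phi$ at $X$.

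For positive introspection, the plan is to take an arbitrary facet $X'$ with $v \in X'$ and argue $\C, X' \models B_a \phi$. By the second observation, $f_a(X'_a) = v$, so the set of facets $X''$ quantified over by $\C, X' \models B_a \phi$ coincides precisely with the set quantified over by $\C, X \models B_a \phi$. The hypothesis then gives $\C, X'' \models \phi$ for each such $X''$, establishing $\C, X' \models B_a \phi$; since $X'$ was arbitrary, $\C, X \models B_a B_a \phi$ follows. Negative introspection will be treated symmetrically: a witness $Y$ with $v \in Y$ and $\C, Y \models \phi$ for $\hat{B}_a \phi$ at $X$ continues to witness $\hat{B}_a \phi$ at every facet $X'$ with $v \in X'$, again because $f_a(X'_a) = v$.

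I do not anticipate any genuine obstacle; once the colour-uniqueness remark and idempotence of $f_a$ are recorded, each axiom is a one-line unfolding of the definitions. The only subtle point to flag explicitly is the reliance on purity and chromaticity of $C$ in the consistency step, to ensure that a facet containing $v$ actually exists and that the quantifier in the semantic clause is not vacuous.
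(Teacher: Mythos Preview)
Your proposal is correct and takes essentially the same approach as the paper: both arguments reduce consistency to the fact that $f_a(X_a)$ lies in some facet (using totality and purity), and both reduce the introspection axioms to the identity $f_a(X'_a)=f_a(X_a)$ obtained from idempotence together with the uniqueness of the $a$-coloured vertex in a facet. The only cosmetic difference is that the paper proves positive introspection via the contrapositive $\hat{B}_a\hat{B}_a\neg\phi \imp \hat{B}_a\neg\phi$, whereas you argue it directly; the underlying reasoning is identical.
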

\begin{proof}
These properties are easily derived. The requirement that for all $a \in A$ the belief function $f_a$ is idempotent guarantees both introspection properties of belief.
\begin{itemize}
\item Belief is consistent, because $f_a$ is a (total) function and the simplicial complex is pure, so that for any $a$-coloured vertex $v$, $f_a(v)$ must be contained in at least one facet.
\item To show positive introspection we use the contrapositive $\hat{B}_a\hat{B}_a \neg\phi \imp \hat{B}_a \neg\phi$. Let $\C,X \models \hat{B}_a\hat{B}_a \neg\phi$, then there is a facet $X'$ such that $f_a(X_a) \in X'$ and $\C,X' \models \hat{B}_a \neg\phi$. Therefore, there are $X',X''$ such that: $f_a(X_a) \in X'$, $f_a(X'_a)\in X''$, and $\C,X'' \models \neg\phi$. As $f_a$ is idempotent and as $X'$ obviously only contains a single vertex of colour $a$, $f_a(X'_a) = f_a(f_a(X_a)) = f_a(X_a)$. Therefore, by the semantics of belief, $\C,X \models \hat{B}_a \neg\phi$.
\item Negative introspection is shown very similarly. Let $\C,X \models \hat{B}_a\phi$ be given, i.e., there is a facet $X'$ such that $f_a(X_a) \in X'$ and $\C,X' \models \phi$. We need to show that $\C,X \models B_a \hat{B}_a\phi$. In order to show that, let $X''$ be such that $f_a(X_a) \in X''$. Again, as $f_a(f_a(X_a))=f_a(X_a)$ we see that $\C,X'' \models \hat{B}_a\phi$, thus $\C,X \models B_a \hat{B}_a\phi$. 
\end{itemize} \vspace{-.5cm}
\end{proof}
A special case of belief is when agents correctly believe (and thus know) the value of their local state but may only be incorrect about the beliefs or the variables of other agents. This is guaranteed by requiring that for all $p_a \in P_a$, and all vertices $v$, $p_a \in \ell(v)$ iff $p_a \in \ell(f_a(v))$. We call such belief functions {\em locally correct}. It will be obvious that on a simplicial model where belief functions are locally correct, $B_a p_a \imp p_a$ and $B_a \neg p_a \imp \neg p_a$ are valid.

\paragraph*{Simplicial belief on epistemic models?}
The belief notion that we proposed on simplicial models results in rather particular  models in the Kripke model semantics. Take an epistemic model $(S,\sim,L)$ and let for each agent $a \in A$ belief function $f_a$ be an idempotent function between the equivalence classes $[s]_a$ of $a$. Define relation $R_a \subseteq S \times S$ as: $R_ass'$ iff there is a $t \in f([s]_a)$ with $t \sim_a s'$. Then $(S,R,L)$ is a model wherein all $R_a$ are serial, transitive, and Euclidean and thus interpret standard consistent belief  (a.k.a.\ ${\mathcal KD}45$ belief \cite{hvdetal.handbook:2015}). In other words, the agent's beliefs are uniform: the agent still believes to know her local state, but in all indistinguishable global states she considers it possible that the global states are those of another equivalence class, wherein her local state might be different.
\begin{example} \label{example.bigtri2}
Consider again the simplicial model from Example \ref{example.ins} depicted below. \begin{center}
\scalebox{.8}{
\begin{tikzpicture}
\node (a0) at (0,0) {$a0$};
\node (b0) at (4,0) {$b0$};
\node (c0) at (2,3.42) {$c0$};
\node (a1) at (3,1.71) {$a1$};
\node (b1) at (1,1.71) {$b1$};
\node (c1) at (2,0) {$c1$};
\node (f1) at (1,.65) {$F_1$};
\node (f2) at (2,2.36) {$F_2$};
\node (f3) at (3,.65) {$F_3$};
\node (f4) at (2,1.05) {$F_4$};
\draw[-] (a0) -- (c1);
\draw[-] (a0) -- (b1);
\draw[-] (a1) -- (c0);
\draw[-] (a1) -- (c1);
\draw[-] (a1) -- (b0);
\draw[-] (b0) -- (c1);
\draw[-] (b1) -- (c1);
\draw[-] (b1) -- (c0);
\draw[-] (b1) -- (a1);
\end{tikzpicture}
}
\end{center}
We give three examples of idempotent simplicial functions, and the resulting consequences for the beliefs of the agents:
\begin{enumerate}
\item $f_a$ defined as $f_a(a0) = a1$ (and thus $f_a(a1) = a1$).
\item $f_c$ defined as $f_c(c0) = c1$ (and thus $f_c(c1) = c1$).
\item $f'_a$ defined as $f'_a(a1) = a0$ (and thus $f'_a(a0) = a0$), $f'_b(b1) = b0$, and $f'_c(c1) = c0$.
\end{enumerate}
We now can, for example, validate the following statements:
\begin{enumerate}
\item $\C,F_1 \models \neg p_a \et B_a p_a$ \begin{quote} Agent $a$ mistakenly believes that its local state is $p_a$, whereas in fact it is $\neg p_a$. For the verification of the belief component we note that $\C,F_1 \models B_a p_a$ iff $\C,F' \models p_a$ for all $F'$ containing $f_a(a0) = a1$, which are $F_2,F_3,F_4$ that all obviously satisfy $a1$.\end{quote}
$\C,F_1 \models (p_b\et p_c) \et \hat{B}_a (p_b\et p_c)\et \neg B_a (p_b\et p_c)$ \begin{quote} The first conjunct is obvious, the second is because $a1 \in F_4$, the last is because $a1 \in F_2$ and also because $a1 \in F_3$, neither satisfy $p_b\et p_c$. \end{quote}
\item Similarly to the first item, we now have for example that $\C,F_2 \models \neg p_c \et B_c p_c$.
\item $\C, F_1 \models B_a (\neg p_a \et p_b \et p_c) \et B_b (p_a \et \neg p_b \et p_c) \et B_c (p_a \et p_b \et \neg p_c)$ \begin{quote} In $F_1$, the beliefs of the agents are inconsistent, they all think to know what the actual state of affairs (the designated facet) is, but their beliefs are incompatible. In fact, for any facet, under this $f'$ map, their beliefs about the valuation are mistaken. Also, in any facet, all three agents believe that they are certain (all agents only consider one facet possible). \end{quote}
\end{enumerate}
In this example no belief function is locally correct. If we we were to change the value of $p_a$ in vertex $a1$ into false, then $f_a$ and $f'_a$ would be locally correct.
\end{example}

\section{Simplicial action models} \label{section.actionmodels}

Action model logic is an extension of epistemic logic with operators modelling change of information \cite{baltagetal:1998,hvdetal.del:2007}. Although we can, of course, strictly separate syntax and semantics, it is common to see these operators as epistemic models wherein the valuations of atoms have been replaced by formulas (in the logical language), that should be seen as executability preconditions for the states in the epistemic model, that are therefore not called states but actions. Instead of presenting action models (for which we refer to the above references) we will present their simplicial equivalents, the \emph{simplicial action models} introduced in \cite{goubaultetal:2018}. We present them with a minor variation, namely with preconditions for vertices instead of facets, and also including so-called {\em factual change}. We recall that we assume that all simplicial complexes are chromatic pure simplicial complexes. This is also the case for simplicial action models.

\paragraph*{Simplicial action model}
A \emph{simplicial action model} $\C$ is a quadruple $(C,\chi,\pre,\post)$ where $C$ is a simplicial complex, $\chi$ a chromatic map, $\pre: \VV(C) \imp \lang_K(A,P)$ a precondition function assigning to each vertex $v \in \VV(C)$ a formula, and $\post: \VV(C) \imp P \imp \lang_K(A,P)$ a postcondition function assigning to each vertex and to each local variable for the colour of that vertex a formula.\footnote{In logics that contains modalities for such action models, we need to require that $\VV(C)$ is finite and that $\post$ is a partial function defined for a finite subset $P' \subseteq P$ only.}

If $\chi(v)=a$, $\pre(v) = \phi$, and $\post(v)(p) = \psi$ this should be seen as agent $a$ in $v$ executing the program ``If $\phi$ then $p := \psi$.'' Such assignments are simultaneous for all local variables.

\paragraph{Execution of a simplicial action model}
Let a simplicial model $\C = (C,\chi,\ell)$ and a simplicial action model $\C' = (C',\chi',\pre',\post')$ be given. The \emph{restricted product} of $\C$ and $\C'$ is the simplicial model $\C \otimes \C' = (C'',\chi'',\ell'')$ where: \begin{itemize} \item $C''$ is the modal product (Cartesian product) $C \times C'$ restricted to all facets $X \times X'$ such that $\C,X \models \Et_{v' \in X'} \pre'(v')$;\footnote{Using local semantics, instead of $\C,X \models \Et_{v' \in X'} \pre'(v')$ we can require that $\C, v \models \pre'(v')$ for all $v \in X$ and $v' \in X'$ with $\chi(v) = \chi(v')$. This may be more elegant.}
\item for all $(v,v')$ in the domain of $C''$, $\chi''((v,v')) = \chi(v)$ ($= \chi'(v')$);
\item for all $(v,v')$ in the domain of $C''$ and for all $p \in P$, $p \in \ell''((v,v'))$ iff $p \in \ell(v)$.
\end{itemize}

In \cite{goubaultetal:2018}, simplicial action models have preconditions associated to facets. The definitions are interchangeable: in one direction, for any facet $X$, $\pre(X) := \Et_{v \in X} \pre(v)$, as above, and in the other direction, for any vertex $v$, $\pre(v)  := \Vel_{v \in X} \pre(X)$. Defining preconditions on vertices has however some modelling advantages.

In the first place we can require, as a special case, all preconditions to be {\em local}, i.e., any $\pre(v)$ for vertex $v$ of colour $a$ has shape $K_a \phi$ (or, alternatively, any $\pre(v)$ is equivalent to $K_a \pre(v)$ on the simplicial model where the simplicial action model is executed). This would be in accordance with distributed computing methodology, where all actions are sending and receiving actions between agents. It rules out those enacted by the environment feeding new information to the system. Most but not all simplicial action models in \cite{goubaultetal:2018,ledent:2019} are local in this sense. An exception are  those for binary consensus \cite[Ex.\ 3.48]{ledent:2019}. One could even enforce commonly known preconditions, as in \cite{HalpernM17}.

In the second place we can require, as an even more special case, all preconditions to be in the language restricted to the agent colouring that vertex, i..e, any $\pre(v)$ for vertex $v$ of colour $a$ is a formula $\phi \in \lang_K(\{a\},P_a)$. As the value of all atoms $p_a$ is known by $a$, and given the properties of $K_a$, this implies that preconditions are local. Again, most simplicial action models in \cite{goubaultetal:2018,ledent:2019} are even local in that more restricted sense (e.g., for the immediate snapshot). 

A more significant difference from \cite{goubaultetal:2018} is that we also incorporate \emph{factual change} (change of the values of local variables) in our simplicial actions models, just as in dynamic epistemic logic \cite{jfaketal.lcc:2006,hvdetal.world:2008}. We may need factual change to describe snapshot algorithms with write actions, or binary consensus, of which we will give an example.

\begin{example} \label{example.bigtri3}
Once more we consider the simplicial model $\C$ from Example \ref{example.ins}, in which we execute the action representing that $a$ and $b$ do not know the value of $c$. We can think of this as a simplicial action model $\C'$ consisting of two facets intersecting in the edge $\{b,a\}$, with preconditions: $\pre(a') = \neg (K_a p_c \vel K_a \neg p_c)$, $\pre(b') = \neg (K_b p_c \vel K_b \neg p_c)$, $\pre(c'_0) = \neg p_c$, $\pre(c'_1) = p_c$. Below we depict that model and its execution.
\begin{center}
\scalebox{.8}{
\begin{tikzpicture}
\node (a0) at (0,0) {$a0$};
\node (b0) at (4,0) {$b0$};
\node (c0) at (2,3.42) {$c0$};
\node (a1) at (3,1.71) {$a1$};
\node (b1) at (1,1.71) {$b1$};
\node (c1) at (2,0) {$c1$};
\node (f1) at (1,.65) {$F_1$};
\node (f2) at (2,2.36) {$F_2$};
\node (f3) at (3,.65) {$F_3$};
\node (f4) at (2,1.05) {$F_4$};
\node (otimes) at (4.5,1.71) {\LARGE $\otimes$};
\draw[-] (a0) -- (c1);
\draw[-] (a0) -- (b1);
\draw[-] (a1) -- (c0);
\draw[-] (a1) -- (c1);
\draw[-] (a1) -- (b0);
\draw[-] (b0) -- (c1);
\draw[-] (b1) -- (c1);
\draw[-] (b1) -- (c0);
\draw[-] (b1) -- (a1);
\end{tikzpicture}
\quad
\begin{tikzpicture}
\node (c0) at (2,3.42) {$c'_0$};
\node (a1) at (3,1.71) {$a'$};
\node (b1) at (1,1.71) {$b'$};
\node (c1) at (2,0) {$c'_1$};
\node (f2) at (2,2.36) {$F'_1$};
\node (f4) at (2,1.05) {$F'_2$};
\node (is) at (4.5,1.71) {\LARGE $=$};
\draw[-] (a1) -- (c0);
\draw[-] (a1) -- (c1);
\draw[-] (b1) -- (c1);
\draw[-] (b1) -- (c0);
\draw[-] (b1) -- (a1);
\end{tikzpicture}
\quad
\begin{tikzpicture}
\node (c0) at (2,3.42) {$(c0,c'_0)$};
\node (a1) at (3,1.71) {$(a1,a')$};
\node (b1) at (1,1.71) {$(b1,b')$};
\node (c1) at (2,0) {$(c1,c'_1)$};
\draw[-] (a1) -- (c0);
\draw[-] (a1) -- (c1);
\draw[-] (b1) -- (c1);
\draw[-] (b1) -- (c0);
\draw[-] (b1) -- (a1);
\end{tikzpicture}
}
\end{center}
We now have, for example, that after this action $b$ still does not $c$'s value, but knows that $c$ now always knows his value: $C \otimes C', (F_4,F'_2) \models \neg (K_b p_c \vel K_b \neg p_c) \et K_b (K_c p_b \vel K_c \neg p_b)$.\footnote{If we were to incorporate modalities for simplicial action models into the logical language, we can also express propositions such as `before the update $c$ did not know the value of $b$'s variable, but afterwards she knows': $C, F_4 \models \neg (K_c p_b \vel K_c \neg p_b) \et [C',F'_2] (K_c p_b \vel K_c \neg p_b)$.} We have to choose a perspective of a pair of facets in the updated model in order to be able to evaluate any formula. For example, $C \otimes C', (F_4,F'_1) \models \neg p_c$ whereas $C \otimes C', (F_4,F'_2) \models p_c$. 
\end{example}

\begin{example}
As an example of factual change, consider the following action wherein agent $c$ publicly resets the value of her local variable to true, thus removing the initial uncertainty for $a$ and $b$ about that value. The latter is depicted in $\C$ below. 

The simplicial action model $C'$ consists of a single facet, also depicted, with vertices $a,b,c$, as usual reusing colours as names, with $\pre(a) = \pre(b) = \pre(c) = \top$ and $\post(c)(p_c) = \top$. We may assume the postconditions for the other variables to be trivial, i.e., $\post(a)(p_a) = p_a$ and $\post(c)(p_b) = p_b$. The action $c$ can be executed in both vertices $c0$ and $c1$, where we note that in the resulting simplicial model $\C \otimes \C'$ the valuation is such that $p_c \in \ell'(c0,c)$ and $p_c \in \ell'(c1,c)$. Therefore, the two-faceted updated simplicial model is bisimilar to the single-faceted one depicted to the right of it.

Note that a public \emph{assignment} of $p_c$ to $\top$ is different from a public \emph{announcement} that $p_c$ (is true). This would result in the same updated simplicial model restriction, but as a consequence of a restriction to the facet wherein $p_c$ is true.

\begin{center}
\scalebox{.8}{
\begin{tikzpicture}
\node (c0) at (2,3.42) {$c0$};
\node (a1) at (3,1.71) {$a1$};
\node (b1) at (1,1.71) {$b1$};
\node (c1) at (2,0) {$c1$};
\node (otimes) at (4,1.71) {\LARGE $\otimes$};
\draw[-] (a1) -- (c0);
\draw[-] (a1) -- (c1);
\draw[-] (b1) -- (c1);
\draw[-] (b1) -- (c0);
\draw[-] (b1) -- (a1);
\end{tikzpicture}
\quad
\begin{tikzpicture}
\node (c0) at (2,3.42) {$c$};
\node (a1) at (3,1.71) {$a$};
\node (b1) at (1,1.71) {$b$};
\node (c1) at (2,0) {};
\node (is) at (4,1.71) {\LARGE $=$};
\draw[-] (a1) -- (c0);
\draw[-] (b1) -- (c0);
\draw[-] (b1) -- (a1);
\end{tikzpicture}
\quad
\begin{tikzpicture}
\node (c0) at (2,3.42) {$(c0,c)$};
\node (a1) at (3,1.71) {$(a1,a)$};
\node (b1) at (1,1.71) {$(b1,b)$};
\node (c1) at (2,0) {$(c1,c)$};
\node (is) at (4.2,1.51) {\Large $\bisim$};
\draw[-] (a1) -- (c0);
\draw[-] (a1) -- (c1);
\draw[-] (b1) -- (c1);
\draw[-] (b1) -- (c0);
\draw[-] (b1) -- (a1);
\end{tikzpicture}
\quad
\begin{tikzpicture}
\node (c0) at (2,3.42) {$c1$};
\node (a1) at (3,1.71) {$a1$};
\node (b1) at (1,1.71) {$b1$};
\node (c1) at (2,0) {};
\draw[-] (a1) -- (c0);
\draw[-] (b1) -- (c0);
\draw[-] (b1) -- (a1);
\end{tikzpicture}
}
\end{center}
\end{example}


\begin{example}[Simplicial action model for binary consensus] \label{example:bincons}
In binary consensus, agents wish to agree on the value of a binary variable. We may assume that this variable is an atomic proposition, and that each agent has a copy that is its local variable, that they can communicate or manipulate while aiming to obtain consensus. Each agent $a$ thus has a variable $1_a$, where $\neg 1_a$ is $0_a$, in order to reach consensus on $0$ or $1$. We assume that all combinations of values are possible, but also that any kind of knowledge about the values of other agents is possible, thus abstracting from the part of the consensus algorithm obtaining such knowledge from other agents (e.g., possible knowledge gain modelled as subdivisions). This has the advantage that we can focus on the simplicial action model only.

Let there be two agents $a$, $b$ with variables $1_a$, $1_b$. For both agents there are four different actions, corresponding to four vertices. Consider $a$. If $K_a (1_a\et 1_b)$, then she leaves the value unchanged  (aiming for consensus on value $1$), if $K_a (0_a \et 0_b)$, she also leaves the value unchanged (aiming for consensus on value $0$). Otherwise, she considers it possible that her value and that of $b$ might be different. We note that $\neg K_a (1_a\et 1_b)\et\neg K_a (0_a \et 0_b)$ implies $\M_a (1_a \vel 1_b) \et \M_a (0_a \vel 0_b)$. In that case she non-deterministically resets her value to $0$ or to $1$. This is an assignment $1_a := \top$ or $1_a := \bot$ (i.e., reset to value $0$). Agent $a$ can distinguish all these actions, whereas agent $b$ cannot distinguish any of those. The four actions of agent $b$ are analogous, and cannot be distinguished by $a$. 

The simplicial action model is therefore defined as $\C = (C,\chi,\pre,\post)$ where $\VV(C) = \{u_i \mid i = a,b \ \text{and}\ u = w,x,y,z\}$ and $\FF(C) = \{(u_a,v_b) \mid u,v = w,x,y,z\}$, and $\chi(u_i) =  i$ for $i = a,b$ and $u = w,x,y,z$. Then, for $i = a,b$:
\[ \begin{array}{lllll}
&& \pre(w_i) & = & K_i (1_a\et 1_b) \\
\pre(x_i) &=& \pre(y_a) & = & \neg K_i (1_a\et 1_b) \et \neg K_i (0_a \et 0_b) \\
&& \pre(z_i) & = & K_i (0_a\et 0_b)
\end{array}\]
and
\[ \begin{array}{lllll}
\post(w_i)(1_i) & = & \post(z_i)(1_i) & = & 1_i \\
&& \post(x_i)(1_i) & = & \top \\
&& \post(y_i)(1_i) & = & \bot \\
\end{array}\]
The assignment $\post(z_i)(1_i)= 1_i$ means that the value of $1_i$ remains the same: if agent $i$ has a $1$, it remains $1$, and if she has a $0$, it remains $0$. 

As is to be expected, unless $a$ and $b$ have common knowledge that the value is $1$ or common knowledge that the value is $0$, they cannot obtain such common knowledge by executing this simplicial action $\C$: binary consensus is impossible. Even when they both know that the values agree, for example when $K_a (1_a\et 1_b)$ and $K_b (1_a\et 1_b)$, they may still consider it possible that the other does not know that, as when $\M_a \neg K_b (1_a\et 1_b)$, in which case $a$ considers it possible that $b$ has reset its value to $0$, such that consensus would be lost. Although in fact, because $K_a (1_a\et 1_b)$ and $K_b (1_a\et 1_b)$, they both kept their $1$s: there is consensus. And so on, forever.

Validity and agreement are therefore not guaranteed by this simplicial action model. Validity can be obtained for slight variations, for example when agents do not randomly choose a value in case they consider it possible other agents have other values, as here, but choose the value of the majority. Under that policy, then if all agents know all values and there is a majority, validity is guaranteed.

Generalizing the case from $2$ agents to $n$ agents is straightforward. Instead of $2 \cdot 4 = 8$ vertices we get $4n$ vertices: {\em linear} growth. Interestingly, the corresponding action model grows {\em exponentially}, as each action in the action model has a precondition that is a conjunction of $n$ parts, where each conjunct is one of these $4$ options: $4^n$ actions. The distributed modelling is decidedly more elegant.
\end{example}

\section{Conclusion and further research} \label{section.conclusion}

We have presented a survey of epistemic tools and techniques that can be used to model knowledge and change of knowledge on simplicial complexes, which are a representation of high-dimensional discrete topological spaces. Many of these notions bring in novel issues in this topological setting. In future research we wish to apply these tools to model typical distributed computing tasks and algorithms, beyond the situations studied in~\cite{goubaultetal:2019,ledent:2019,goubaultetal:2018}, and to explore further epistemic horizons for these and other applications.

An interesting possibility of interaction emerges with algebraic topology, a very deep and mature area of mathematics. For example, the well-known notion of \emph{consensus} depends on the connectivity of an epistemic frame and the related notion of common knowledge~\cite{halpernmoses:1990}; this is a $1$-dimensional topological property. However, other notions of knowledge appear to be related to higher-dimensional topological properties, and to forms of agreement weaker than consensus~\cite{GoubaultLLR19}.

With the exception of Example~\ref{ex:fig-synch} modelling synchronous computation, we did not investigate impure complexes. Such complexes may well be interpreted in terms of unawareness of local variables as in~\cite{faginetal:1988} (and as unawareness of agents). Special notions of bisimulation would fit impure complexes.

As suggested in \cite{ledent:2019}, distributed knowledge may be employed to describe {\em simplicial sets}, the generalization of simplicial complexes to multi-sets (the \emph{pseudocomplexes} of~\cite{hilton_wylie_1960}, with a similar duality to graphs~\cite{BrachoM1987}). Then, we could have two facets that are indistinguishable to \emph{all} agents, aiming at removing the requirement that the epistemic model be proper. 

It would also be interesting to consider infinite models. In distributed computing, infinitely many states and infinitely many agents (processes) have been considered in e.g.~\cite{AguileraInf04}. In modal logic, many tools and techniques apply just as well to the countably infinite situation.

\bibliographystyle{plain}
\bibliography{biblio2019}

\providecommand{\noopsort}[1]{}
\begin{thebibliography}{10}

\bibitem{AgotnesW17}
T.~{\AA}gotnes and Y.N. W{\'{a}}ng.
\newblock Resolving distributed knowledge.
\newblock {\em Artif. Intell.}, 252:1--21, 2017.

\bibitem{AguileraInf04}
M.~Kawazoe Aguilera.
\newblock A pleasant stroll through the land of infinitely many creatures.
\newblock {\em {SIGACT} News}, 35(2):36--59, 2004.

\bibitem{Alur2002}
R.~Alur, T.~A. Henzinger, and O.~Kupferman.
\newblock Alternating-time temporal logic.
\newblock {\em Journal of the {ACM}}, 49:672--713, 2002.

\bibitem{baieretal:2008}
C.~Baier and J.-P. Katoen.
\newblock {\em Principles of Model Checking}.
\newblock The MIT Press, 2008.

\bibitem{balbianietal:2019}
P.~Balbiani, H.~van Ditmarsch, and S.~Fern\'andez Gonz\'alez.
\newblock Asynchronous announcements.
\newblock {\em CoRR}, abs/1705.03392, 2019.

\bibitem{baltagetal:1998}
A.~Baltag, L.S. Moss, and S.~Solecki.
\newblock The logic of public announcements, common knowledge, and private
  suspicions.
\newblock In {\em Proc.\ of 7th TARK}, pages 43--56. Morgan Kaufmann, 1998.

\bibitem{BanerjeeK07}
M.~Banerjee and Md.A. Khan.
\newblock Propositional logics from rough set theory.
\newblock {\em Trans. Rough Sets}, 6:1--25, 2007.

\bibitem{BiranMZ90}
O.~Biran, S.~Moran, and S.~Zaks.
\newblock A combinatorial characterization of the distributed 1-solvable tasks.
\newblock {\em J. Algorithms}, 11(3):420--440, 1990.

\bibitem{Blackburn00}
P.~Blackburn.
\newblock Representation, reasoning, and relational structures: a hybrid logic
  manifesto.
\newblock {\em Logic Journal of the {IGPL}}, 8(3):339--365, 2000.

\bibitem{blackburnetal:2001}
P.~Blackburn, M.~de~Rijke, and Y.~Venema.
\newblock {\em Modal Logic}.
\newblock Cambridge University Press, Cambridge, 2001.
\newblock Cambridge Tracts in Theoretical Computer Science 53.

\bibitem{BrachoM1987}
J.~Bracho and L.~Montejano.
\newblock The combinatorics of colored triangulations of manifolds.
\newblock {\em Geometriae Dedicata}, 22(3):303--328, Apr 1987.

\bibitem{brandtetal:2016}
F.~Brandt, V.~Conitzer, U.~Endriss, J.~Lang, and A.D. Procaccia, editors.
\newblock {\em Handbook of Computational Social Choice}.
\newblock Cambridge University Press, 2016.

\bibitem{charrieretal:2015}
T.~Charrier and F.~Schwarzentruber.
\newblock Arbitrary public announcement logic with mental programs.
\newblock In {\em Proc.\ of {AAMAS}}, pages 1471--1479. {ACM}, 2015.

\bibitem{chellas:1980}
B.F. Chellas.
\newblock {\em Modal Logic: An Introduction}.
\newblock Cambridge University Press, 1980.

\bibitem{DabrowskiMP96}
A.~Dabrowski, L.S. Moss, and R.~Parikh.
\newblock Topological reasoning and the logic of knowledge.
\newblock {\em Ann. Pure Appl. Logic}, 78(1-3):73--110, 1996.

\bibitem{degremontetal:2011}
C.~Degremont, B.~L{\"o}we, and A.~Witzel.
\newblock The synchronicity of dynamic epistemic logic.
\newblock In {\em Proc.\ of 13th TARK}, pages 145--152. ACM, 2011.

\bibitem{dixonetal.handbook:2015}
C.~Dixon, C.~Nalon, and R.~Ramanujam.
\newblock Knowledge and time.
\newblock In H.~van Ditmarsch, J.Y. Halpern, W.~van~der Hoek, and B.~Kooi,
  editors, {\em Handbook of epistemic logic}, pages 205--259. College
  Publications, 2015.

\bibitem{EHbook2010}
H.~Edelsbrunner and J.~Harer.
\newblock {\em Computational Topology - an Introduction}.
\newblock American Mathematical Society, 2010.

\bibitem{faginetal:1988}
R.~Fagin and J.Y. Halpern.
\newblock Belief, awareness, and limited reasoning.
\newblock {\em Artificial Intelligence}, 34(1):39--76, 1988.

\bibitem{faginetal:1995}
R.~Fagin, J.Y. Halpern, Y.~Moses, and M.Y. Vardi.
\newblock {\em Reasoning about Knowledge}.
\newblock MIT Press, 1995.

\bibitem{FischerLP85}
M.J. Fischer, N.A. Lynch, and M.~Paterson.
\newblock Impossibility of distributed consensus with one faulty process.
\newblock {\em J. {ACM}}, 32(2):374--382, 1985.

\bibitem{FraigniaudRT13}
P.~Fraigniaud, S.~Rajsbaum, and C.~Travers.
\newblock Locality and checkability in wait-free computing.
\newblock {\em Distributed Computing}, 26(4):223--242, 2013.

\bibitem{friedell:1969}
M.~Friedell.
\newblock On the structure of shared awareness.
\newblock {\em Behavioral Science}, 14:28--39, 1969.

\bibitem{GoubaultLLR19}
E.~Goubault, M.~Lazic, J.~Ledent, and S.~Rajsbaum.
\newblock Wait-free solvability of equality negation tasks.
\newblock In {\em Proc.\ of 33rd {DISC}}, pages 21:1--21:16, 2019.

\bibitem{goubaultetal:2019}
E.~Goubault, M.~Lazic, J.~Ledent, and S.~Rajsbaum.
\newblock A dynamic epistemic logic analysis of the equality negation task.
\newblock In A.~Baltag and L.S. Barbosa, editors, {\em Dynamic Logic. New
  Trends and Applications. Proc.\ of 2nd {DaL\'i}}, LNCS 12005, 2020.

\bibitem{goubaultetal:2018}
E.~Goubault, J.~Ledent, and S.~Rajsbaum.
\newblock A simplicial complex model for dynamic epistemic logic to study
  distributed task computability.
\newblock In {\em Proc.\ of 9th {GandALF}}, volume 277 of {\em {EPTCS}}, pages
  73--87, 2018.

\bibitem{halpernmoses:1990}
J.Y. Halpern and Y.~Moses.
\newblock Knowledge and common knowledge in a distributed environment.
\newblock {\em Journal of the {ACM}}, 37(3):549--587, 1990.

\bibitem{HalpernM17}
J.Y. Halpern and Y.~Moses.
\newblock Characterizing solution concepts in terms of common knowledge of
  rationality.
\newblock {\em Int. J. Game Theory}, 46(2):457--473, 2017.

\bibitem{hareletal:2000}
D.~Harel, D.~Kozen, and J.~Tiuryn.
\newblock {\em Dynamic Logic}.
\newblock MIT Press, Cambridge MA, 2000.
\newblock Foundations of Computing Series.

\bibitem{Hatcher}
A.~Hatcher.
\newblock {\em {Algebraic topology}}.
\newblock Cambridge Univ. Press, Cambridge, 2002.

\bibitem{Hayek_AER45}
F.~Hayek.
\newblock The use of knowledge in society.
\newblock {\em American Economic Review}, 35:519--530, 1945.

\bibitem{herlihyetal:2013}
M.~Herlihy, D.~Kozlov, and S.~Rajsbaum.
\newblock {\em Distributed Computing Through Combinatorial Topology}.
\newblock Morgan Kaufmann, San Francisco, CA, USA, 2013.

\bibitem{HerlihyS93}
M.~Herlihy and N.~Shavit.
\newblock The asynchronous computability theorem for t-resilient tasks.
\newblock In {\em Proc. 25th Annual {ACM} Symposium on Theory of Computing},
  pages 111--120, 1993.

\bibitem{HS99}
M.~Herlihy and N.~Shavit.
\newblock The topological structure of asynchronous computability.
\newblock {\em J. ACM}, 46(6):858–923, November 1999.

\bibitem{hilton_wylie_1960}
P.~J. Hilton and S.~Wylie.
\newblock {\em Homology Theory: An Introduction to Algebraic Topology}.
\newblock Cambridge University Press, 1960.

\bibitem{hintikka:1962}
J.~Hintikka.
\newblock {\em Knowledge and Belief}.
\newblock Cornell University Press, 1962.

\bibitem{Knight13}
S.~Knight.
\newblock {\em The Epistemic View of Concurrency Theory}.
\newblock PhD thesis, {\'{E}}cole Polytechnique, Palaiseau, France, 2013.

\bibitem{KnightMS19}
S.~Knight, B.~Maubert, and F.~Schwarzentruber.
\newblock Reasoning about knowledge and messages in asynchronous multi-agent
  systems.
\newblock {\em Mathematical Structures in Computer Science}, 29(1):127--168,
  2019.

\bibitem{kozlov}
D.~Kozlov.
\newblock {\em {Combinatorial Algebraic Topology}}.
\newblock Springer, Berlin, Heidelberg, 2008.

\bibitem{lamport:1978}
L.~Lamport.
\newblock Time, clocks, and the ordering of events in a distributed system.
\newblock {\em Communications of the {ACM}}, 21(7):558--565, 1978.

\bibitem{ledent:2019}
J.~Ledent.
\newblock {\em Geometric semantics for asynchronous computability}.
\newblock PhD thesis, {\'{E}}cole Polytechnique, Palaiseau, France, 2019.

\bibitem{lewis:1969}
D.K. Lewis.
\newblock {\em Convention, a Philosophical Study}.
\newblock Harvard University Press, 1969.

\bibitem{luoietal:1987}
M.C. Loui and H.H. Abu-Amara.
\newblock Memory requirements for agreement among unreliable asynchronous
  processes.
\newblock {\em Advances in Computing research}, 4:163--183, 1987.

\bibitem{meyeretal:1995}
J.-J.Ch. Meyer and W.~van~der Hoek.
\newblock {\em Epistemic Logic for AI and Computer Science}.
\newblock Cambridge University Press, 1995.
\newblock {C}ambridge Tracts in Theoretical Computer Science 41.

\bibitem{osborneetal:1994}
M.J. Osborne and A.~Rubinstein.
\newblock {\em A Course in Game Theory}.
\newblock MIT Press, 1994.

\bibitem{aybuke.phd:2017}
A.~{\"{O}}zg{\"{u}}n.
\newblock {\em Evidence in Epistemic Logic: A Topological Perspective}.
\newblock PhD thesis, University of Lorraine \& University of Amsterdam, 2017.

\bibitem{pacuit:2017}
E.~Pacuit.
\newblock {\em Neighborhood Semantics for Modal Logic}.
\newblock Springer, 2017.

\bibitem{panangadenetal:1992}
P.~Panangaden and K.~Taylor.
\newblock Concurrent common knowledge: Defining agreement for asynchronous
  systems.
\newblock {\em Distributed Computing}, 6:73--93, 1992.

\bibitem{parikhetal:2007}
R.~Parikh, L.S. Moss, and C.~Steinsvold.
\newblock Topology and epistemic logic.
\newblock In M.~Aiello, I.~Pratt-Hartmann, and J.~van Benthem, editors, {\em
  Handbook of Spatial Logics}, pages 299--341. Springer Verlag, 2007.

\bibitem{peleg:1987}
D.~Peleg.
\newblock Concurrent dynamic logic.
\newblock {\em Journal of the ACM}, 34(2):450--479, 1987.

\bibitem{Pnueli77}
A.~Pnueli.
\newblock The temporal logic of programs.
\newblock In {\em Proc.\ of 18th {FOCS}}, pages 46--57. {IEEE} Computer
  Society, 1977.

\bibitem{Porter}
T.~Porter.
\newblock Geometric aspects of multiagent systems.
\newblock {\em Electronic Notes in Theoretical Computer Science}, 81:73--98,
  2002.

\bibitem{Roelofsen07}
F.~Roelofsen.
\newblock Distributed knowledge.
\newblock {\em Journal of Applied Non-Classical Logics}, 17(2):255--273, 2007.

\bibitem{coveringRotman}
J.~Rotman.
\newblock Covering complexes with applications to algebra.
\newblock {\em The Rocky Mountain Journal of Mathematics}, 3(4):641--674, 1973.

\bibitem{sangiorgi:2010}
D.~Sangiorgi.
\newblock {\em An introduction to bisimulation and coinduction}.
\newblock Cambridge University Press, 2011.

\bibitem{jfak.odds:1998}
J.~van Benthem.
\newblock Dynamic odds and ends.
\newblock Technical report, University of Amsterdam, 1998.
\newblock {ILLC} Research Report ML-1998-08.

\bibitem{openMinds}
J.~van Benthem.
\newblock {\em Modal Logic for Open Minds}.
\newblock Center for the Study of Language and Information, Stanford, CA, 2010.

\bibitem{EGS18}
J.~van Benthem, J.~van Eijck, M.~Gattinger, and K.~Su.
\newblock Symbolic model checking for dynamic epistemic logic - {S5} and
  beyond.
\newblock {\em J. Log. Comput.}, 28(2):367--402, 2018.

\bibitem{jfaketal.lcc:2006}
J.~van Benthem, J.~van Eijck, and B.~Kooi.
\newblock Logics of communication and change.
\newblock {\em Information and Computation}, 204(11):1620--1662, 2006.

\bibitem{hvdetal.jlc:2014}
H.~van Ditmarsch, D.~Fern{\'{a}}ndez{-}Duque, and W.~van~der Hoek.
\newblock On the definability of simulation and bisimulation in epistemic
  logic.
\newblock {\em J. Log. Comput.}, 24(6):1209--1227, 2014.

\bibitem{hvdetal.ieee:2009}
H.~van Ditmarsch and T.~French.
\newblock Awareness and forgetting of facts and agents.
\newblock In P.~Boldi, G.~Vizzari, G.~Pasi, and R.~Baeza-Yates, editors, {\em
  Proc.\ of WI-IAT Workshops 2009}, pages 478--483. IEEE Press, 2009.

\bibitem{hvdetal.handbook:2015}
H.~van Ditmarsch, J.Y. Halpern, W.~van~der Hoek, and B.~Kooi, editors.
\newblock {\em Handbook of epistemic logic}. College Publications, 2015.

\bibitem{hvdetal.world:2008}
H.~van Ditmarsch and B.~Kooi.
\newblock Semantic results for ontic and epistemic change.
\newblock In {\em Proc.\ of 7th LOFT}, Texts in Logic and Games 3, pages
  87--117. Amsterdam University Press, 2008.

\bibitem{hvdetal.del:2007}
H.~van Ditmarsch, W.~van~der Hoek, and B.~Kooi.
\newblock {\em Dynamic Epistemic Logic}, volume 337 of {\em Synthese Library}.
\newblock Springer, 2008.

\bibitem{hvdetal.ckcb:2009}
H.~van Ditmarsch, J.~van Eijck, and R.~Verbrugge.
\newblock Common knowledge and common belief.
\newblock In J.~van Eijck and R.~Verbrugge, editors, {\em Discourses on Social
  Software}, pages 99--122. Amsterdam University Press, 2009.

\bibitem{glabbeek:1990}
R.J. van Glabbeek.
\newblock {\em Comparative concurrency semantics and refinement of actions}.
\newblock PhD thesis, Vrije Universiteit Amsterdam, 1990.

\bibitem{vanwijk:2015}
S.~van Wijk.
\newblock Coalitions in epistemic planning.
\newblock Technical report, University of Amsterdam, 2015.
\newblock ILLC report MoL-2015-26 (MSc thesis).

\end{thebibliography}

\end{document}